\newtheorem{theorem}{Theorem}[section]
\newtheorem{lemma}[theorem]{Lemma}
\newtheorem{corollary}[theorem]{Corollary}
\newtheorem{assumption}[theorem]{Assumption}
\newtheorem{definition}[theorem]{Definition}
\newcommand{\CI}{\mathrm{CI}}
\newcommand{\E}{\ensuremath{\mathbb{E}}}
\newcommand{\Var}{\ensuremath{\mathbb{V}}}
\newcommand{\noise}{\ensuremath{\text{noise}}}
\newcommand{\adj}{\ensuremath{\text{adj}}}
\newcommand{\new}{\ensuremath{\text{new}}}
\newcommand{\R}{\ensuremath{\mathbb{R}}}
\newcommand{\mcS}{\ensuremath{{\cal S}}}
\newcommand{\mcT}{\ensuremath{{\cal T}}}
\newcommand{\mcF}{\ensuremath{{\cal F}}}
\newcommand{\mcD}{\ensuremath{{\cal D}}}
\newcommand{\indep}{\perp \!\!\! \perp}
\newcommand{\fdim}{d}
\newcommand{\mean}{\ensuremath{\text{mean}}}
\newcommand{\Cov}{\ensuremath{\mathtt{Cov}}}
\icmltitlerunning{Estimating the Long-Term Effects of Novel Treatments}
\begin{document}

\icmltitle{Estimating the Long-Term Effects of Novel Treatments:\\ The Dynamically Adjusted Surrogate Index}

\icmlsetsymbol{equal}{*}

\begin{icmlauthorlist}
\icmlauthor{Keith Battocchi}{msr}
\icmlauthor{Eleanor Dillon}{msr}
\icmlauthor{Maggie Hei}{msr}
\icmlauthor{Greg Lewis}{msr}
\icmlauthor{Miruna Oprescu}{msr}
\icmlauthor{Vasilis Syrgkanis}{msr}
\end{icmlauthorlist}

\icmlaffiliation{msr}{Microsoft Research, New England}

\icmlcorrespondingauthor{Vasilis Syrgkanis}{vasy@microsoft.com}

\icmlkeywords{long-term effects, dynamic treatment effects, surrogates, high-dimensional, double machine learning}

\vskip 0.3in

\printAffiliationsAndNotice{}  %

\begin{abstract}
Policy makers typically face the problem of wanting to estimate the long-term effects of novel treatments, while only having historical data of older treatment options. We assume access to a long-term dataset where only past treatments were administered and a short-term dataset where novel treatments have been administered. We propose a surrogate based approach where we assume that the long-term effect is channeled through a multitude of available short-term proxies. Our work combines three major recent techniques in the causal machine learning literature: surrogate indices, dynamic treatment effect estimation and double machine learning, in a unified pipeline. We show that our method is consistent and provides root-n asymptotically normal estimates under a Markovian assumption on the data and the observational policy. We use a data-set from a major corporation that includes customer investments over a three year period to create a semi-synthetic data distribution where the major qualitative properties of the real dataset are preserved. We evaluate the performance of our method and discuss practical challenges of deploying our formal methodology and how to address them.
\end{abstract}

\section{Introduction}

Businesses frequently invent new ways of interacting with their customers. Marketing departments frequently devise new marketing campaigns. Pharmaceutical companies typically roll out trials of new drugs. In a multitude of domains, policy makers want to understand the long-term effects of novel treatments, recently deployed, while only having access to short-term data following their deployment. Such policy makers only have access to long-term historical data where other treatments had been deployed.

We propose an estimation methodology that leverages historical data to derive estimates of the long-term treatment effects of novel treatments. The seminal paper of \citet{athey2020estimating} proposes a surrogate index methodology as one solution to this problem. The fundamental assumption of the surrogate index is that there exist short-term proxies that are observed in the short-term dataset and that causal effects on long-term outcomes have to be primarily channeled through these short-term signals. In other words, the treatment has a long-term causal effect, if and only if it has an effect on a variety of short terms signals. This allows one to use the historical long-term data set to learn a mapping from short-term signals to a projected long-term reward --- referred to as the surrogate index --- and subsequently, estimate the causal effect of novel treatments on the surrogate index.

But often historical treatment policies are \emph{dynamic}: treatments are assigned repeatedly, and their assignments depend on past treatments and short-term outcomes. This can easily break the assumptions needed for the surrogate index to work and introduce bias.
For example, suppose that a firm offers multiple investments to a particular customer in the historical data and these investments are auto-correlated, i.e. if a customer receives an investment this month, then they will receive an investment with high probability in one of the subsequent months. These future investments can substantially increase the long-term outcome of interest, and this increase will be attributed to the short-term proxies.  The surrogate index thus formed will tend to over-predict long-run outcomes, and so when it is used to measure the treatment effect of some new treatment in the short-run data, the estimated treatment effect will be bigger (in absolute magnitude) than the truth.  

The main methodological innovation of this paper is to suggest using the dynamic treatment effect analysis of \cite{lewis2020doubledebiased} on the historical data in order to create an unbiased dynamically adjusted surrogate index. Our dynamically adjusted surrogate index takes the interpretation of the projected long-term reward in the absence of any future treatments. Applying this dynamically adjusted surrogate model to the short-term dataset leads to unbiased causal effect estimates of the long-term effects of the novel treatments.

A second contribution of the paper is to generalize the causal analysis step that uses the experimental sample to allow for multiple continuous treatments rather than a single binary treatment. 
We do this by proposing a new estimator for this expanded surrogate approach which allows for the construction of valid confidence intervals, even when using flexible machine learning models at both stages of the estimation to deal with high-dimensional data.
In short, we show how one can combine three recently developed techniques, i.e. i) the surrogate index approach of \cite{athey2020estimating}, ii) the double machine learning approach of \cite{chernozhukov2018} and iii) the dynamic treatment effect estimation approach of \cite{lewis2020doubledebiased}, in a single data analysis pipeline to estimate treatment effects in the presence of dynamic treatment policies.

Our work lies in the broader field of estimating causal effects with machine learning and Neyman orthogonality \cite{Neyman:1979,robinson:88,Ai2003,Chernozhukov2016locally,chernozhukov2018double}. Moreover, it relates to the work on machine learning estimation of treatment effects in the dynamic treatment regime \cite{nie2019learning,Thomas2016,petersen2014targeted,kallus2019efficiently,kallus2019double,lewis2020doubledebiased,bodoryevaluating,singh2020kernel} and on structural nested models in biostatistics \cite{robins1986new,robins1992g,robins1994correcting,robins1997toward,robins2000marginal,lok2012impact,vansteelandt2014structural,Vansteelandt2016}. Finally, it relates to the surrogacy literature in causal inference
\cite{prentice1989surrogate,begg2000use,frangakis2002principal,freedman1992statistical}. 
Our work builds on insights in these works and proposes the first complete method that combines all three lines of work, so as to estimate the long-term effect of novel treatments from observational data that stem from a dynamic observational policy and in a manner that allows for high-dimensionality.

\section{Problem and Methodology}
Though our methodology applies in many domains, for concreteness we consider the running example of a firm making investments in its customers in order to increase subsequent purchases by those customers.

A firm has a number of distinct investments/treatments $T_1, T_2 \ldots T_k$ it offers to its customers. At each period $t$ (e.g. the period of a month), a vector of treatments $T_{i,t}=(T_{i, t,1}, \ldots, T_{i,t,k})\in \R^k$, is applied to each customer $i$. We also observe a vector of $p$ characteristics $X_{i, t}=(X_{i, t, 1}, \ldots, X_{i,t,p})\in \R^p$, some of which are constant within customer (e.g. industry) and some of which vary over time and customer (e.g. last month's revenue). We observe the outcome of interest $Y_{i, t}\in \R$ (e.g. monthly revenue).

We are interested in identifying the average effect of each treatment at some period $t$, on the cumulative outcome in the subsequent $M$ periods, i.e. $\bar{Y}_{i, t} = \sum_{\kappa=0}^{M} Y_{i,t + \kappa}$. 
We assume that for most customers and periods we do not observe the subsequent $M$ periods following a treatment.
This makes direct inference from the historical data impractical.
Instead, we assume that 
for every customer $i$ and period $t$, we have access to a vector of $d$ short-term proxies/surrogates, $S_{i, t}=(S_{i, t, 1}, \ldots, S_{i, t, d})$. In practice, this can for instance include the next few months of revenue and other measures that are indicative of a customer's trajectory.

Given these surrogate measures, \citet{athey2020estimating} propose the following estimation strategy:

\begin{enumerate}
    \item Begin with a long-term \emph{observational data set} that, for each period and customer, consists of customer characteristics $X_{i,t}$, customer surrogates for growth $S_{i,t+1}$ and realized $M$-period outcomes $\bar{Y}_{i,t}$. Notably, this data-set does not need to contain the treatments whose causal effect we want to measure and hence can use a longer span of historical data even if treatments are only measured recently. Based on this data-set we construct a model that tries to predict the target outcome $\bar{Y}_{i,t}$ from the surrogates $S_{i,t}$ and customer characteristics. Mathematically, this model $\hat{g}(S_{i,t}, X_{i,t})$ estimates the conditional expectation:
    \begin{equation}
    g_0(S_{i,t}, X_{i,t}):=\E[\bar{Y}_{i,t} \mid S_{i,t}, X_{i,t}].  \tag{surr. model}
    \end{equation}
    and corresponds to a simple regression task. Any machine learning estimation algorithm can be used to construct $\hat{g}$. 
    \item Next, use a second short-term, \emph{experimental data set} that includes customer features, $X_{i,t}$, surrogates, $S_{i,t}$, and treatments $T_{i,t}$.\footnote{We note that the set of customers does not need to be the same in the two data-sets.} This data set is restricted to the period where we can measure treatments, but only requires a few months of leading surrogates rather than $M$ periods of leading outcomes. For any such sample we calculate the surrogate index:
    \begin{equation}
    I_{i,t} := \hat{g}(S_{i,t}, X_{i,t}), \tag{predicted surrogate index}    
    \end{equation}
    which is the predicted long-term outcome from the observed short-term proxies. This predicted index becomes the outcome in the final causal model.
\end{enumerate}

This standard approach will be consistent for the treatment effects under two assumptions. The first is that the relationship between the surrogate variables and the target outcome remains unchanged over our historical sample period and the periods into which we project future outcomes. We follow \citet{athey2020estimating} in maintaining this assumption. The model is able to account for steady growth in a natural way: as the surrogates grow, the predicted outcomes grow too.\footnote{It is not robust to changes in the mapping between the surrogates and outcomes over time, and so we would recommend refitting the model periodically to account for this.}

The second main assumption is that the only causal path from the treatment to the outcome goes through the surrogates. In other words, the treatment has an effect on $M$-period outcomes if and only if it affects these surrogates. This critical ``sufficiency'' assumption typically requires that a few periods of post-treatment data are available to build a relatively good understanding of the $M$-period path. 

Serially correlated treatment policies will often violate this second assumption.
A customer who receives a treatment in period $t$ may be more likely to receive a second treatment over the next year, even controlling for observed features $X_{it}$.  
This positive correlation could be either because a customer that is treated subsequently becomes more salient to the firm, or because the treatment successfully drives surrogate metrics higher, thereby improving the perceived return on subsequent treatments of this customer (an example could be a firm that targets its fast-growing customers).

In either case, the standard estimated surrogate index model $\hat{g}$ will now be biased; we would wrongly estimate that even small increases in our proxy measures forecast strong growth in $M$-period outcomes on average. If we subsequently estimate the causal effect of novel treatments on the surrogate index, we will over-estimate the causal effect on $M$-period outcomes. Essentially, causal effect that should be attributed to the future treatment gets doubly attributed to the current treatment. 
To remove this bias we include a dynamic adjustment step based on \cite{lewis2020doubledebiased} that alters our observational data-set to remove the effect of future treatments from the target outcome before estimating our surrogate index model.
To show this why this method is both necessary and sufficient for removing bias, we frame the problem and solution in approach in a two-period example before formally presenting the general case in section \ref{sec:math}.

\subsection{A two-period example}

Assume that $\bar{Y}_{i,t} = Y_{i,t} + Y_{i, t+1}$ (i.e. the long-term outcome contains two periods). 
We assume a linear Markovian model of how the random variables evolve over time. 
We collapse $X_{i,t}$ and $S_{i,t-1}$ so that the surrogates at each period and the controls in the next period are all denoted by $S_{i,t-1}$. 
In other words, all customer observable characteristics in the next period are candidate surrogates and become controls for the period after next. 
This is without loss of generality. Finally, we focus on a single scalar investment. 

With these simplifications, the structural model that describes how all the random variables evolve can be written in three equations:
\begin{align}
    S_{i,t} =~& A T_{i,t} + B S_{i,t-1} + \epsilon_{i,t} \tag{evolution of controls}\\
    Y_{i,t} =~& \gamma' S_{i,t} + \zeta_{i,t} \tag{outcome equation}\\
    T_{i,t+1} =~& \kappa T_{i,t} + \lambda' S_{i,t} + \eta_{i,t} \tag{treatment policy}
\end{align}
where $A$ is an $p\times 1$ matrix, $B$ is a $p\times p$ matrix, $\gamma, \lambda$ are $p$-dimensional vectors and $\kappa$ is a scalar. The terms $\epsilon_{i,t}, \zeta_{i,t}, \eta_{i,t}$ are exogenous mean-zero independent noise terms. For conciseness we drop the customer index $i$ in the remaining equations.

\paragraph{Target effect estimand}
The effect of treatment $T_{t}$ on the long-term outcome $\bar{Y}_{t}$ can be derived as follows:
\begin{align*}
    Y_{t} =~& \gamma' S_{t} + \noise
          =~\gamma' A T_t + \gamma'B S_{t-1} + \noise\\
    Y_{t+1} =~& \gamma' S_{t+1} + \noise
    =~ \gamma'A T_{t+1} + \gamma'B S_{t} + \noise\\
    =~& \gamma'A T_{t+1} + \gamma' B A T_{t} + \gamma' B^2 S_{t-1} + \noise
\end{align*}
Thus we see that the effect of $T_t$ on $Y_{t+1} + Y_{t}$, keeping all other variables fixed, is:
\begin{equation}
\theta_0 :=\gamma' (B + I)A.
\end{equation}
A one-unit increase in investment $T_t$, leads to a total of $\theta_0$ units more revenue in the next two periods assuming future treatments are held constant.

\paragraph{Surrogate index without dynamic adjustment} 
If we train a surrogate index by regressing $\bar{Y}_{t}=Y_{t} + Y_{t+1}$ on $S_{t}$, as is the method proposed in \citet{athey2020estimating}, then this would result in the following surrogate index:
\begin{align}
\begin{split}
    g_0(S_{t}) =~& \E[\bar{Y}_t \mid S_{t}]
    = \E[Y_{t} \mid S_{t}] + \E[Y_{t+1} \mid S_{t}]\\
    =~& \gamma' S_{t} + \E[\gamma'S_{t+1} \mid S_{t}]\\
    =~& \gamma' S_{t} + \E[\gamma'AT_{t+1} + \gamma'B S_{t} \mid S_{t}]\\
    =~& \gamma' (I + B) S_{t} + \gamma'A \E[T_{t+1}\mid S_{t}]
\end{split}
\end{align}
Subsequently, if we estimate the causal effect of $T_t$ on $g_0(S_{t})$, then this effect would be:
\begin{align}
\begin{split}
    \theta_* =~& \gamma' (I + B) A +  \gamma' A \frac{\E[\E[T_{t+1}\mid S_{t}]\, T_{t}]}{\E[T_t^2]}\\
    =~& \theta_0 +  \gamma' A \frac{\E[\E[T_{t+1}\mid S_{t}]\, T_{t}]}{\E[T_t^2]} \\
    =~& \theta_0 +  \gamma' A \kappa \frac{\E[\E[T_{t}\mid S_{t}]\, T_{t}]}{\E[T_t^2]} + \gamma'A\, \lambda'A
\end{split}
\end{align}

The standard estimate contains a bias stemming from the fact that investment today can lead to higher investment tomorrow, either directly (i.e. that $\kappa \geq 0$) or indirectly through the surrogates (i.e. that $\lambda'A \geq 0$). The standard surrogate approach is valid only when  the investment policy is not adaptive, i.e. $\kappa=0$ and $\lambda=0$, since then $T_{t+1}$ would be independent of $S_{t}$. We see here that the bias is larger if the investment policy is highly auto-correlated, which is typical in practice. The two possible channels for bias, through direct auto-correlation or surrogate-dependent treatment policies, can result in substantially biased estimates.

\paragraph{Dynamic adjustment} 
The goal of the dynamic adjustment is to remove the effect of the next-period treatment from the long-term outcome $\bar{Y}_{i, t}$. We achieve this by estimating a separate causal effect of $T_{i,t+1}$ on $Y_{i,t+1}$, controlling for $X_{i,t+1}, S_{i,t}$. Observe that this conditional expectation is equal to:
\begin{equation}
    \E[Y_{t+1} \mid T_{t+1}, S_{t}] = \gamma'A T_{t+1} + \gamma'BS_{t}.
\end{equation}
We then subtract the causal effect, $\alpha_{t+1}$, from $Y_{i, t+1}$ to get a \emph{dynamically adjusted outcome}, $Y_{i, t+1}^{\adj} := Y_{i, t+1} - \alpha_{t+1}' T_{t+1}$. Finally, we can create an \emph{adjusted long-term outcome}, $\bar{Y}_{i, t}^{\adj} := Y_{i, t} + Y_{i, t+1}^{\adj}$.

We then build a new \emph{dynamically adjusted surrogate index}, which is the projected adjusted long-term outcome, conditional on the observed surrogates:
\begin{align}
\begin{split}
    g_0^{\adj}(S_{t}) =~& \E[\bar{Y}_t^{\adj} \mid S_{t}]\\
    =~& \E[Y_{t}\mid S_{t}] + \E[Y_{t+1}^{\adj}\mid S_{t}]\\
    =~& \gamma'S_{t} + \E[Y_{t+1} - \gamma'A T_{t+1} \mid S_{t}]\\
    =~& \gamma'S_{t} + \E[\gamma'BS_{t} \mid S_{t+1}]\\
    =~& \gamma' (I + B) S_{t}
\end{split}
\end{align}
This new index captures the projected $M$-period outcomes as if the customer was offered no future treatments.

When we estimate the effect of the treatment based on this adjusted surrogate index we recover:
\begin{align}
\begin{split}
    \E[g_0^{\adj}(S_{t}) \mid T_{t}, S_{t-1}] =~& \gamma' (I + B) \E[S_{t}\mid T_t, S_{t-1}]\\ 
    =~& \gamma' (I+B) A T_t + \gamma'(I+B) B S_{t-1}
\end{split}
\end{align}
With the dynamic adjustment, the coefficient in front of $T_t$ that we recover is the true causal effect $\theta_0=\gamma'(I+B)A$. \citet{lewis2020doubledebiased} show how this adjustment approach can be extended to many periods, via a recursive peeling process, and also to high-dimensional surrogates and controls via a dynamic double machine learning approach.

\paragraph{Estimating causal effects of new treatments} 
With this adjusted surrogate index in hand, we can also estimate the long-term effect of any other treatment that was introduced more recently. In this example, the stationarity assumption of both proposed surrogate approaches requires that $B$, which governs how the surrogate evolves, and $\gamma$, which governs how surrogates translate to per-period outcomes, do not change between the observational and experimental data-sets. These two parameters govern how surrogates today relate to future outcomes in the absence of any treatment. 

Under such a condition, if we introduce a new treatment $T_t^{\new}$, which has a different effect $A^{\new}$ on the surrogates (and hence on the long-term outcome), then the effect of this treatment on the long-term outcome is $\theta_0^{\new}=\gamma'(I+B)A^{\new}$. This $\theta_0^{\new}$ is exactly the outcome of estimating the causal effect of $T_t^{\new}$ on $g_0^{\adj}(S_{t+1})$ controlling for $S_t$.

\section{Formal Setting}\label{sec:math}

In this section we present the general problem formulation and in the subsequent sections we present our formal main results.
There are a number of innovations beyond the basic strategy presented in the two period example above.
First, we develop a generalization of the doubly robust estimation method in \citet{athey2020estimating} to the case of multiple continuous treatments, under a semi-parametric assumption (c.f. Section~\ref{ssec:ddsurr_noadj}) and in the presence of a dynamic treatment policy in the observational data (c.f. Section~\ref{ssec:ddsurr_wadj}). Second, we make use of orthogonal machine learning techniques \cite{chernozhukov2018} throughout, to allow for a rich set of potential confounders and valid analytic confidence intervals.

We assume that we have access to two sample populations: an experimental population, denoted as $e$, and an observational population, denoted as $o$. A sample from each population consists of a finite horizon time-series $(S_0, T_1, S_1, Y_1, T_2, S_2, Y_2, \ldots, T_M, S_M, Y_M)$. We observe the full $M$-period time series for each sample in the observational population, but we only observe $(S_0, T_1, S_1)$ for each sample from the experimental population. \vsedit{Moreover, the random variables in the two populations could be distributed differently and even have different support (e.g. treatments in the population $e$ could be different from treatments in population $o$).} As in the two period example, we simplify without loss of generality by merging the control variables in period $t$ and the surrogates for period $t-1$. In other words, all next-period control variables serve as potential surrogates and vice versa. We assume that the data obey the Markovian assumptions depicted in the causal graph in Figure~\ref{fig:simplified_cg}.

\begin{figure*}[thb!]
    \centering
   \includegraphics[height=2in]{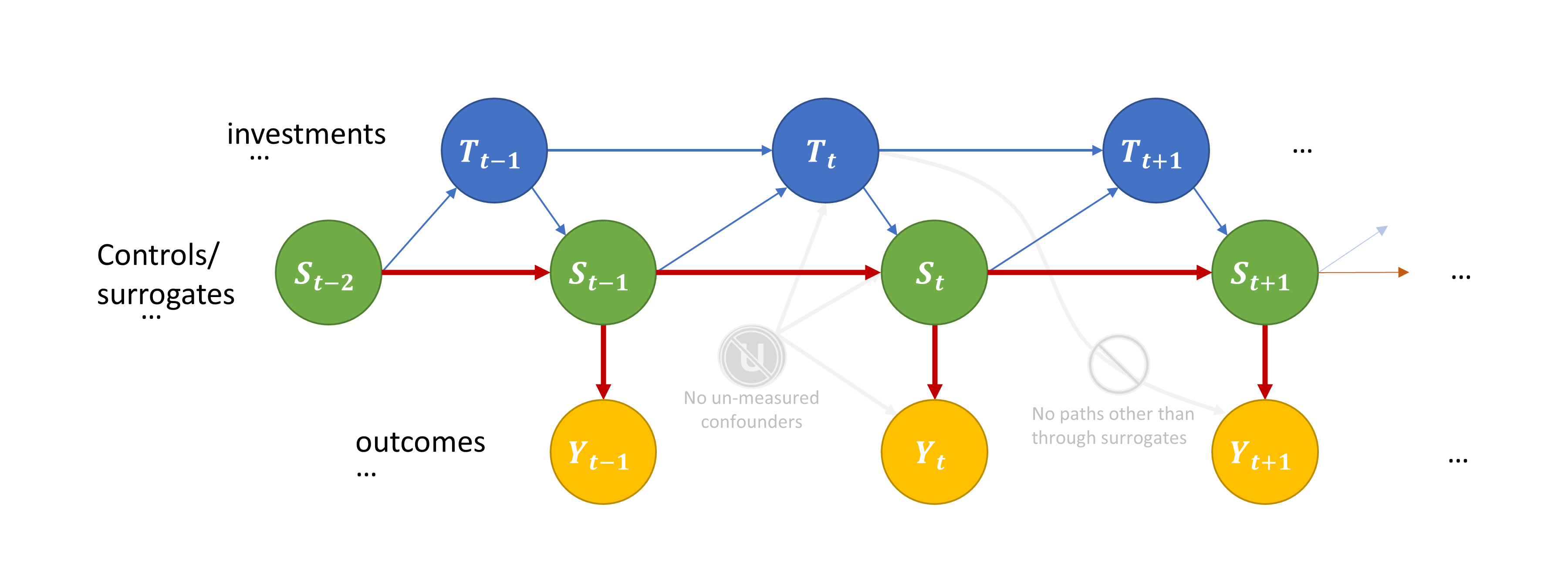}
    \caption{\footnotesize{Causal graph representation of the main assumptions of our causal analysis, which are:  i) there are no paths to future outcomes from past investments that do not go through surrogates, ii) the relationships designated with a red arrow have to remain un-changed between the long-term data-set and the short-term data-set, iii) there are no un-measured confounders at each period, i.e. variables that are not observed and which affect the treatment assignment and the outcome directly.}}
\label{fig:simplified_cg}
\end{figure*}

Our goal is to estimate the causal effect of treatment vector $T_1$ on the long term outcome:
\begin{align}
\bar{Y} := \sum_{t=1}^M Y_t \tag{long-term outcome}
\end{align}
in the experimental/short-term sample, \vsedit{assuming all future treatments take some baseline value. For simplicity we use the $0$ value as the baseline, but this could be replaced by any baseline treatment vector value. In other words, if we set the future treatments $T_{>1}:=(T_1,\ldots, T_M)$ that each sample receives to the baseline level of $0$ and we change the treatment $T_1$ from some value $t_0$ to some other value $t_1$, then what is the change on the long term outcome $\bar{Y}$, i.e.:
\begin{align}\label{eqn:target-do}
    \tau(t_1, t_0) :=~& \E_e[\E[\bar{Y}\mid do(T_1=t_1, T_{>1}=0)] - \E[\bar{Y}\mid do(T_1=t_0, T_{>1}=0]] \tag{target}
\end{align}
}

We present our theoretical results in two steps. In the first setting, we assume that treatments happen only at period $1$ (Section~\ref{ssec:ddsurr_noadj}). This is the setting analyzed in \citet{athey2020estimating}, albeit only for the case of a single binary treatment $T_1$.\footnote{We note that the work of \citet{athey2020estimating} also allowed for estimation of average treatment effects, even in the case when there is arbitrary treatment effect heterogeneity. In this work, we assume that treatment effects are constant. A generalization to the case of arbitrary treatment effect heterogeneity is feasible, but would require the estimation of conditional covariance matrices, which would make the estimation algorithm more brittle and the exposition much more complex.} We then show how this approach can be modified to incorporate a dynamic treatment policy in the observational and experimental sample (Section~\ref{ssec:ddsurr_wadj}).

\paragraph{Notation} Throughout, we will denote with $\E_e[\cdot]$ the expectation conditional on the experimental population and $\E_o[\cdot]$ the expectation conditional on the observational population. Moreover, for any vector-valued function $f$ that takes as input a random variable $Z$, we denote with:
\begin{equation}
    \|f\|_{2,o} = \sqrt{\E_o\left[\|f(Z)\|_2^2\right]}
\end{equation}
and analogously $\|f\|_{2,e}$. We denote with $\E_n[\cdot]$, the empirical expectation over all the samples, i.e. for any random variable $Z$, $\E_n[Z]:=\frac{1}{n}\sum_i Z_i$, and with $\E_{e,n}$ and $\E_{o,n}$ the empirical expectation over the experimental and observational samples correspondingly, i.e. $\E_{e,n}[Z] = \frac{1}{n_e} \sum_{i\in e} Z_i$ and $\E_{o,n}[Z]=\frac{1}{n_o} \sum_{i\in o} Z_i$.

\section{Surrogates without Dynamic Adjustment and Continuous Treatments}\label{ssec:ddsurr_noadj}

For expository purposes, we begin by analyzing the setting where \vsedit{$T_{>1}=0$ almost surely, i.e. treatments occur only in period $1$, but are multi-dimensional and potentially continuous.} We will further assume a \textit{partially linear relationship} between the treatment $T_1$ and the long-term outcome in the experimental sample:
\begin{align}\label{eqn:ass:plr}
    \E_e[\bar{Y}\mid T_1, S_0] =~& \theta_0^\top \phi(T_1, S_0) + b_0(S_0) \tag{PLR}
\end{align}
for some known feature map $\phi(\cdot, \cdot)$, but arbitrary function $b_0(\cdot)$. 

Formally, the \textit{invariance of the surrogate-outcome} relationship requires that that the mean-relationship between the surrogates $S_1$ and the long-term outcome does not change between the observational and the experimental sample:
\begin{align}\label{eqn:ass:ir}
    g_0(S_1) :=~& \E_o[\bar{Y}\mid S_1] = \E_e[\bar{Y} \mid S_1]. \tag{IR}
\end{align}
We denote with $g_0$ the surrogate index model and with $g_0(S_1)$ the surrogate index.

\vsedit{Finally, for the surrogate approach to be valid we need that the long-term outcome $\bar{Y}$ is independent of $S_0, T_1$, conditional on $S_1$. In fact we simply need \emph{conditional mean independence}:
\begin{align}\label{eqn:ass:meanid}
    \bar{Y} \indep_{\mean} (T_1, S_0) \mid S_1 \tag{MeanID}
\end{align}
i.e. $\E[\bar{Y} \mid S_1, T_1, S_0] = \E[\bar{Y}\mid S_1]$. The latter is satisfied under the causal graph assumption of Figure~\ref{fig:simplified_cg}, when $T_{>0}=0$ a.s..}

\vsedit{Under the PLR assumption and the causal graph governing our data, we have by a standard $g$-formula (see e.g. \cite{hernan2010causal}) that:
\begin{equation}
    \tau(t_1, t_0) = \E[\E[\bar{Y}\mid T_1=t_1, S_0] - \E[\bar{Y} \mid T_1=t_0, S_0]] = \theta_0^\top \E[\phi(t_1, S_0) - \phi(t_0, S_0)]
\end{equation}}
To estimate our treatment effect of interest, it suffices to find an estimate $\hat{\theta}$ of the parameter vector $\theta_0$. Subsequently, we can also estimate:
\begin{align}
    \hat{\tau}(t_1, t_0) = \hat{\theta}^\top \E_{e,n}[\phi(t_1, S_0) - \phi(t_0, S_0)]
\end{align}

We establish three valid estimands for $\theta_0$ that follow similar intuition as \citet{athey2020estimating}, adapted to consider linear effects of continuous treatments rather than a single binary treatment. A graphical depiction of the different identification strategies is depicted in Figure~\ref{fig:static_id}
\vsedit{\begin{theorem}[Identification]\label{thm:id}
Denote the residual surrogate index and the residual featurized treatment with:
\begin{align}
    \tilde{g}_0(S_1, S_0) :=~& g_0(S_1)  - \E_e[g_0(S_1)\mid S_0] &
    \Phi_1 :=~& \phi(T_1, S_0) &
    \tilde{\Phi}_1 :=~& \Phi_1 - \E_e[\Phi_1\mid S_0]
\end{align}
Then under assumptions~\eqref{eqn:ass:plr}, \eqref{eqn:ass:ir} and \eqref{eqn:ass:meanid}:
\begin{align*}
    \theta_0 =~& \E_e[\tilde{\Phi}_1\, \tilde{\Phi}_1^\top]^{-1} \E_e[\tilde{\Phi}_1\, \tilde{g}_0(S_1, S_0)] \tag{surrogate index rep.}\\
    \theta_0 =~& \E_e[\tilde{\Phi}_1 \tilde{\Phi}_1^\top]^{-1} 
    \E_o\left[\frac{\Pr(e\mid S_1)}{\Pr(o\mid S_1)} \frac{\Pr(o)}{\Pr(e)} \E_e[\tilde{\Phi}_1\mid S_{1}]\, \bar{Y} \right] \tag{surrogate score rep.}\\
    \theta_0 =~& \E_e[\tilde{\Phi}_1 \tilde{\Phi}_1^\top]^{-1} \bigg(\E_e[\tilde{\Phi}_1\, \tilde{g}_0(S_1, S_0)] + \E_o\bigg[\frac{\Pr(e\mid S_1)}{\Pr(o\mid S_1)} \frac{\Pr(o)}{\Pr(e)}\, \E_e[\tilde{\Phi}_1\mid S_{1}] (\bar{Y} - g_0(S_1))\bigg]\bigg) \tag{orthogonal rep.}
\end{align*}
\end{theorem}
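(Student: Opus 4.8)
The plan is to establish the surrogate index representation directly from the three assumptions and then derive the other two from it by a change of measure between the observational and experimental populations together with repeated use of the tower property; throughout I assume the implicit non-degeneracy condition that $\E_e[\tilde\Phi_1\tilde\Phi_1^\top]$ is invertible and the overlap condition that the conditional law of $S_1$ under $e$ is absolutely continuous with respect to its law under $o$ (so that $\Pr(o\mid S_1)>0$ a.s. and the density ratio below is well defined). First I would strip the nuisance $b_0$ out of \eqref{eqn:ass:plr}: taking $\E_e[\cdot\mid S_0]$ and subtracting, the $S_0$-measurable term $b_0(S_0)$ cancels and, with $\Phi_1=\phi(T_1,S_0)$, one gets $\E_e[\bar Y\mid T_1,S_0]-\E_e[\bar Y\mid S_0]=\theta_0^\top\tilde\Phi_1$. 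Next I would replace $\bar Y$ by the surrogate index: by \eqref{eqn:ass:meanid}, $\E_e[\bar Y\mid S_1,T_1,S_0]=\E_e[\bar Y\mid S_1]$, and by \eqref{eqn:ass:ir} this equals $g_0(S_1)$, so the tower property gives $\E_e[\bar Y\mid T_1,S_0]=\E_e[g_0(S_1)\mid T_1,S_0]$ and $\E_e[\bar Y\mid S_0]=\E_e[g_0(S_1)\mid S_0]$; hence $\theta_0^\top\tilde\Phi_1=\E_e[\tilde g_0(S_1,S_0)\mid T_1,S_0]$. Left-multiplying by $\tilde\Phi_1$, taking $\E_e$, and using that $\tilde\Phi_1$ is $(T_1,S_0)$-measurable yields $\E_e[\tilde\Phi_1\tilde\Phi_1^\top]\theta_0=\E_e[\tilde\Phi_1\tilde g_0(S_1,S_0)]$, which is the surrogate index representation after inversion.

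For the surrogate score representation, I would start from $\E_e[\tilde\Phi_1\tilde g_0(S_1,S_0)]$ and note that the $\E_e[g_0(S_1)\mid S_0]$ piece of $\tilde g_0$ integrates against $\E_e[\tilde\Phi_1\mid S_0]=0$, so $\E_e[\tilde\Phi_1\tilde g_0(S_1,S_0)]=\E_e[\tilde\Phi_1 g_0(S_1)]=\E_e[\E_e[\tilde\Phi_1\mid S_1]\,g_0(S_1)]$ by the tower property. Then I would change measure to the observational population: Bayes' rule gives, for any $S_1$-measurable integrable $h$, $\E_e[h(S_1)]=\E_o[h(S_1)\,\tfrac{\Pr(e\mid S_1)}{\Pr(o\mid S_1)}\tfrac{\Pr(o)}{\Pr(e)}]$. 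Applying this with $h(S_1)=\E_e[\tilde\Phi_1\mid S_1]g_0(S_1)$, then writing $g_0(S_1)=\E_o[\bar Y\mid S_1]$ and using the tower property in population $o$ to swap $\E_o[\bar Y\mid S_1]$ for $\bar Y$, produces exactly $\E_o[\tfrac{\Pr(e\mid S_1)}{\Pr(o\mid S_1)}\tfrac{\Pr(o)}{\Pr(e)}\E_e[\tilde\Phi_1\mid S_1]\bar Y]$, which is the surrogate score representation.

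The orthogonal representation is the surrogate index representation plus $\E_e[\tilde\Phi_1\tilde\Phi_1^\top]^{-1}$ times the correction term $\E_o[\tfrac{\Pr(e\mid S_1)}{\Pr(o\mid S_1)}\tfrac{\Pr(o)}{\Pr(e)}\E_e[\tilde\Phi_1\mid S_1](\bar Y-g_0(S_1))]$. I would verify this correction vanishes by conditioning on $S_1$ inside $\E_o$: the leading factor is $S_1$-measurable and $\E_o[\bar Y-g_0(S_1)\mid S_1]=\E_o[\bar Y\mid S_1]-g_0(S_1)=0$ by the very definition of $g_0$. Hence all three expressions coincide with $\theta_0$. (Symmetrically, the same cancellation exhibits the orthogonal form as the surrogate score representation plus a term with zero mean, which is the source of its robustness to perturbing either $g_0$ or the quantity $\tfrac{\Pr(e\mid S_1)}{\Pr(o\mid S_1)}\tfrac{\Pr(o)}{\Pr(e)}\E_e[\tilde\Phi_1\mid S_1]$; but for identification at the truth the plain cancellation suffices.)

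The routine part is the tower-property bookkeeping. The step that needs care is the change of measure: one must track which population each conditional expectation is taken in, confirm via Bayes' rule that the Radon--Nikodym derivative of the law of $S_1$ under $e$ relative to its law under $o$ is exactly $\tfrac{\Pr(e\mid S_1)}{\Pr(o\mid S_1)}\tfrac{\Pr(o)}{\Pr(e)}$, and use the overlap assumption so that this ratio is a.s.\ finite and the manipulation is legitimate.
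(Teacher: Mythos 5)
Your proposal is correct and follows essentially the same route as the paper: establish the surrogate index representation by stripping $b_0(S_0)$ from \eqref{eqn:ass:plr} and substituting $g_0(S_1)$ for $\bar Y$ via \eqref{eqn:ass:meanid} and \eqref{eqn:ass:ir}, connect the index and score representations through the Bayes-rule/change-of-measure identity for the $S_1$ density ratio plus the tower property, and observe that the orthogonal correction term has mean zero because $\E_o[\bar Y - g_0(S_1)\mid S_1]=0$. The only cosmetic difference is that you derive the score representation from the already-established index representation, whereas the paper verifies the score representation directly by reducing it to $\theta_0$; the chain of identities is the same.
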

}

\begin{figure}[htpb]
    \centering
    \includegraphics[scale=.5]{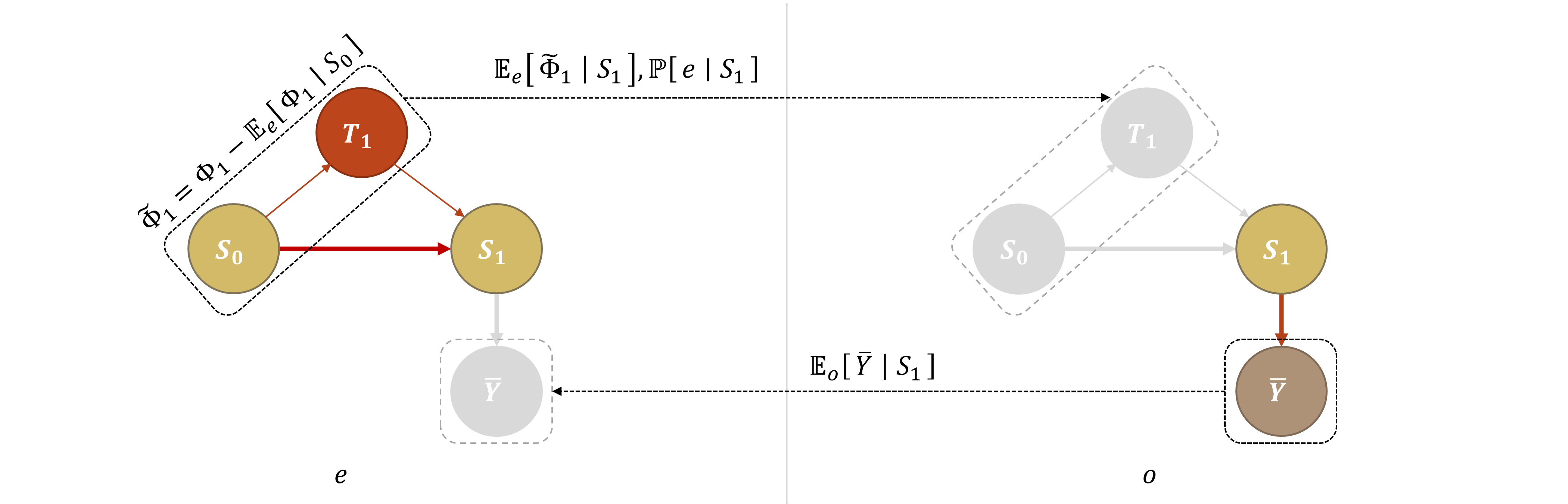}
    \caption{Graphical depiction of surrogate index and surrogate score identification strategies (c.f. Theorem~\ref{thm:id}) for the  case when treatments occur only at period $1$. Grey elements are unobserved and dashed arrows in between the two settings portray the quantities that are being learned in one setting and transferred to the other so as to enable identification of the target quantity.}
    \label{fig:static_id}
\end{figure}

The core estimation challenge that the surrogate approach resolves is that the treatments and outcome of interest are not observed in a single dataset. Intuitively, the first \textbf{surrogate index representation} approaches this challenge by using realized treatments from the experimental sample and, in place of realized outcomes, substitutes the expected outcome conditional on the surrogates, $g_0\left(S\right)$, which can be identified from the observational sample and then constructed in the experimental sample. 

The second \textbf{surrogate score representation} reverses this substitution. The second term pairs an expectation of the featurized treatment conditional on surrogates, $\E_e[\tilde{\Phi}_1\mid S_{1}]$, with the realized outcomes from the observational sample. This representation requires an added ratio of probabilities of appearing in each sample, $\Pr(e\mid S_1)$ and $\Pr(o\mid S_1)$, to adjust for variation of the marginal surrogate distribution across the two datasets.

The third \textbf{orthogonal representation} blends the first two representations and satisfies \textit{Neyman orthogonality}, which allows the construction of confidence intervals and double robustness.\footnote{One practical difficulty with this third doubly robust approach is that it is less transparent and requires access to the raw historical dataset whenever estimating a new treatment option. In contrast, the surrogate index representation allows for segmentation: one can estimate the surrogate index in the observational data once and store only the $g_0$ model. Treatment effects can then be estimated using only these stored parameters and the experimental dataset, or multiple experimental datasets. This explicit construction of expected outcomes in the experimental data also makes the first approach particularly easy to interpret. However, one must then be careful to account for the additional uncertainty stemming from estimating outcomes in the first step, as standard confidence intervals in the second step will not account for this pre-estimated component.}

We note that the parameter identified in the equations in Theorem~\ref{thm:id} is interpretable even if the partially linear assumption is violated. In this case, the equations are identifying the best linear projection of the variation in the long-term outcome that is not explained by the initial state, i.e. $\bar{Y} - \E_e[\bar{Y}\mid S_0]$, on the variation in the feature map $\phi(T_1, S_0)$, that is also un-explained by the initial state, i.e. $\tilde{T}_1$. That is the quantity:
\vsedit{
\begin{equation}
    \theta_0 = \E_e[\tilde{\Phi}_1\tilde{\Phi}_1^\top]^{-1} \E_e[\tilde{\Phi}_1\, \left(\bar{Y} - \E_e[\bar{Y}\mid S_0]\right)]
\end{equation}
}
We formulate the estimation of $\theta_0$ based on the orthogonal representation as a $Z$-estimator based on a vector of moment equations that depends on a vector of nuisance functions $f_0$, i.e.:
\begin{equation}
    m(\theta_0; f_0) := \E[\psi(Z; \theta_0, f_0))] = 0
\end{equation}
and such that it satisfies the Neyman orthogonality condition:
\begin{definition}[Neyman orthogonal moments] A vector of moment conditions $m(\theta; f)$ that depends on a target parameter $\theta$ and a nuisance parameter $f$, whose true values are $\theta_0, f_0$ respectively, satisfies Neyman orthogonality if:
\begin{equation*}
    D_f m(\theta_0; f_0)[f-f_0] := \frac{\partial}{\partial t} m(\theta_0; f_0 + t\, (f-f_0)) \bigg|_{t=0} = 0
\end{equation*}
\end{definition}
Subsequently, this will allow us to invoke the results in \cite{chernozhukov2018double}, to derive an asymptotic normal estimator, even when high-dimensional, regularized approaches are used to estimate the nuisance functions $f_0$.

\vsedit{\begin{theorem}[Orthogonal Moment Formulation]\label{thm:ortho} Denote with:
\begin{align*}
    q_0(S_1) :=~& \frac{\Pr(e\mid S_1)}{\Pr(o\mid S_1)}\, \E_e[\tilde{\Phi}_1\mid S_{1}] &
    p_0(S_0) :=~& \E_e[\Phi_1\mid S_0] &
    h_0(S_0) :=~& \E_e[g(S_1)\mid S_0]
\end{align*}
and let $f=(g, q, p, h)$, denote the full set of nuisance functions. Then $\theta_0$ is the solution to the moment equation:
\begin{align*}
    m(\theta_0; f_0):=~& m_e(\theta_0; f_0) + m_o(\theta_0; f_0) = 0 \\
    m_e(\theta; f) :=~& \E\bigg[1\{e\}\, (g(S_1) - h(S_0) -
    \theta^\top(\Phi_1 - p(S_0)))\cdot (\Phi_1 - p(S_0))\bigg]\\
    m_o(\theta_0; f):=~& \E\left[1\{o\}\, q(S_1) (\bar{Y} - g(S_1))  \right]
\end{align*}
Moreover, the moment $m$ satisfies the Neyman orthogonality property with respect to $f$. Furthermore, it satisfies a stronger double robustness property with respect to $g$ and $q$, i.e. if we denote with $\hat{\theta}$ the solution to $m(\theta; \hat{g}, \hat{q}, p_0, \hat{h})=0$, then:
\begin{align*}
    \left\|\E_e[\tilde{T}_1 \tilde{T}_1^\top] \left(\hat{\theta} - \theta_0\right)\right\|_2 \leq \frac{\Pr(o)}{\Pr(e)}\|g_0 - \hat{g}\|_{2,o}\, \|q_0 - \hat{q}\|_{2,o} 
\end{align*}
\end{theorem}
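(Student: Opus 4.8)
The plan is to derive all three assertions — validity ($m(\theta_0;f_0)=0$), Neyman orthogonality, and the double‑robustness bound — from Theorem~\ref{thm:id} together with one elementary change‑of‑measure identity. Throughout I will use $\E[1\{e\}X]=\Pr(e)\E_e[X]$, $\E[1\{o\}X]=\Pr(o)\E_o[X]$, and the Bayes identity
\begin{equation*}
\Pr(o)\,\E_o\!\left[\frac{\Pr(e\mid S_1)}{\Pr(o\mid S_1)}\,\psi(S_1)\right]=\Pr(e)\,\E_e[\psi(S_1)]
\end{equation*}
valid for any integrable $\psi$, which holds because $\Pr(e\mid s)/\Pr(o\mid s)$ equals the ratio of the $S_1$‑densities in the two populations times $\Pr(e)/\Pr(o)$. \emph{Validity} is then immediate: substituting $f=f_0$ and using $\Phi_1-p_0(S_0)=\tilde\Phi_1$ and $g_0(S_1)-h_0(S_0)=\tilde g_0(S_1,S_0)$ gives $m_e(\theta_0;f_0)=\Pr(e)\E_e[(\tilde g_0-\theta_0^\top\tilde\Phi_1)\tilde\Phi_1]$ and $m_o(\theta_0;f_0)=\Pr(o)\E_o[q_0(S_1)(\bar Y-g_0(S_1))]$; dividing $m_e+m_o=0$ by $\Pr(e)$, rearranging, and applying the Bayes identity to rewrite $\tfrac{\Pr(o)}{\Pr(e)}q_0$ reproduces exactly the orthogonal representation of $\theta_0$ from Theorem~\ref{thm:id}.

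For \emph{Neyman orthogonality} I would compute the Gateaux derivative of $m$ at $(\theta_0,f_0)$ in each nuisance coordinate separately; since $m$ is polynomial in the nuisances, $D_f m(\theta_0;f_0)[f-f_0]$ is the sum of these. The $q$‑derivative touches only $m_o$ and equals $\Pr(o)\E_o[\delta_q(S_1)(\bar Y-g_0(S_1))]$, which vanishes after conditioning on $S_1$ because $g_0(S_1)=\E_o[\bar Y\mid S_1]$ by~\eqref{eqn:ass:ir}. The $p$‑ and $h$‑derivatives touch only $m_e$; expanding by the product rule, every resulting term carries a factor $\E_e[\tilde\Phi_1\mid S_0]=0$ or $\E_e[\tilde g_0(S_1,S_0)\mid S_0]=0$ after conditioning on $S_0$ (these hold by the definitions of $p_0$, $h_0$ and $\tilde g_0$), so both vanish. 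Only the $g$‑direction needs both pieces: $D_g m_e[\delta_g]=\Pr(e)\E_e[\delta_g(S_1)\tilde\Phi_1]$, while $D_g m_o[\delta_g]=-\Pr(o)\E_o[q_0(S_1)\delta_g(S_1)]$, and the Bayes identity plus the tower property turn the latter into $-\Pr(e)\E_e[\tilde\Phi_1\delta_g(S_1)]$, which cancels the former.

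For the \emph{double‑robustness bound}, plug $\hat f=(\hat g,\hat q,p_0,\hat h)$ into $m=0$ (so $\Phi_1-p_0(S_0)=\tilde\Phi_1$ still holds exactly) and subtract the validity equation. The $\hat h$ contribution is $\E_e[(\hat h-h_0)(S_0)\tilde\Phi_1]=0$ since $\E_e[\tilde\Phi_1\mid S_0]=0$, so $\hat h$ drops out entirely. Writing $\bar Y-\hat g=(\bar Y-g_0)+(g_0-\hat g)$ in $m_o$ splits its change into $\E_o[(\hat q-q_0)(S_1)(\bar Y-g_0(S_1))]$, which vanishes by $g_0=\E_o[\bar Y\mid S_1]$, plus $\E_o[\hat q(S_1)(g_0-\hat g)(S_1)]$; and the $m_e$ change $\E_e[(\hat g-g_0)(S_1)\tilde\Phi_1]=\E_e[(\hat g-g_0)(S_1)\E_e[\tilde\Phi_1\mid S_1]]$ becomes, via the Bayes identity, $\tfrac{\Pr(o)}{\Pr(e)}\E_o[q_0(S_1)(\hat g-g_0)(S_1)]$. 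Combining, everything collapses to
\begin{equation*}
\E_e[\tilde\Phi_1\tilde\Phi_1^\top](\hat\theta-\theta_0)=\frac{\Pr(o)}{\Pr(e)}\,\E_o\big[(q_0(S_1)-\hat q(S_1))(\hat g(S_1)-g_0(S_1))\big],
\end{equation*}
and Cauchy--Schwarz over the observational measure gives the claimed bound (reading the $\tilde T_1$ of the statement as the residualized feature map $\tilde\Phi_1$). I expect the main obstacle to be purely the bookkeeping: carrying the $\Pr(e),\Pr(o)$ weights correctly through the Bayes change of measure and the tower property, both in the $g$‑direction of the orthogonality argument and in the final telescoping, where $\tilde\Phi_1$ plays a dual role as a regression weight and inside the Gram matrix $\E_e[\tilde\Phi_1\tilde\Phi_1^\top]$.
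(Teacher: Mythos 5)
Your proposal is correct and follows essentially the same route as the paper: validity by reduction to the orthogonal representation of Theorem~\ref{thm:id}, orthogonality by term-by-term Gateaux derivatives with the $g$-direction handled via the change-of-measure identity (your ``Bayes identity'' is exactly the paper's Lemma~\ref{lem:transform} combined with the tower law), and double robustness by the same telescoping to the bilinear error term $\frac{\Pr(o)}{\Pr(e)}\E_o[(g_0-\hat g)(\hat q-q_0)]$ followed by Cauchy--Schwarz. You also correctly read the statement's $\tilde T_1$ as the residualized feature map $\tilde\Phi_1$, which is what the paper's proof uses.
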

}

\vsedit{
Given the latter orthogonal moment formulation of the target parameter of interest, one can achieve a root-$n$ asymptotically normal estimate and accompanied asymptotically valid confidence intervals by invoking the results
in \cite{chernozhukov2018double} and verifying that the general conditions required by the main theorems in \cite{chernozhukov2018double} are satisfied. Given that the estimate that we present in the next section (see Theorem~\ref{thm:dyn-estimation}) is a generalization of the setting presented in this section, we omit this result and refer the reader to the more general theorem of the next section.
}
\vsdelete{\begin{theorem}[Estimation and Asymptotic Normality]\label{thm:estimation}
Consider the estimator based on the empirical version of the orthogonal moment, with plug-in nuisance estimates $\hat{f}=(\hat{g}, \hat{q}, \hat{p}, \hat{h})$ trained on a separate sample, i.e. $\hat{\theta}$ solves the empirical moment equation:
\vsedit{\begin{align*}
    m_n(\hat{\theta}; \hat{f}) :=~& m_{e,n}(\hat{\theta}; \hat{f}) + m_{o,n}(\hat{\theta}; \hat{f}) = 0 \\
    m_{e,n}(\theta; f) :=~& \E_n\bigg[1\{e\}\, (g(S_1) - h(S_0) -
    \theta^\top (\Phi_1 - p(S_0))) (\Phi_1 - p(S_0))\bigg]\\
    m_{e,n}(\theta; f) :=~& \E_n\left[1\{o\}\, q(S_1, S_0) (\bar{Y} - g(S_1))  \right]
\end{align*}}
\vscomment{Need to work out the exact requirement from nuisances. Potentially this will also give rise to the "domain adapted MSE rate for $g$.}
If $\|\hat{h}-h_0\|_{2,e}, \|\hat{p}-p_0\|_{2,e}=o_p(n^{-1/4})$ and $\|g_0 - \hat{g}\|_{2,o}\cdot \|q_0 - \hat{q}\|_{2,o} = o_p(n^{-1/2})$, then:
\begin{equation*}
    \sqrt{n} \left(\hat{\theta} - \theta_0\right) \to_{d} N(0, J_e^{-1}\,(\Sigma_e + \Sigma_o)\, J_e^{-1})
\end{equation*}
where:
\begin{align*}
    J_e :=~& \E_e[\tilde{T}_1\, \tilde{T}_1^\top] &
    \Sigma_e :=~& \frac{1}{\Pr(e)} \E_e\left[(\tilde{g}_0(S_1)-\theta_0^\top \tilde{T}_1)^2\, \tilde{T}_1 \tilde{T}_1^\top\right] \\
    & & \Sigma_o :=~& \frac{\Pr(o)}{\Pr(e)^2}\, \E_o\left[(\bar{Y} - g_0(S_1))^2\, q_0(S_1, S_{0})\, q_0(S_1, S_{0})^\top\right]
\end{align*}
\end{theorem}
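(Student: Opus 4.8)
The plan is to verify the hypotheses of the cross-fitted orthogonal $Z$-estimator theory of \cite{chernozhukov2018double} and then read off the sandwich variance. I treat $\hat\theta$ as the (unique) solution of the empirical orthogonal moment $m_n(\theta;\hat f)=0$, where the nuisances $\hat f=(\hat g,\hat q,\hat p,\hat h)$ are fit on an auxiliary fold independent of the fold used to form $m_n$ (cross-fitting). A convenient simplification is that $m_n(\theta;\hat f)$ is \emph{affine} in $\theta$: the observational block $m_{o,n}$ does not depend on $\theta$, and $m_{e,n}$ is linear in $\theta$. Hence the expansion $\hat\theta-\theta_0=\hat J_n^{-1}\,m_n(\theta_0;\hat f)$ is \emph{exact}, with empirical Jacobian $\hat J_n=\E_n[1\{e\}(\Phi_1-\hat p(S_0))(\Phi_1-\hat p(S_0))^\top]$, and no mean-value remainder in $\theta$ is needed. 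Differentiating the population moment gives $D_\theta m(\theta_0;f_0)=-\Pr(e)\,J_e$ with $J_e=\E_e[\tilde\Phi_1\tilde\Phi_1^\top]=\E_e[\tilde{T}_1\tilde{T}_1^\top]$; I assume this second moment is full rank (no collinearity among the residualized treatments), which makes $J_e$ invertible and the stated sandwich well defined. Provided $\|\hat p-p_0\|_{2,e}=o_p(1)$, one has $\hat J_n\to_p \Pr(e)\,J_e$.

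Next I decompose $\sqrt n\,m_n(\theta_0;\hat f)$ into an oracle term $\sqrt n\,m_n(\theta_0;f_0)$, an empirical-process term $\sqrt n\big[(m_n-m)(\theta_0;\hat f)-(m_n-m)(\theta_0;f_0)\big]$, and a bias term $\sqrt n\big[m(\theta_0;\hat f)-m(\theta_0;f_0)\big]$. The oracle term obeys a CLT: since $m(\theta_0;f_0)=0$ and the pooled moment is an average of i.i.d. mean-zero summands, and since the experimental and observational subsamples are independent, $\sqrt n\,m_n(\theta_0;f_0)\to_d N(0,\Sigma_e+\Sigma_o)$, the two blocks adding precisely because of sample independence. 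Evaluating the integrands at the truth, $g_0(S_1)-h_0(S_0)=\tilde g_0(S_1)$ and $\Phi_1-p_0(S_0)=\tilde{T}_1$, and rescaling the per-sample variances by the limiting fractions $n_e/n\to\Pr(e)$ and $n_o/n\to\Pr(o)$, yields $\Sigma_e=\frac{1}{\Pr(e)}\E_e[(\tilde g_0(S_1)-\theta_0^\top\tilde{T}_1)^2\,\tilde{T}_1\tilde{T}_1^\top]$ and $\Sigma_o=\frac{\Pr(o)}{\Pr(e)^2}\E_o[(\bar Y-g_0(S_1))^2\,q_0 q_0^\top]$.

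It remains to kill the other two terms. The empirical-process term is $o_p(1)$ by the standard cross-fitting argument: conditional on the auxiliary fold, $\hat f$ is deterministic, so the centered average has conditional variance $O(\|\hat f-f_0\|^2/n)$, and multiplying by $\sqrt n$ leaves $O_p(\|\hat f-f_0\|)=o_p(1)$ via Chebyshev. The bias term is the crux. Neyman orthogonality (Theorem~\ref{thm:ortho}) annihilates the first-order Gateaux derivative, so $m(\theta_0;\hat f)-m(\theta_0;f_0)$ is a second-order remainder in $\hat f-f_0$; I bound it by separating the $(g,q)$ directions from the $(h,p)$ directions. The double-robustness inequality of Theorem~\ref{thm:ortho} controls the $(g,q)$ interaction by $\tfrac{\Pr(o)}{\Pr(e)}\|g_0-\hat g\|_{2,o}\|q_0-\hat q\|_{2,o}$, while a second-order Taylor bound on $m_e$ controls the $(h,p)$ interaction by a product of $\|\hat h-h_0\|_{2,e}$ and $\|\hat p-p_0\|_{2,e}$. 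Under the stated rates $\|g_0-\hat g\|_{2,o}\|q_0-\hat q\|_{2,o}=o_p(n^{-1/2})$ and $\|\hat h-h_0\|_{2,e},\|\hat p-p_0\|_{2,e}=o_p(n^{-1/4})$, every piece is $o_p(n^{-1/2})$, so $\sqrt n$ times the bias is $o_p(1)$. Combining, $\sqrt n(\hat\theta-\theta_0)=\big(\Pr(e)J_e\big)^{-1}\sqrt n\,m_n(\theta_0;f_0)+o_p(1)\to_d N\big(0,\,J_e^{-1}(\Sigma_e+\Sigma_o)J_e^{-1}\big)$, the factors of $\Pr(e)$ in the Jacobian cancelling against those carried inside $\Sigma_e,\Sigma_o$.

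The main obstacle is the bias step: orthogonality only removes the first-order term, so one must verify that every surviving second-order cross term genuinely factorizes as a product of individually vanishing nuisance errors, and that no un-paired quadratic term in a slowly-converging direction remains. The delicate bookkeeping is the \emph{domain adaptation}: $\hat g$ is trained on the observational population yet appears in both the experimental block $m_e$ and the observational block $m_o$, so I must track which norm ($\|\cdot\|_{2,o}$ versus $\|\cdot\|_{2,e}$) governs each appearance and use the density ratio embedded in $q_0$ (the overlap between populations $e$ and $o$ on $S_1$) to transfer errors across populations. Boundedness of this ratio is the implicit regularity condition making the transfer lossless up to constants and keeping all remainders in the product form controlled by the assumed rates.
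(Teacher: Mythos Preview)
Your approach is essentially the same as the paper's: both invoke the cross-fitted orthogonal $Z$-estimator machinery of \cite{chernozhukov2018double}, use the Neyman orthogonality and double robustness of Theorem~\ref{thm:ortho} to control the nuisance bias, and then compute the sandwich variance $J^{-1}\Sigma J^{-\top}$ with $J=\Pr(e)J_e$ and $\Sigma=\Pr(e)^2(\Sigma_e+\Sigma_o)$ so that the $\Pr(e)$ factors cancel. The paper's proof is terser---it simply cites the general theorem and then does the variance algebra---whereas you unpack the three-term (oracle / empirical-process / bias) decomposition explicitly and exploit the affine-in-$\theta$ structure to avoid a mean-value step; this is the same argument at a finer level of granularity, not a different route.

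Two small remarks. First, your oracle CLT line $\sqrt{n}\,m_n(\theta_0;f_0)\to_d N(0,\Sigma_e+\Sigma_o)$ drops a factor of $\Pr(e)^2$ on the variance; the pooled score has variance $\E[\psi\psi^\top]=\Pr(e)^2(\Sigma_e+\Sigma_o)$, and it is this extra $\Pr(e)^2$ that cancels the $(\Pr(e)J_e)^{-1}$ on each side, as you say at the end. Second, you are right to flag the domain-adaptation issue for $\hat g$: the cross term $D_{pg}m_e$ pairs $\nu_p$ with $\nu_g$ under the \emph{experimental} measure, so the product rate on $\|\nu_g\|_{2,o}\|\nu_q\|_{2,o}$ alone is not quite enough without a bounded density ratio to transfer $\|\nu_g\|_{2,e}\lesssim\|\nu_g\|_{2,o}$ and an individual rate on $\nu_g$. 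The paper is equally informal here (note the author's in-text comment that the exact nuisance requirement still needs to be worked out), so your caveat is well placed rather than a defect relative to the paper's own argument.
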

Asymptotically valid confidence intervals can be constructed using empirical analogues of the co-variance matrix.
}

\section{Surrogates with Dynamic Adjustment: Non-Parametric Identification}\label{ssec:ddsurr_wadj}

In this section we deal with the case where the treatment policy in the observational and the experimental data is dynamic and we want to estimate only the effects of the treatment at period $1$, under zero future treatments, i.e. the part of the effect that does not go through future treatments but solely through the surrogates/control variables.

\paragraph{Preliminary definitions.} To present the identification and estimation strategy we will need to introduce some notation from the dynamic treatment regime literature. Consider an arbitrary time-series process $\{S_{t-1}, T_t, Y_t\}_{t=1}^{M}$, with $S_t\in \mcS_t$ and $T_t\in \mcT_t$. For any time $t$, let $\bar{S}_t=\{S_1,\ldots, S_t\}$ and $\bar{T}_t=\{T_1,\ldots, T_t\}$ denote the sequence of the variables up until time $t$ and similarly, let $\underline{S}_t = \{S_t, \ldots, S_M\}$ and $\underline{T}_t=\{T_t,\ldots, T_M\}$. We will also denote with $\bar{s}_t, \bar{\tau}_t, \bar{y}_t, \underline{s}_t, \underline{\tau}_t, \underline{y}_t$, corresponding realizations of the latter random sequences. Moreover, we will be denoting with $(\bar{\tau}'_t, \underline{\tau}_{t+1})$, the sequences of treatments that follows $\tau'$ up until time $t$ and then continues with $\tau$. We let $0\in \mcT_t$ denote a baseline policy value, which could be appropriately instantiated based on the context.

\paragraph{Target quantity.} For any sequence of treatment $\tau=(\tau_1,\ldots, \tau_M)$, let $Y_{t}^{(\tau)}$ denote the counterfactual outcome at period $t$ under such a sequence of interventions, equivalently in do-calculus notation $Y_{t} \mid do(\bar{T}_M=\bar{\tau}_M)$. Note that $Y_t^{(\tau)}$ is only a function of $\bar{\tau}_t$, i.e. $Y_t^{(\tau)} \equiv Y_t^{(\bar{\tau}_t)}$, but for simplicity of notation we use the overall vector of treatments. We will also denote with $\bar{Y}_t^{(\tau)}:=\sum_{j=t}^M Y_j^{(\tau)}$, the counterfactual cumulative outcome from period $t$ and onwards, and with $\bar{Y}^{(\tau)} = \sum_{j=1}^M Y_j^{(\tau)}$, the total counterfactual cumulative outcome. 
Under this counterfactual notation, we can re-write our target quantity of interest from Equation~\eqref{eqn:target-do} as:
\begin{align}\label{eqn:target-cntf}
    \tau(t_1, t_0) := \E_e\left[\bar{Y}^{(t_1, \underline{0})} - \bar{Y}^{(t_0, \underline{0})}\right]
\end{align}

\subsection{Non-Parametric Identification}\label{ssec:nonparamid}

We show that the target quantity of interest is non-parametrically identified if the data generating processes adhere to the causal graph depicted in Figure~\ref{fig:dag} and satisfy a regularity condition on overlap, as well a a dynamic analogue of an invariance relationship between the observational and experimental setting. 
We first present a set of high-level conditions that lead to non-parametric identification and then present the main identification result. 

\begin{figure}[htpb]
    \centering
    \begin{subfigure}{.45\textwidth}
    \includegraphics[scale=.4]{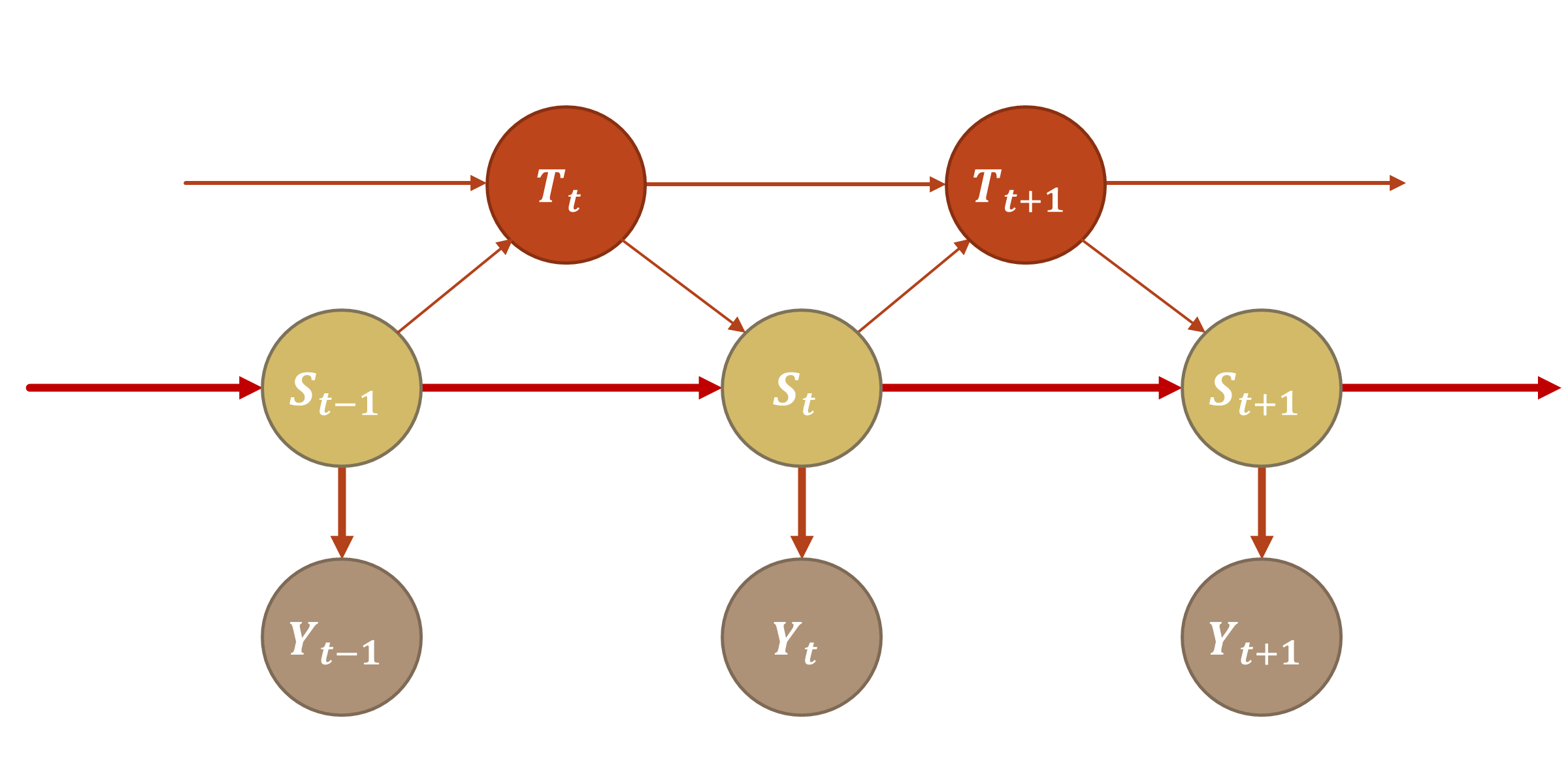}
    \caption{Causal diagram describing the causal relationships of the random variables in both settings $\{e,o\}$.}
    \label{fig:dag}
    \end{subfigure}
    ~~~~~~~~~~
    \begin{subfigure}{.45\textwidth}
    \includegraphics[scale=.4]{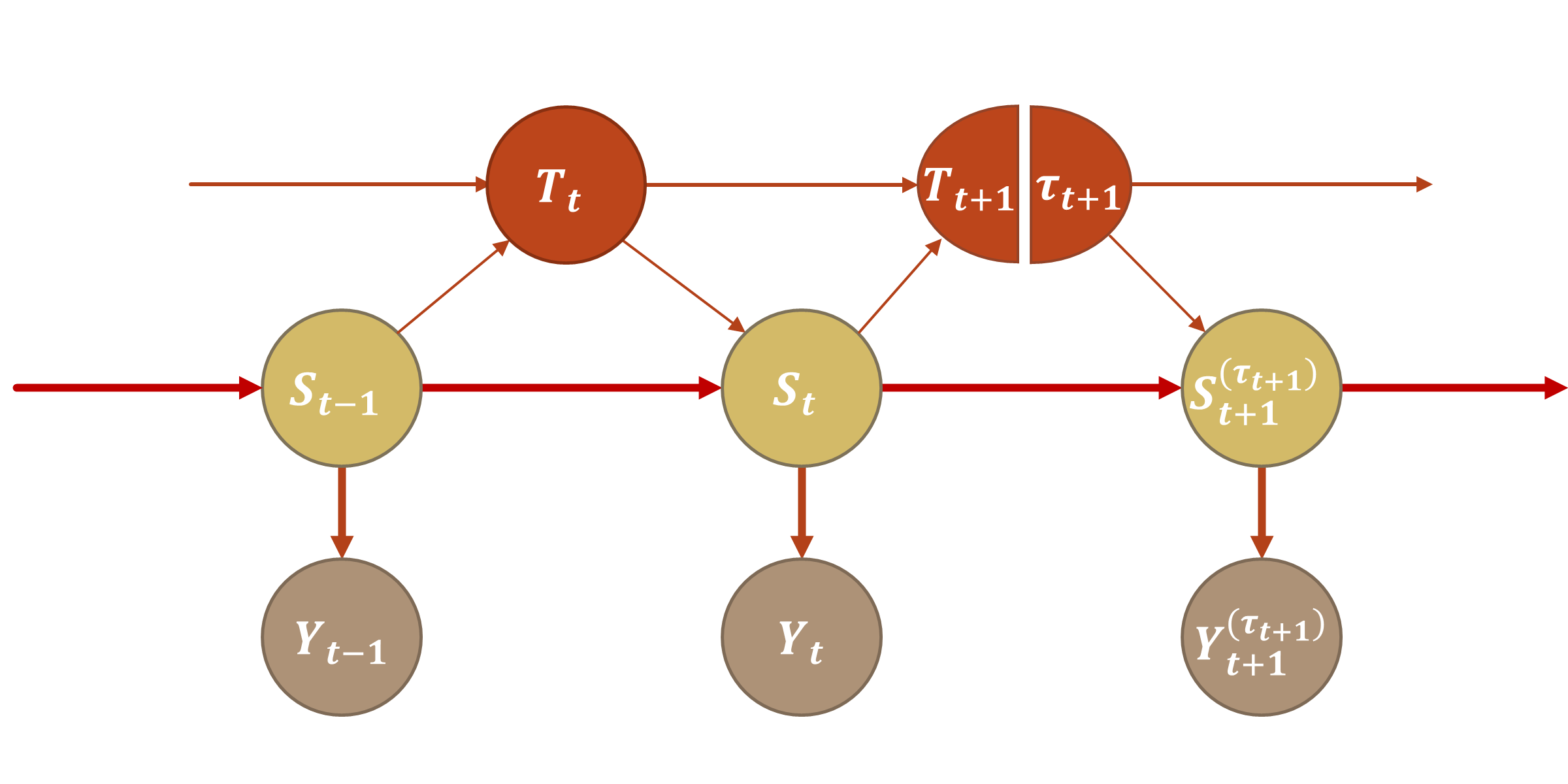}
    \caption{Single world intervention diagram from intervening and setting the treatments at periods $t+1$ and on-wards to $\underline{\tau}_{t+1}$.}
    \label{fig:swig}
    \end{subfigure}
    \caption{Causal graph and single world intervention graph that govern the data generating processes in both the experimental (e) and observational (o) setting.}
\end{figure}

We assume that the data generating process satisfies the following sequential conditional randomization assumption:
\begin{assumption}[\emph{Sequential Conditional Exogeneity}]\label{ass:cond-ex} The data generating process satisfies the following conditional independence conditions:
\begin{equation}
    \forall 1\leq t\leq j \leq M \text{ and } \forall \underline{\tau}_{t+1}\in \times_{k=t+1}^M \mcT_k: \left\{Y_{j}^{(\bar{T}_{t}, \underline{\tau}_{t+1})}, \underline{S}_t^{(\bar{T}_{t},\underline{\tau}_{t+1})}\right\} \indep T_{t+1} \mid S_{t} \tag{dynExog}
\end{equation}
\end{assumption}
This condition is for instance satisfied if the data generating process adheres to the causal graph presented in Figure~\ref{fig:dag}, as can be easily verified from the single-world-intervention graph (SWIG) in Figure~\ref{fig:swig}. Moreover, we will assume a surrogacy assumption, that under a zero future treatment policy, the effect of $T_t$ on future outcomes only goes through $S_t$. This is again satisfied if the data generating process adheres to the causal graph presented in Figure~\ref{fig:dag}, as can be easily verified from the single-world-intervention graph (SWIG) in Figure~\ref{fig:swig}. In fact, we will only require a conditional mean-independency assumption.
\begin{assumption}[\emph{Sequential Surrogacy}]\label{ass:dyn-surrogacy}
The data generating process satisfies the following conditional mean-independence conditions:
\begin{equation}
    \forall 1\leq t\leq j \leq M \text{ and } \forall \underline{\tau}_{t+1}\in \times_{k=t+1}^M \mcT_k: \left\{Y_{j}^{(\bar{T}_{t}, \underline{\tau}_{t+1})}, \underline{S}_t^{(\bar{T}_{t}, \underline{\tau}_{t+1})}\right\} \indep_{\mean} (T_{t}, S_{t-1}) \mid S_{t} \tag{dynSurr}
\end{equation}
\end{assumption}

Since we do not observe long-term outcomes from the experimental setting, we will need to assume a dynamically adjusted analogue of the invariance property, so that we can use long-term outcomes from the observational dataset to ``impute'' long-term outcomes in the experimental dataset.
\begin{assumption}[Dynamic Invariance]\label{ass:dynIR}
The two settings $\{e,o\}$ satisfy the following invariance property:
\begin{align}
    \E_e\left[\bar{Y}^{(\underline{0}_2)}\mid S_1\right] = \E_o\left[\bar{Y}^{(\underline{0}_2)}\mid S_1\right] \tag{dynIR}
\end{align}
\end{assumption}
Observe that the dynamic invariance Assumption~\eqref{ass:dynIR} is much more permissive in practice than the standard invariance assumption as we no longer require that the dynamic treatment policy in the observational data be the same as in the experimental data, but simply that the adjusted outcomes under baseline treatment levels retain the same relationship with the surrogates. Moreover, for conveniency to reader's more familiar with do-calculus notation, we can equivalently express this assumption as:
\begin{align*}
\E_e\left[\bar{Y}\mid do(T_{>1}=0), S_1\right] = \E_o\left[\bar{Y}\mid do(T_{>1}=0), S_1\right].
\end{align*}

\begin{figure}[htpb]
    \centering
    \includegraphics[scale=.47]{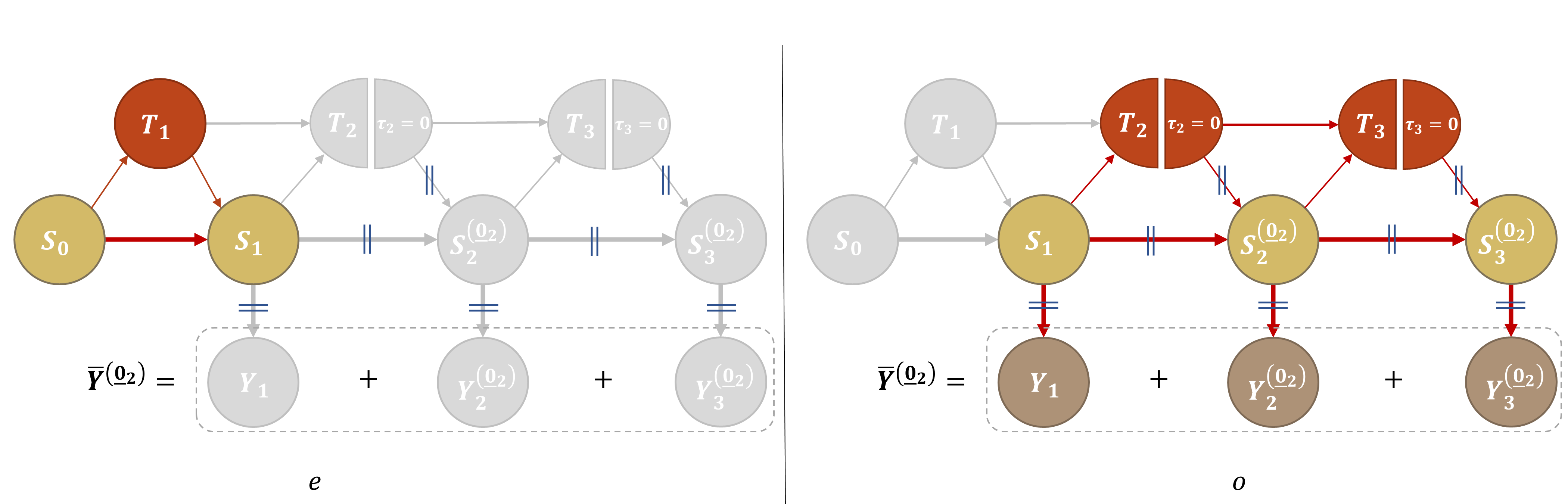}
    \label{fig:dynir}
    \caption{A sufficient condition for the dynamic invariance property to hold is that the relationships designated with a double line in the SWIGs of settings $\{e,o\}$, under the intervention $\text{do}(\underline{T}_2=\underline{0}_2)$, remain invariant (at least in terms of means) between the two settings. These relationships govern how the outcome evolves in the absence of treatments and do not depend on the treatment policy that governs the two settings (which is prescribed by the arrows that are not designated with a double line).}
\end{figure}

Finally, we also require a regularity condition of sequential positivity (aka overlap), which essentially states that the density of treatment is bounded away from zero a.s.. To define sequential positivity, we will denote with $\pi_d(\tau_t, s_{t-1})$ the marginal densities of the random variables $(T_t, S_{t-1})$, for any setting $d\in \{e, o\}$ and period $t\in [1,M]$. Then sequential positivity is defined as:
\begin{assumption}[Sequential Positivity]\label{ass:dynPos}
The densities $\pi_e, \pi_o$ of the data generating processes satisfy that: $\pi_e(t_1, S_0), \pi_e(t_0, S_0)>0$ almost surely and that for all $t \in [2, M]$: $\pi_o(0, S_t) >0$ almost surely.
\end{assumption}

Under these high-level assumptions, we can show that the target outcome of interest is non-parametrically identified using a variant of the $g$-formula, based on a recursively defined estimand.
\begin{theorem}[Non-Parametric Identification]\label{thm:non-param-id}
Suppose that the data generating processes adhere to the causal graph depicted in Figure~\ref{fig:simplified_cg}. Then sequential conditional exogeneity Assumption~\ref{ass:cond-ex} and sequential surrogacy Assumption~\ref{ass:dyn-surrogacy} are satisfied.

If the data generating processes satisfy Assumption~\ref{ass:cond-ex}, Assumption~\ref{ass:dyn-surrogacy}, Assumption~\ref{ass:dynIR} and Assumption~\ref{ass:dynPos}, then the target quantity $\tau(t_1, t_0)$ is non-parametrically identified via the following recursively defined estimands:
\begin{align}
    \tau(t_1, t_0) =~& \E_e\left[\E_e\left[g_{\adj}^*(S_1)\mid S_0, T_1 = t_1\right] - \E_e\left[g_{\adj}^*(S_1)\mid S_0, T_1 = t_0\right]\right] \nonumber\\
    g_{\adj}^*(S_1) :=~& \E_o\left[\bar{Y}^{(\underline{0}_2)} \mid S_1\right] = \E_o\left[Y_1\mid S_1\right] + \sum_{j=2}^M f_{2,j}(S_1) \tag{adjusted surrogate index}\\
    \forall j\in [2, M]: f_{j, j}(S_{j-1}) =~& \E_o\left[Y_j\mid S_{j-1}, T_j=0\right] \tag{base estimand}\\
    \forall 2\leq t < j \leq M: f_{t, j}(S_{t-1}) =~& \E_o\left[f_{t+1,j}(S_t)\mid S_{t-1}, T_t=0\right] \tag{recursive estimand}
\end{align}
\end{theorem}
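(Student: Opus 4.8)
The plan is to prove the identification theorem in three stages: first verify that the causal graph of Figure~\ref{fig:simplified_cg} implies the high-level Assumptions~\ref{ass:cond-ex} and \ref{ass:dyn-surrogacy}; second, establish the recursive ``peeling'' identity for $g_{\adj}^*(S_1)=\E_o[\bar Y^{(\underline 0_2)}\mid S_1]$ inside the observational sample, expressing it via the base and recursive estimands $f_{t,j}$; third, transport this quantity to the experimental sample via the dynamic invariance Assumption~\ref{ass:dynIR} and close the argument with a $g$-formula step using sequential exogeneity and surrogacy at period $1$.

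\textbf{Stage 1 (graph $\Rightarrow$ assumptions).} I would read Assumption~\ref{ass:cond-ex} and \ref{ass:dyn-surrogacy} directly off the SWIG in Figure~\ref{fig:swig}: under the intervention $do(\underline T_{t+1}=\underline\tau_{t+1})$, the only parents of $T_{t+1}$ are $S_t$ (no unmeasured confounders, Markovian structure), so d-separation gives $\{Y_j^{(\bar T_t,\underline\tau_{t+1})},\underline S_t^{(\bar T_t,\underline\tau_{t+1})}\}\indep T_{t+1}\mid S_t$; similarly the absence of any direct arrow from $(T_t,S_{t-1})$ into future outcomes except through $S_t$ yields the conditional mean-independence \eqref{eqn:ass:meanid}-analogue. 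This part is mechanical once the graph is fixed.

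\textbf{Stage 2 (recursive peeling in $o$).} This is the technical core. Write $\bar Y^{(\underline 0_2)} = Y_1 + \sum_{j=2}^M Y_j^{(\underline 0_2)}$ and take $\E_o[\cdot\mid S_1]$. Since $Y_1$ is pre-period-$2$ intervention, $\E_o[Y_1\mid S_1]$ is the first term. For each $j\ge 2$, I would show by downward induction on $t$ from $j$ to $2$ that $\E_o[Y_j^{(\underline 0_2)}\mid S_{t-1}] = f_{t,j}(S_{t-1})$. The base case $t=j$: by sequential exogeneity at period $j$, $\E_o[Y_j^{(\underline 0_2)}\mid S_{j-1}] = \E_o[Y_j^{(\underline 0_2)}\mid S_{j-1}, T_j=0] = \E_o[Y_j\mid S_{j-1}, T_j=0]$, where the last equality is consistency (the observed $Y_j$ when $T_j=0$ and all later treatments are irrelevant for $Y_j$). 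The inductive step: using the tower rule over $S_t$ given $(S_{t-1},T_t=0)$, then sequential exogeneity at period $t$ to drop the conditioning on $T_t=0$ outside, and consistency/Markovianity to relate $S_t^{(\underline 0_2)}$ to the observed $S_t$ under $T_t=0$, gives $\E_o[Y_j^{(\underline 0_2)}\mid S_{t-1},T_t=0] = \E_o[f_{t+1,j}(S_t)\mid S_{t-1},T_t=0] = f_{t,j}(S_{t-1})$. Summing over $j$ at $t=2$ yields $g_{\adj}^*(S_1) = \E_o[Y_1\mid S_1]+\sum_{j=2}^M f_{2,j}(S_1)$. \textbf{The main obstacle} I anticipate is bookkeeping the counterfactual consistency steps carefully: one must be sure that $Y_j^{(\underline 0_2)}$ depends only on $\bar\tau_j$, that setting $\underline T_2=\underline 0_2$ makes the observed trajectory on the event $\{\bar T_j = 0\}$ agree with the counterfactual, and that the exogeneity assumption is invoked at the correct period with the correct conditioning set $S_t$ (not $\bar S_t$) — the Markovian reduction is what makes $S_t$ a sufficient adjustment set, and this needs to be stated cleanly.

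\textbf{Stage 3 (transport to $e$ and final $g$-formula).} Apply Assumption~\ref{ass:dynIR} to write $g_{\adj}^*(S_1) = \E_o[\bar Y^{(\underline 0_2)}\mid S_1] = \E_e[\bar Y^{(\underline 0_2)}\mid S_1]$, so the adjusted surrogate index, although learnable only in $o$, equals the conditional mean of the target counterfactual in $e$. Then in the experimental sample: $\tau(t_1,t_0) = \E_e[\bar Y^{(t_1,\underline 0)}-\bar Y^{(t_0,\underline 0)}]$; conditioning on $(S_0,T_1)$ and using sequential exogeneity at period $1$ ($T_1\indep$ the period-$\ge 2$ counterfactuals given $S_0$) plus surrogacy ($(T_1,S_0)\indep_{\mean} \bar Y^{(\underline 0_2)}\mid S_1$) and consistency, $\E_e[\bar Y^{(t_1,\underline 0)}\mid S_0] = \E_e[\E_e[\bar Y^{(\underline 0_2)}\mid S_1]\mid S_0, T_1=t_1] = \E_e[\E_e[g_{\adj}^*(S_1)\mid S_0,T_1=t_1]]$ after the outer expectation, giving exactly the claimed formula. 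Finally I would note that Assumption~\ref{ass:dynPos} (sequential positivity) is what guarantees every conditional expectation above is well-defined on a set of full measure, so the identification is valid — this is the one place the overlap condition enters, and it should be flagged rather than proved.
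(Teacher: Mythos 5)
Your overall route is the paper's: Stage~1 reads the two assumptions off the SWIG exactly as the paper does; Stage~3 is the paper's opening chain of equalities (tower law, dynExog plus overlap, consistency, dynSurr, consistency, dynIR) applied in the experimental sample; and Stage~2 is the paper's recursive identification of $g_{\adj}^*$ in the observational sample. The only difference is ordering (the paper does the transport/$g$-formula step first), which is immaterial.

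The one step that would fail as written is your induction hypothesis in Stage~2, namely $\E_o[Y_j^{(\underline{0}_2)}\mid S_{t-1}]=f_{t,j}(S_{t-1})$ for general $t$ (and likewise your base case $\E_o[Y_j^{(\underline{0}_2)}\mid S_{j-1}]=\E_o[Y_j^{(\underline{0}_2)}\mid S_{j-1},T_j=0]$). For $t>2$ the quantity $\E_o[Y_j^{(\underline{0}_2)}\mid S_{t-1}]$ is a cross-world object: $Y_j^{(\underline{0}_2)}$ intervenes on $T_2,\dots,T_{t-1}$, while the conditioning variable $S_{t-1}$ is the factual state generated by the observed $T_2,\dots,T_{t-1}$. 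Sequential exogeneity only speaks about counterfactuals of the form $Y_j^{(\bar{T}_t,\underline{\tau}_{t+1})}$ --- those that agree with the observed treatment history up to the current period --- so it cannot be used to insert or drop the conditioning on $T_t=0$ for that object, and the stated recursion does not hold for it. The paper's fix is to define the intermediate quantity as $f_{t,j}(s_{t-1})=\E_o[Y_j^{(\bar{T}_t,\underline{0}_{t+1})}\mid S_{t-1}=s_{t-1},T_t=0]$ --- the counterfactual that keeps the observed treatments through period $t-1$ and zeroes out treatments from $t$ onward --- prove the base case and the recursion for these objects (tower law, dynSurr to reduce the conditioning set to $S_t$, dynExog plus overlap to insert $T_{t+1}=0$, then consistency), and connect to $Y_j^{(\underline{0}_2)}$ only once, at the top level $t=2$, where $\E_o[Y_j^{(\underline{0}_2)}\mid S_1]=\E_o[Y_j^{(\underline{0}_2)}\mid S_1,T_2=0]=f_{2,j}(S_1)$ by dynExog and consistency. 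You half-anticipate this in your ``main obstacle'' remark about bookkeeping the consistency steps, but the induction hypothesis itself is what has to change, not just the bookkeeping around it.
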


The non-parametric identification argument of Theorem~\ref{thm:non-param-id} requires the estimation of quantities of the form $\E[f_{t+1,j}(S_t)\mid S_{t-1}, T_t=0]$. When treatment $T$ is binary or discrete, then such quantities can be estimated in a relatively accurate manner by fitting nested regression models on the sub-population for which $T_t=0$. Moreover, we can also employ the great variety of doubly robust estimators for the quantity $\E_o[\bar{Y}^{(\underline{0}_2)}\mid S_1]$ (see e.g. \cite{tran2019double}) combined with a doubly robust estimator for the surrogate part, to arrive at an overall doubly robust estimator. For instance, we can adapt the efficient influence function (EIF) representation of dynamic treatment effects proposed in \cite{scharfstein1999adjusting,van2011targeted,robins2000comment}, to the case of a surrogate index setting as follows: 
\begin{theorem}[Double Robust Representation for Discrete Treatments]
Suppose that treatments are supported on a discrete set of values. For $2\leq t \leq m \leq M$, let $f_{t, m}(S_{t-1})$ as defined in Theorem~\ref{thm:non-param-id}, and let:
\begin{align*}
f_{1,m}(T_1, S_0) :=~& \E_e\left[f_{2,m}(S_1)\mid S_0, T_1\right], & 
f_{m+1,m}(S_m) :=~& Y_m
\end{align*}
Moreover, consider the sequential inverse propensity weights defined as follows, for any $2\leq t\leq M$:
\begin{align*}
W_0(\tau) :=~& \frac{1\{T_1=\tau\}}{\Pr_e(T_1=\tau\mid S_0)}, & 
W_1(\tau) :=~& \frac{\Pr(S_1\mid e)}{\Pr(S_1\mid o)} \E_e\left[W_0(\tau)\mid S_1\right],\\
W_2 :=~& \frac{1\{T_2=0\}}{\Pr_o(T_2=0\mid S_{1})}, &
W_t :=~& \frac{1\{T_t=0\}}{\Pr_o(T_t=0\mid S_{t-1}, T_{t-1}=0)}, &
\bar{W}_t :=~& W_1(\tau) \prod_{j=2}^t W_j
\end{align*}
Then for any $\tau\in \mcT_1$:
\begin{align}\label{eqn:dr-dyn-non-param}
    \E_e\left[Y_m^{(\tau, \underline{0}_2)}\right] = \E_e\left[f_{1,m}(\tau, S_0) + W_0(\tau)\, (f_{2,m}(S_1) - f_{1,m}(\tau, S_0))\right] + \E_o\left[\sum_{t=2}^m \bar{W}_t\, \left(f_{t+1,m}(S_{t}) - f_{t,m}(S_{t-1})\right)\right]
\end{align}
Moreover, the above equation holds if either i) for all $t\in [m]$, $f_{t,m}$ take their correct values, or ii) for all $t\in [m]$, $W_{t}$ take their correct values.
\end{theorem}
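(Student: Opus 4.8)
The statement is a doubly robust (mixed‑bias) identity, so the natural strategy is to verify the two robustness regimes separately, each by a ``peeling'' (telescoping) argument that exploits the recursive definitions of the nuisances $f_{t,m}$ and the martingale‑type properties of the sequential weights. Throughout I would write $\mathcal{G}_t$ for the $\sigma$-algebra generated by $(S_1,\dots,S_{t-1},T_2,\dots,T_t)$ in the observational sample, so that both $\bar W_t$ and $f_{t,m}(S_{t-1})$ are $\mathcal{G}_t$-measurable. I will use repeatedly that (a) on the event $\{\bar W_t\neq 0\}$ the indicators force $T_2=\dots=T_t=0$, so consistency lets us replace observed quantities by the corresponding interventional ones; (b) by the Markov structure of Figure~\ref{fig:dag}, conditionally on $(S_{t-1},T_t)$ the law of $S_t$ does not depend on the earlier history, the observational policy at period $t$ depends on the past only through $(S_{t-1},T_{t-1})$, and $Y_m$ depends on the history only through $S_m$; and (c) the change‑of‑measure identity $\E_o[W_1(\tau)\,\xi(S_1)]=\E_e[W_0(\tau)\,\xi(S_1)]$ for integrable $\xi$, immediate from the definition of $W_1(\tau)$ and the tower property. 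Positivity (Assumption~\ref{ass:dynPos}) guarantees all the weights are a.s.\ finite.

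\textbf{Regime (i): all $f_{t,m}$ correct (weights arbitrary).}
I would show each weighted residual term on the right‑hand side has mean zero. For the experimental residual $\E_e[W_0(\tau)(f_{2,m}(S_1)-f_{1,m}(\tau,S_0))]$, condition on $(S_0,T_1)$: $W_0(\tau)$ is $(S_0,T_1)$-measurable and vanishes off $\{T_1=\tau\}$, while $f_{1,m}(\tau,S_0)=\E_e[f_{2,m}(S_1)\mid S_0,T_1=\tau]$ by definition, so the conditional expectation of the bracket is zero. For the $t$-th observational residual, condition on $\mathcal{G}_t$: on $\{\bar W_t\neq 0\}$ we have $T_t=0$, so by the Markov property and consistency $\E_o[f_{t+1,m}(S_t)\mid\mathcal{G}_t]=\E_o[f_{t+1,m}(S_t)\mid S_{t-1},T_t=0]=f_{t,m}(S_{t-1})$, using the base/recursive estimand equations (and $f_{m+1,m}\equiv Y_m$ for $t=m$), so that term vanishes as well. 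Hence the right‑hand side collapses to $\E_e[f_{1,m}(\tau,S_0)]$, and by iterated expectations together with the single‑outcome specialization of the $g$-formula already established in Theorem~\ref{thm:non-param-id}, this equals $\E_e[Y_m^{(\tau,\underline{0}_2)}]$.

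\textbf{Regime (ii): all weights correct ($f_{t,m}$ arbitrary).}
Now I would collect the (possibly mis‑specified) $f$-terms by the function they multiply. Using $\E_e[W_0(\tau)\mid S_0]=1$, the net coefficient of $f_{1,m}(\tau,S_0)$ integrates to zero; using (c) together with $\E_o[W_2\mid S_1]=1$, the two occurrences of $f_{2,m}(S_1)$ (once in the experimental residual with weight $W_0(\tau)$, once in the $t=2$ observational term with weight $\bar W_2$) cancel; and for $3\le t\le m$, using $\E_o[W_t\mid\mathcal{G}_{t-1}]=1$ (valid because on $\{\bar W_{t-1}\neq 0\}$ the propensity conditions on $T_{t-1}=0$, which matches $\mathcal{G}_{t-1}$ by Markovianity), the two occurrences of $f_{t,m}(S_{t-1})$ (with weights $\bar W_{t-1}$ and $\bar W_t$) cancel. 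What survives is the pure sequential‑IPW functional $\E_o[\bar W_m Y_m]$, which I would peel from the inside: repeated conditioning, Markovianity, consistency and the estimand recursions give $\E_o[\bar W_m Y_m]=\E_o[\bar W_{m-1}f^{\star}_{m,m}(S_{m-1})]=\cdots=\E_o[W_1(\tau)f^{\star}_{2,m}(S_1)]$ in terms of the \emph{correct} functions $f^{\star}$, and one application of (c) plus the definition of $W_0(\tau)$ turns this into $\E_e[\E_e[f^{\star}_{2,m}(S_1)\mid S_0,T_1=\tau]]=\E_e[f^{\star}_{1,m}(\tau,S_0)]=\E_e[Y_m^{(\tau,\underline{0}_2)}]$, again by Theorem~\ref{thm:non-param-id}.

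\textbf{Main obstacle.}
The real difficulty is bookkeeping rather than any deep inequality: once the experimental and observational samples are glued through the density ratio $\Pr(S_1\mid e)/\Pr(S_1\mid o)$ hidden inside $W_1(\tau)$, one must track carefully which variables are measurable with respect to which filtration, and invoke Assumptions~\ref{ass:cond-ex}--\ref{ass:dynIR} at exactly the right places --- sequential exogeneity and consistency whenever a treatment is pinned to the baseline $0$ by an IPW indicator, Markovianity whenever history is dropped, and dynamic invariance whenever the imputed long‑term (here period‑$m$) outcome is transported from $o$ to $e$. A secondary point worth stating is that Theorem~\ref{thm:non-param-id} is written for $\bar Y=\sum_t Y_t$, whereas the recursion here tracks a single period‑$m$ outcome; but the $g$-formula argument applies verbatim term by term, which is all that is needed, and the representation for $\bar Y^{(\underline 0_2)}$ then follows by summing \eqref{eqn:dr-dyn-non-param} over $m$ and adding the $Y_1$ surrogate term. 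One could alternatively give a single unified derivation by writing the difference between the right‑hand side and $\E_e[Y_m^{(\tau,\underline{0}_2)}]$ as a remainder equal to a sum of products of nuisance errors $(f_{t,m}-f^{\star}_{t,m})$ with weight errors, which manifestly vanishes if either factor is zero; this is just a reorganization of the two regimes above.
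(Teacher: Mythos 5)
Your proposal is correct and follows essentially the same route as the paper's proof: regime (i) is handled by showing each weighted residual has zero conditional mean so the expression collapses to $\E_e[f_{1,m}(\tau,S_0)]$, and regime (ii) by telescoping the $f$-terms via $\E[W_t\mid\cdot]=1$ and the change of measure down to the pure sequential IPW functional $\E_o[\bar W_m Y_m]$, which is then peeled back to $\E_e[Y_m^{(\tau,\underline 0_2)}]$. The only cosmetic difference is that the paper peels $\E_o[\bar W_m Y_m]$ through the counterfactuals $Y_m^{(\bar T_{t},\underline 0_{t+1})}$ directly rather than through the correct regressions $f^\star_{t,m}$, which is equivalent.
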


Since, we have that $\tau(t_1,t_0)=\sum_{m=1}^M \E_e\left[Y_m^{(t_1, \underline{0})}\right] - \E_e\left[Y_m^{(t_0, \underline{0})}\right]$, we can combine the doubly robust representations prescribed by Equation~\eqref{eqn:dr-dyn-non-param}, for each $m\in [1, M]$ and $\tau\in \{t_1, t_0\}$, to get an overall doubly robust representation of the target quantity. Similar to existing augmented inverse propensity methods in the dynamic treatment regime, the latter representation will lead to a consistent estimation if either all the models that go into the inverse propensity weights $\{W_{t}\}_{t=0}^M$ are consistent, or if all the nested regression functions $\{f_{t,j}\}_{1\leq t\leq m\leq M}$ are consistent. Moreover, this variant of the double robustness property also implies Neyman orthogonality (local robustness) of the moment implicitly defined by Equation~\eqref{eqn:dr-dyn-non-param}. Thus using the general results in \cite{chernozhukov2018double}, we can devise an estimation strategy that enables valid inference while using machine learning, adaptive and regularized estimators for the auxiliary regression and classification models required by the above identification strategy. One could also adapt and apply alternative adaptive estimation frameworks, that also allow for the use of machine learning, adaptive estimators for the auxiliary models, such as the longitudinal targeted minimum loss estimation approach \cite{rotnitzky2012improved,van2011targeted}, based on the latter representation of the target quantity.\footnote{We omit these details for succinctness and since the main estimation algorithm we propose in this work, which applies to both discrete and continuous treatments, appears in Section~\ref{sec:dsurr_semipar} under a semi-parametric assumption.}

However, when treatment $T$ is continuous and potentially multi-dimensional, then non-parametric estimation rates for the quantities described in Theorem~\ref{thm:non-param-id} are required, without further assumptions, and can be potentially very slow and prohibitive. Moreover, finite sample performance will heavily depend on the number of samples observed in a region around the baseline treatment level at each period $T_t=0$, which could be very small and impact statistical power. Since our main application of interest (return-on-investments) involves multiple continuous treatments, being able to handle this setting is of primary practical importance.

\section{Surrogates with Dynamic Adjustment: Semi-Parametric Identification and Inference}\label{sec:dsurr_semipar}

To achieve parametric estimation rates, with valid confidence intervals, and more stable finite sample performance for the quantities of interest, even in the case of multiple continuous treatments, we will make further semi-parametric assumptions on the data-generating processes, i.e. that some parts of the data-generating process adhere to a known parametric form. One option for instance, would be to assume that the regression functions $\E[f_{t+1,j}(S_t)\mid S_{t-1}, T_t]$ adhere to some known parametric form, e.g. $\theta^\top\phi(T_t, S_{t-1})$, for a known feature map $\phi$. However, this essentially assumes a fully parametric model: even in the absence of any treatment, the world behaves in a simple manner. Unlike, for instance, in the classic partially linear model, where the baseline behavior under no-treatment is left non-parametric and only the effect of the treatment on the baseline behavior is modeled in a parametric manner. Instead, we could only model how these regression functions behave as the treatment $T_t$ deviates from the baseline, i.e. 
\begin{align*}
\E[f_{t+1,j}(S_t)\mid S_{t-1}, T_t] - \E[f_{t+1,j}(S_t)\mid S_{t-1}, T_t=0] = \theta^\top \phi(T_t, S_{t-1}).
\end{align*}
Hence, analogous to the partially linear model, we are leaving un-modeled, the baseline behavior at each period, conditional on the past (the nested conditional mean). This is exactly the approach taken in the line of work on \emph{structural nested mean models (SNMMs)}, which we explore in the subsequent sections. As it will be shown below, the structural parameters $\theta$ of these nested means, can be identified without the need to estimate local non-parametric regression quantities of the form $\E[f_{t+1,j}(S_t)\mid S_{t-1}, T_t=0]$ and hence wont suffer from low sample sizes near the baseline treatment. Moreover, the target quantity of interest can be expressed in terms of these structural parameters $\theta$ of the SNMM.

\subsection{A Primer on Structural Nested Mean Models (SNMMs)}\label{ssec:ddsurr_wadj:semi_id}

The aforementioned semi-parametric assumption, can be expressed in terms of primitive counterfactual quantities, using the notion of a \emph{blip} function. 
\begin{definition}[Blip Functions] For any $\forall 1\leq t\leq j\leq M$ and any $\tau_t\in \mcT_t$, we define the blip function as:
\begin{align}
     \gamma_{t,j}(\tau_t, s_{t-1}) := \E\left[Y_j^{(\bar{T}_{t-1}, \tau_t, \underline{0}_{t+1})} - Y_j^{(\bar{T}_{t-1}, \underline{0}_{t})} \mid T_{t}=\tau_{t}, S_{t-1}=s_{t-1}\right],
\end{align}
which corresponds to the mean change in outcome $Y_j$, if we go to all units which received treatment $\tau_t$ at time $t$ and had observed surrogate state ${s}_{t-1}$ and we remove their last treatment, while subsequently continue with a zero treatment.
\end{definition}
These functions are a variant of what are known as the \emph{blip} functions \citep{Chakraborty2013,Robins2004} and can be shown to be non-parametrically identifiable, assuming sequential conditional exogeneity and a sequential analogue of the positivity (aka overlap) assumption \citep{Robins2004}. Theorem~3.1 of \citep{Robins2004} combines a telescoping sum argument and the sequential randomization condition to express counterfactual outcomes in terms of blip functions. We restate this result here, adapting it to our notation and our variant of sequential conditional exogeneity and blip function definition and providing a proof for completeness:
\begin{lemma}[Identification of Counterfactual Outcomes via Blip Functions]\label{lem:cntf-char-rho} For any treatment sequence $\tau$ and under the sequential conditional exogeneity assumption, the following identity holds about the counterfactual outcomes:
\begin{align}
    \forall 1\leq t\leq j\leq M: \E\left[Y_j^{(\bar{T}_{t-1}, \underline{0}_t)}\mid {S}_{t-1}, {T}_t={\tau}_t\right] = \E\left[Y_j - \sum_{q=t}^{j} \gamma_{q, j}({T}_q, {S}_{q-1}) \mid {S}_{t-1}, {T}_t={\tau}_t\right]\label{eqn:cntf-rho}
\end{align}
\end{lemma}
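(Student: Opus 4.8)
The plan is to run a telescoping-sum argument, carried out entirely at the level of conditional means so that it accommodates continuous, multi-dimensional treatments; it is the direct analogue of Theorem~3.1 of \citep{Robins2004}. The essential ingredient is the \emph{consistency} of the potential-outcome notation: for any period $q$, on the event $\{T_q = \tau_q\}$ the counterfactual $Y_j^{(\bar{T}_q, \underline{0}_{q+1})}$ coincides with $Y_j^{(\bar{T}_{q-1}, \tau_q, \underline{0}_{q+1})}$, and $Y_j^{(\bar{T}_j, \underline{0}_{j+1})} = Y_j$. Using this I would first write the pathwise identity
\begin{align*}
    Y_j^{(\bar{T}_{t-1}, \underline{0}_t)} = Y_j - \sum_{q=t}^{j} \left( Y_j^{(\bar{T}_q, \underline{0}_{q+1})} - Y_j^{(\bar{T}_{q-1}, \underline{0}_q)} \right),
\end{align*}
which holds almost surely because the sum on the right telescopes to $Y_j^{(\bar{T}_j, \underline{0}_{j+1})} - Y_j^{(\bar{T}_{t-1}, \underline{0}_t)} = Y_j - Y_j^{(\bar{T}_{t-1}, \underline{0}_t)}$. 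Taking $\E[\,\cdot\mid S_{t-1}, T_t = \tau_t]$ of both sides and using linearity, the claim reduces to showing, for every $q \in \{t, \ldots, j\}$,
\begin{align*}
    \E\!\left[ Y_j^{(\bar{T}_q, \underline{0}_{q+1})} - Y_j^{(\bar{T}_{q-1}, \underline{0}_q)} \,\Big|\, S_{t-1}, T_t = \tau_t \right] = \E\!\left[ \gamma_{q,j}(T_q, S_{q-1}) \,\Big|\, S_{t-1}, T_t = \tau_t \right].
\end{align*}

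For the per-period identity I would proceed in two moves. First, consistency together with the sequential conditional exogeneity assumption at period $q$ (which, applied with the fixed tails $\underline{\tau}_q = (\tau_q, \underline{0}_{q+1})$ and $\underline{\tau}_q = \underline{0}_q$, gives $Y_j^{(\bar{T}_{q-1}, \tau_q, \underline{0}_{q+1})} \indep T_q \mid S_{q-1}$ and $Y_j^{(\bar{T}_{q-1}, \underline{0}_q)} \indep T_q \mid S_{q-1}$, hence mean-independence of their difference from $T_q$ given $S_{q-1}$) shows that the blip function is exactly
\begin{align*}
    \gamma_{q,j}(T_q, S_{q-1}) = \E\!\left[ Y_j^{(\bar{T}_q, \underline{0}_{q+1})} - Y_j^{(\bar{T}_{q-1}, \underline{0}_q)} \,\Big|\, T_q, S_{q-1} \right].
\end{align*}
Second, I would apply the tower property, conditioning first on the full history $(\bar{S}_{q-1}, \bar{T}_q)$ (which refines $(S_{t-1}, T_t)$ since $q \ge t$) and then on $(S_{t-1}, T_t = \tau_t)$; the remaining step is to check that the history-conditional mean of the blip difference equals its $(S_{q-1}, T_q)$-conditional mean, i.e.\ that conditioning on the surrogate $S_{q-1}$ already screens the blip difference off from the earlier history $(\bar{S}_{q-2}, \bar{T}_{q-1})$. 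This last fact is what the sequential conditional exogeneity structure delivers — it can be read off, as in the text, from the single-world-intervention graph of Figure~\ref{fig:swig}, in which conditioning on $S_{q-1}$ blocks every path from the pre-$q$ variables to the counterfactual outcome. Substituting the two moves into the telescoped identity and summing over $q$ gives the claim.

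The bookkeeping in the per-period step is where I expect the real work to be: one must keep track of exactly which coordinates of the treatment sequence each counterfactual pins down, invoke consistency on the correct events to swap observed treatments for the matching counterfactual values, and make sure each use of sequential exogeneity is at the right period and with the right conditioning set. The telescoping, the linearity of conditional expectation, and the final summation are routine. An essentially equivalent route is reverse induction on $t$ from $j$ down to $1$: the base case $t=j$ follows immediately from consistency and the definition of $\gamma_{j,j}$, and the inductive step peels off $\gamma_{t,j}(T_t, S_{t-1})$ using consistency and sequential exogeneity at period $t$ and then applies the induction hypothesis after re-conditioning from $S_{t-1}$ onto $S_t$.
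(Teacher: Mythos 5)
Your proposal is correct and follows essentially the same route as the paper's proof: the pathwise telescoping decomposition of $Y_j - Y_j^{(\bar{T}_{t-1}, \underline{0}_t)}$ into single-period increments, consistency to identify each increment's $(T_q, S_{q-1})$-conditional mean with the blip $\gamma_{q,j}(T_q, S_{q-1})$, and the tower law combined with the sequential exogeneity/surrogacy structure to reduce the inner conditioning set from the history down to $(S_{q-1}, T_q)$. The only cosmetic difference is that you condition the inner expectation on the full history $(\bar{S}_{q-1}, \bar{T}_q)$ before screening off, whereas the paper conditions on $(S_{q-1}, T_q, S_{t-1}, T_t)$ directly; both rest on the same dynSurr + dynExog justification.
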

Intuitively, each term $\gamma_j$, removes from the outcome the \emph{blip effect} of the observed action $T_j$. Consider any target outcome $Y_j$. When we remove $\gamma(T_j, S_{j-1})$ from $Y_j$, then what remains is, in-expectation (and crucially, even conditional on $S_{j-1}, T_{j}$), equal to the counterfactual outcome, where the sample received zero-treatment at period $j$. Subsequently, removing $\gamma(T_{j-1}, S_{j-2})$ from this remnant, then what remains is in-expectation (and crucially, even conditional on $S_{j-2}, T_{j-1}$), equal to the counterfactual outcome, where the sample received zero-treatment at periods $\{j-1,j\}$, and so on and so forth.

Note that if we denote with $\gamma_{o, t,j}$ the blip functions of the observational setting, then the latter lemma immediately gives an alternative identification strategy to the one presented in Theorem~\ref{thm:non-param-id}, since we can write:
\begin{align*}
    g_{\adj}^*(S_1) := \E_o\left[\bar{Y}^{(\underline{0}_2)} \mid S_1\right]= \sum_{j=1}^M \E_o\left[Y_j - \sum_{q=t}^{j} \gamma_{o, q, j}({T}_q, {S}_{q-1}) \mid {S}_{1}\right]
\end{align*}
Thus if we can identify the blip functions, then the target quantity is also immediately identified without further assumptions.

One strategy for identifying the blip functions is to assume that they obey some known parametric form and then identify the parameters via a set of moment restrictions that the blip functions need to satisfy. In particular, by Lemma~\ref{lem:cntf-char-rho}, we know that the quantity $H_{t,j}(\theta^*)$ is equal in-expectation, and conditional on $S_{t-1}, T_t$ to the counterfactual outcome $Y_j^{(\bar{T}_{t-1}, \underline{0}_t)}$. However, this counterfactual outcome, by the sequential conditional exogeneity implied by the causal graph assumption, is independent of the treatment $T_t$, conditional on $S_{t-1}$, i.e. $Y_j^{(\bar{T}_{t-1}, \underline{0}_t)}\indep T_t\mid S_{t-1}$. Thus for any function $f$ of $T_{t}, S_{t-1}$:
\begin{align*}
    \E\left[Y_j^{(\bar{T}_{t-1}, \underline{0}_t)} f(T_t, S_{t-1})\mid S_{t-1}\right] = \E\left[Y_j^{(\bar{T}_{t-1}, \underline{0}_t)}\mid S_{t-1}\right] \E\left[f(T_t, S_{t-1})\mid S_{t-1}\right]
\end{align*}
Moreover, by the conditional mean equivalence of this counterfactual outcome and the ``remnant of the blip effects'' $H_{t,j}(\theta^*)$, the same conditional mean independence property needs to hold for $H_{t,j}(\theta^*)$.
\begin{align*}
    \E\left[H_{t,j}(\theta^*) f(T_t, S_{T-1})\mid S_{t-1}\right] = \E\left[H_{t,j}(\theta^*)\mid S_{t-1}\right] \E\left[f(T_t, S_{T-1})\mid S_{t-1}\right]
\end{align*} 
This leads to the following lemma:
\begin{lemma}[Moment Restrictions for Blip Functions]\label{lem:moment-restrictions} For any parameterization of the blip functions $\gamma_{t,j}({\tau}_t, {s}_t; \theta_{t,j})$, and for any $j\geq t$, if we let the random variable $H_{t,j}(\theta):= Y_j - \sum_{q=t}^{j} \gamma_{q,j}(T_q, {S}_{q-1}; \theta_{q,j})$, then the true parameter vector $\theta^*$ must satisfy the moment restrictions:
\begin{align}\label{eqn:snmm-moments}
    \forall 1\leq t\leq j, \forall f \in \mcF: \E\left[ H_{t,j}(\theta^*)\, \left(f(T_t, {S}_{t-1}) - \E[f(T_t, {S}_{t-1})\mid {S}_{t-1}]\right) \mid S_{t-1}\right] = 0
\end{align}
where $\mcF$ contains all functions mapping lag surrogates ${s}_{t-1}$ and current period treatments ${\tau}_t$ to $\R$.
\end{lemma}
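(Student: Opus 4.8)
The plan is to obtain the moment restrictions as a direct consequence of the counterfactual characterization in Lemma~\ref{lem:cntf-char-rho} combined with the sequential conditional exogeneity Assumption~\ref{ass:cond-ex}; the whole argument is an application of the tower property of conditional expectations.

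First I would fix $1\leq t\leq j\leq M$ and note that, when $\theta^*$ denotes the true parameter, $H_{t,j}(\theta^*) = Y_j - \sum_{q=t}^{j}\gamma_{q,j}(T_q,S_{q-1})$ with the $\gamma_{q,j}$ the true blip functions. Lemma~\ref{lem:cntf-char-rho} then yields the identity $\E[H_{t,j}(\theta^*)\mid S_{t-1}, T_t] = \E[Y_j^{(\bar{T}_{t-1}, \underline{0}_t)}\mid S_{t-1}, T_t]$. The essential point here --- and the one place where a little care is needed --- is that this is an equality of conditional means given \emph{both} $S_{t-1}$ and $T_t$, not merely given $S_{t-1}$; only this stronger version lets one move a factor $f(T_t,S_{t-1})$ through the equality in the next step.

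Next I would invoke sequential conditional exogeneity (Assumption~\ref{ass:cond-ex} applied at time index $t-1$, with baseline continuation $\underline{0}_t$), which gives $Y_j^{(\bar{T}_{t-1},\underline{0}_t)}\indep T_t\mid S_{t-1}$. For any $f\in\mcF$, applying the tower property with respect to $\sigma(S_{t-1},T_t)$ then produces
\begin{align*}
\E\!\left[H_{t,j}(\theta^*)\, f(T_t,S_{t-1}) \mid S_{t-1}\right] &= \E\!\left[f(T_t,S_{t-1})\, \E[H_{t,j}(\theta^*)\mid S_{t-1}, T_t] \mid S_{t-1}\right]\\
&= \E\!\left[f(T_t,S_{t-1})\, \E\!\left[Y_j^{(\bar{T}_{t-1},\underline{0}_t)}\mid S_{t-1}, T_t\right] \mid S_{t-1}\right]\\
&= \E\!\left[f(T_t,S_{t-1})\, Y_j^{(\bar{T}_{t-1},\underline{0}_t)} \mid S_{t-1}\right]\\
&= \E[f(T_t,S_{t-1})\mid S_{t-1}]\,\E\!\left[Y_j^{(\bar{T}_{t-1},\underline{0}_t)}\mid S_{t-1}\right],
\end{align*}
where the third line pulls the $\sigma(S_{t-1},T_t)$-measurable factor $f$ inside and the fourth line uses the conditional independence. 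Specializing to $f\equiv 1$ gives $\E[H_{t,j}(\theta^*)\mid S_{t-1}] = \E[Y_j^{(\bar{T}_{t-1},\underline{0}_t)}\mid S_{t-1}]$, and substituting this back eliminates the counterfactual and yields $\E[H_{t,j}(\theta^*)\,f(T_t,S_{t-1})\mid S_{t-1}] = \E[f(T_t,S_{t-1})\mid S_{t-1}]\,\E[H_{t,j}(\theta^*)\mid S_{t-1}]$. Rearranging this into $\E[H_{t,j}(\theta^*)\,(f(T_t,S_{t-1}) - \E[f(T_t,S_{t-1})\mid S_{t-1}])\mid S_{t-1}] = 0$ is exactly~\eqref{eqn:snmm-moments}.

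Since the argument is purely a sequence of conditioning manipulations, there is no substantive obstacle; the only things to watch are (i) using the conditional-on-$(S_{t-1},T_t)$ form of Lemma~\ref{lem:cntf-char-rho} rather than its marginal form, and (ii) a mild integrability condition ensuring $\E[f(T_t,S_{t-1})\mid S_{t-1}]$ and the products above are well defined for $f\in\mcF$, which is implicit in the moment conditions maintained on the class $\mcF$.
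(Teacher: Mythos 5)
Your proposal is correct and follows essentially the same route as the paper's proof: apply the tower property with respect to $\sigma(S_{t-1},T_t)$, use Lemma~\ref{lem:cntf-char-rho} to replace $\E[H_{t,j}(\theta^*)\mid S_{t-1},T_t]$ by the counterfactual conditional mean, invoke sequential conditional exogeneity to drop $T_t$ from the conditioning, and factor out the mean-zero centered term. The only cosmetic difference is that the paper centers $f$ at the outset (working with $\bar f = f - \E[f\mid S_{t-1}]$) whereas you center at the end by specializing to $f\equiv 1$ and substituting back, which is the same calculation.
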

Hence, if we have found the right $\theta^*$, then the infinite set of conditional moment restrictions in Equation~\eqref{eqn:snmm-moments} need to be satisfied. Lemma~\ref{lem:moment-restrictions} is an adaptation of Theorem~3.2 of \citep{Robins2004} to our notation and we include its proof for completeness. Methods that estimate the structural parameters by utilizing such conditional mean independence moment restrictions are typically referred to in the literature on dynamic treatment effects as $g$-estimation methods.\footnote{$g$-estimation is a different term than $g$-computation, which typically refers to using the $g$-formula for dynamic treatment effects and estimating effects in a plug-in manner by estimation all conditional densities, and conditional means.}

One approach to operationalize Lemma~\ref{lem:moment-restrictions} would be to perform a grid search over some discretization of the parameter space and check that this set of conditional moment restrictions holds. In the full generality of structural nested mean models, without any further assumptions on the blip functions, such an exhaustive grid search could be inevitable, and renders the method impractical from a computational perspective.

For this reason, a typical approach in structural nested mean models, to render the methodology practical, is to assume a linear parametric form for the blip functions, leading to the class of linear structural nested mean models.\footnote{We note that the literature on $g$-estimation has also analyzed other forms of generalized linear parametric forms and provided practical methods (see e.g. \cite{Robins2004,Chakraborty2013,vansteelandt2014structural}).}
\begin{assumption}[Linear SNMM]\label{ass:linear-snmm}
The blip functions admit a linear parametric form:\footnote{We could also allow more flexibility and allow the feature map $\phi_{t}$ to also depend on the target period $j$, i.e. $\phi_{t,j}$, but it makes exposition more cumbersome.}
\begin{align}
    \gamma_{t,j}({\tau}_t, {s}_{t-1};\theta_{t,j}) := \theta_{t,j}^\top\phi_{t}(\tau_t, {s}_{t-1})  
\end{align}
for a known $\fdim$-dimensional feature vector maps $\phi_{t}$, satisfying $\phi_{t}(0, {s}_{t-1})=0$, such that for some $\theta_{t,j}^*$, $\gamma_{t,j}(\cdot, \cdot;\theta_{t,j}^*)=\gamma_{t,j}(\cdot,\cdot)$.
\end{assumption}
Assuming that the expected conditional covariance matrix $\E[\Cov(\phi_t(T_t, S_{t-1})\mid S_{t-1})]$ of the feature map $\phi_t(T_t, S_{t-1})$ conditional on $S_{t-1}$, is full rank, then we can uniquely identify $\theta^*$ by finding a parameter vector $\theta$ that satisfies a small subset of the moment restrictions of the form:
\begin{align}\label{eqn:dyn-params-moments}
    \forall 1\leq t\leq j \leq M: \E[ H_{t,j}(\theta)\, \left(\phi_{t}({T}_t, {S}_{t-1}) - \E[\phi_{t}({T}_t, {S}_{t-1})\mid {S}_{t-1}]\right)] = 0
\end{align}
What is most appealing about linear SNMMs is that the latter system of moment equations has a recursive closed form solution. In particular, we can express parameter $\theta_{t,j}$ as a function of parameters $\theta_{\tau,j}$ for $\tau>t$, in a closed form manner:
\begin{align*}
    \theta_{t,j} = \E\left[\Cov(\phi_t(T_t, S_{t-1})\mid S_{t-1})\right]^{-1} \E\left[\phi(T_t, S_{t-1})\,\left(Y_j - \sum_{\tau=t+1}^j \theta_{\tau,j}^\top \phi(T_\tau, S_{\tau-1})\right)\right]
\end{align*}
This immediately portrays the practicality of the method and the sufficiency of this subset of moment restrictions.

\subsection{Semi-parametric identification of Long-Term Effects via Dynamically Adjusted Surrogates}

We will assume that both the data generating processes that generated the observational dataset and the experimental dataset obey a SNNM model with linear blip functions. Albeit, we allow both the treatments to change in between the two environments, as well as the blip function parameterizations to be different. We will denote with $\gamma_{e,t,j}, \gamma_{o,t,j}$, the blip functions in the two settings, with $\theta_{e,t,j},\theta_{o,t,j}$ the structural parameters of the blip functions in the two settings and with $\phi_{e,t},\phi_{o,t}$ the corresponding feature maps. 

We start by presenting an identification argument for the target quantity of interest, as a function of the blip functions $\gamma_{o,t,j}$ in the observational dataset. Subsequently, in Theorem~\ref{thm:dyn-ortho}, we combine it with a separate identification argument for the structural parameters $\theta_{o,t,j}$ of the blip functions, to arrive at a complete identification strategy.

\begin{theorem}[Semi-parametric Identification in Dynamic Regime]\label{thm:dyn-id}
Suppose that the data generating processes in the experimental and observational setting adhere to the causal graph in Figure~\ref{fig:simplified_cg} and that the blip functions in the experimental setting satisfy Assumption~\ref{ass:linear-snmm}. Moreover, suppose that the two settings $\{e,o\}$ satisfy Assumption~\ref{ass:dynIR}.
Consider the dynamically adjusted outcomes and the dynamically adjusted surrogate index, from the observational setting:
\begin{align}\label{eqn:defns-main-thm}
    Y_j^{o,\adj} :=~& Y_j - \sum_{t=2}^{j} \gamma_{o, t, j}({T}_t, {S}_{t-1}) & \bar{Y}^{o,\adj} :=~& \sum_{j=1}^M Y_j^{o,\adj} & g_{\adj}^*(S_1) :=~& \E_o[\bar{Y}^{o,\adj}\mid S_1]
\end{align}
Denote the residual adjusted surrogate index and the residual feature map with:
\begin{align}
    \tilde{g}_{\adj}^*(S_1, S_0) :=~& g_{\adj}^*(S_1)  - \E_e[g_{\adj}^*(S_1)\mid S_0] &
    \Phi_{1} :=~& \phi_{e,1}(T_1, S_0) &
    \tilde{\Phi}_1 :=~& \Phi_1 - \E_e[\Phi_1\mid S_0]
\end{align}
and assume that $\E_e\left[\tilde{\Phi}_1\, \tilde{\Phi}_1^\top\right]$ is full rank. Then the target quantity of interest can be expressed as:
\begin{align}\label{eqn:dyn-target-param}
\tau(t_1, t_0) = \theta_0^\top\, \E[\phi_{e,1}(t_1, S_0) - \phi_{e,1}(t_0,S_0)],
\end{align}
where parameter $\theta_0$ can be represented by any of the following estimands:
\begin{align*}
    \theta_0 =~& \E_e[\tilde{\Phi}_1\, \tilde{\Phi}_1^\top]^{-1} \E_e[\tilde{\Phi}_1\, \tilde{g}_{\adj}^*(S_1, S_0)] \tag{surrogate index rep.}\\
    \theta_0 =~& \E_e[\tilde{\Phi}_1 \tilde{\Phi}_1^\top]^{-1} 
    \E_o\left[\frac{\Pr(e\mid S_1)}{\Pr(o\mid S_1)} \frac{\Pr(o)}{\Pr(e)} \E_e[\tilde{\Phi}_1\mid S_{1}]\, \bar{Y}^{o,\adj} \right] \tag{surrogate score rep.}\\
    \theta_0 =~& \E_e[\tilde{\Phi}_1 \tilde{\Phi}_1^\top]^{-1} \bigg(\E_e[\tilde{\Phi}_1\, \tilde{g}_{\adj}^*(S_1,S_0)] + \E_o\bigg[\frac{\Pr(e\mid S_1)}{\Pr(o\mid S_1)} \frac{\Pr(o)}{\Pr(e)}\, \E_e[\tilde{\Phi}_1\mid S_{1}] (\bar{Y}^{o,\adj} - g_{\adj}^*(S_1))\bigg]\bigg) \tag{orthogonal rep.}
\end{align*}
\end{theorem}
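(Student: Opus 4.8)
The plan is to reduce the dynamic, semi-parametric problem (restricted to the period-$1$ treatment) back to the static identification result of Theorem~\ref{thm:id}. The key observation is that the triple consisting of the long-term outcome under zero future treatments, the dynamically adjusted surrogate index, and the period-$1$ feature map plays exactly the role of $(\bar{Y}, g_0, \phi)$ in Section~\ref{ssec:ddsurr_noadj}: in the experimental sample the analogue of $\bar{Y}$ is $\bar{Y}^{(\underline{0}_2)}$, in the observational sample it is the \emph{observed} adjusted outcome $\bar{Y}^{o,\adj}$, the analogue of $g_0$ is $g_{\adj}^*$, and the analogue of $\phi$ is $\phi_{e,1}$. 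So I would first verify that the three hypotheses \eqref{eqn:ass:plr}, \eqref{eqn:ass:ir} and \eqref{eqn:ass:meanid} hold for these substituted objects, with $\theta_0 := \sum_{j=1}^M \theta_{e,1,j}$, and then invoke Theorem~\ref{thm:id} verbatim.

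For \eqref{eqn:ass:ir}: apply Lemma~\ref{lem:cntf-char-rho} in the observational process at $t=2$, which gives, for each $j$, $\E_o[Y_j^{(\bar{T}_1, \underline{0}_2)}\mid S_1, T_2] = \E_o[Y_j - \sum_{q=2}^{j}\gamma_{o,q,j}(T_q, S_{q-1})\mid S_1, T_2] = \E_o[Y_j^{o,\adj}\mid S_1, T_2]$; integrating out $T_2\mid S_1$ and summing over $j$ yields $\E_o[\bar{Y}^{(\underline{0}_2)}\mid S_1] = \E_o[\bar{Y}^{o,\adj}\mid S_1] = g_{\adj}^*(S_1)$. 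Combining with the dynamic invariance Assumption~\ref{ass:dynIR} gives $g_{\adj}^*(S_1) = \E_e[\bar{Y}^{(\underline{0}_2)}\mid S_1]$, which is precisely the invariance relation for the substituted outcomes. For \eqref{eqn:ass:meanid}: the conditional mean independence $\bar{Y}^{(\underline{0}_2)}\indep_{\mean}(T_1,S_0)\mid S_1$ is the $t=1$ instance of sequential surrogacy (Assumption~\ref{ass:dyn-surrogacy}), obtained by summing its per-period statements over $j$, and it holds under the causal graph of Figure~\ref{fig:simplified_cg} (cf.\ Theorem~\ref{thm:non-param-id}).

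For \eqref{eqn:ass:plr}: using consistency of counterfactuals, $\E_e[\bar{Y}^{(\underline{0}_2)}\mid T_1=\tau_1, S_0] = \E_e[\bar{Y}^{(\tau_1,\underline{0}_2)}\mid T_1=\tau_1, S_0]$; decomposing $\bar{Y}^{(\tau_1,\underline{0}_2)} = \bar{Y}^{(\underline{0}_1)} + (\bar{Y}^{(\tau_1,\underline{0}_2)} - \bar{Y}^{(\underline{0}_1)})$, the difference term contributes $\sum_{j=1}^M \gamma_{e,1,j}(\tau_1, S_0) = \theta_0^\top\phi_{e,1}(\tau_1,S_0)$ directly by the definition of the period-$1$ blip together with the linear SNMM Assumption~\ref{ass:linear-snmm} and $\phi_{e,1}(0,\cdot)=0$, while the baseline term equals $b_0(S_0):=\E_e[\bar{Y}^{(\underline{0}_1)}\mid S_0]$, independent of $\tau_1$, by the exogeneity of $T_1$ given $S_0$ encoded in the causal graph. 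Hence $\E_e[\bar{Y}^{(\underline{0}_2)}\mid T_1, S_0] = \theta_0^\top\phi_{e,1}(T_1,S_0) + b_0(S_0)$, i.e.\ \eqref{eqn:ass:plr} with $\Phi_1 = \phi_{e,1}(T_1,S_0)$; taking $\E_e$ of the $\tau_1\in\{t_1,t_0\}$ contrast (and using the relevant overlap so the conditioning is well-defined) gives the $g$-formula identity $\tau(t_1,t_0) = \theta_0^\top\E[\phi_{e,1}(t_1,S_0) - \phi_{e,1}(t_0,S_0)]$, which is \eqref{eqn:dyn-target-param}. With \eqref{eqn:ass:plr}, \eqref{eqn:ass:ir}, \eqref{eqn:ass:meanid} in place and $\E_e[\tilde{\Phi}_1\tilde{\Phi}_1^\top]$ full rank, the three representations for $\theta_0$ then follow directly from Theorem~\ref{thm:id}, after replacing $\bar{Y}$ by $\bar{Y}^{o,\adj}$ inside every $\E_o[\cdot]$, $g_0$ by $g_{\adj}^*$, and $\tilde{g}_0$, $\Phi_1$, $\tilde{\Phi}_1$ by their adjusted analogues.

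I expect the main obstacle to be the bookkeeping with counterfactuals in the \eqref{eqn:ass:plr} step: showing that the period-$1$ blip $\gamma_{e,1,j}$, which is defined as a conditional mean given $T_1$ and $S_0$, combines cleanly with the non-parametric baseline term $\E_e[\bar{Y}^{(\underline{0}_1)}\mid S_0]$ into a genuine partially linear form requires careful use of consistency and of $T_1$-exogeneity, and on the observational side one must check that the nested blip subtraction defining $\bar{Y}^{o,\adj}$ telescopes exactly to the counterfactual $\bar{Y}^{(\underline{0}_2)}$ through Lemma~\ref{lem:cntf-char-rho}. Everything downstream of these two identities is a mechanical transcription of the proof of Theorem~\ref{thm:id}, so no new estimation-theoretic ideas are needed.
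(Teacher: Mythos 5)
Your proposal is correct and follows essentially the same route as the paper: the paper's proof likewise establishes Equation~\eqref{eqn:dyn-target-param} from the linear blip definition plus sequential exogeneity, then proves an intermediate lemma showing that the adjusted outcomes satisfy the exact analogues of \eqref{eqn:ass:plr}, \eqref{eqn:ass:ir} and \eqref{eqn:ass:meanid} (via Lemma~\ref{lem:cntf-char-rho} at $t=2$ and Assumption~\ref{ass:dynIR}), and finally re-runs the derivations of Theorem~\ref{thm:id} with $\bar{Y}^{d,\adj}$ in place of $\bar{Y}$. The one spot where your sketch is slightly looser than the paper is the mean-independence step: what must be verified is $\bar{Y}^{d,\adj}\indep_{\mean}(T_1,S_0)\mid S_1$ for the \emph{observed} adjusted outcomes, which requires upgrading the Lemma~\ref{lem:cntf-char-rho} equivalence from conditioning on $(S_1,T_2)$ to conditioning on $(S_1,T_2,T_1,S_0)$ via the Markov structure before the counterfactual's surrogacy property can be transferred --- exactly the bookkeeping you flag as the main obstacle, and which the paper carries out explicitly.
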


\begin{figure}[htpb]
    \centering
    \includegraphics[scale=.47]{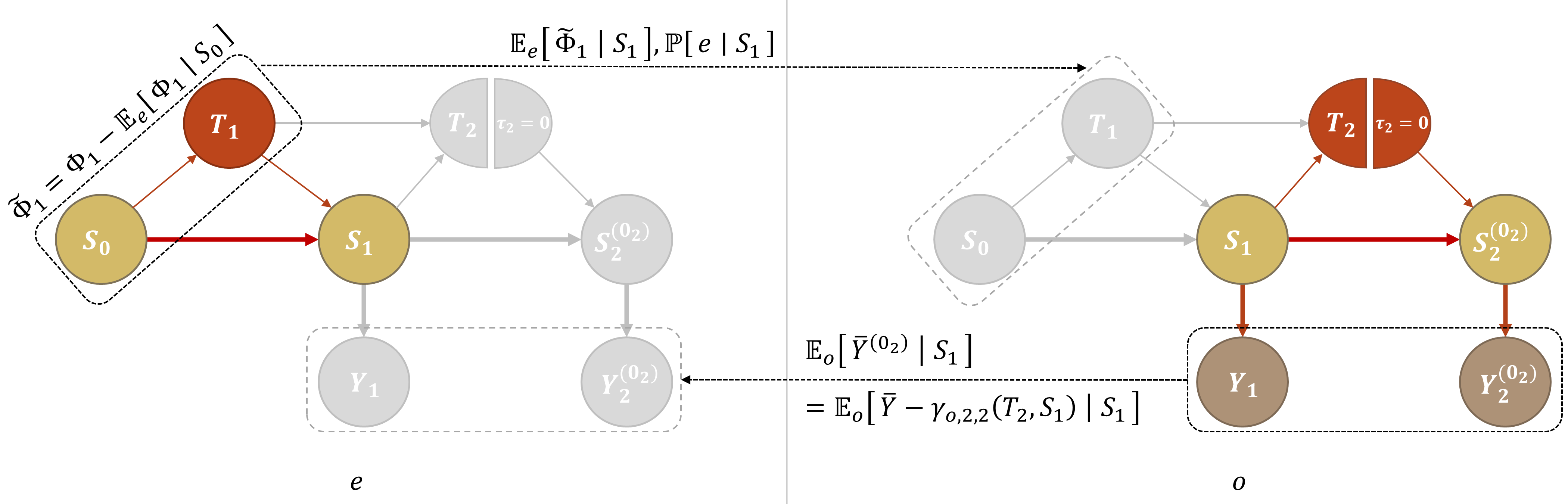}
    \caption{Identification strategies via surrogates in dynamic treatment regime (c.f. Theorem~\ref{thm:dyn-id}) for the special case of two periods. Grey elements are unobserved and dashed arrows in between the two settings portray the quantities that are being learned in one setting and transferred to the other so as to enable identification of the target quantity.}
    \label{fig:dyn_id}
\end{figure}

\subsection{Semi-Parametric Orthogonal Moment Restrictions}\label{ssec:ddsurr_wadj:moments}
One caveat of Theorem~\ref{thm:dyn-id} is that $\bar{Y}^{o,\adj}$ and $g_{\adj}^*$ are defined in terms of the dynamic effects $\{\theta_{o,t,j}\}_{2\leq t\leq j\leq M}$ of the observational setting, which are parameters that also need to be estimated. In particular, if we denote with $\Phi_{o,t} := \phi_{o,t}(T_t, S_{t-1})$, then we can write:
\begin{align*}
    \theta_{o,t} :=~&\sum_{j=t}^M \theta_{o,t,j} & 
    \bar{Y}^{o,\adj} =~&  \bar{Y} - \sum_{t=2}^M \theta_{o, t}^\top \Phi_{o,t} &
    g_{\adj}^*(S_1) =~& \E_o\left[\bar{Y}\mid S_1\right] - \sum_{t=2}^M \theta_{o, t}^\top \E\left[\Phi_{o,t}\mid S_1\right]
\end{align*}
However, we can combine the Neyman orthogonal moment equations developed in \cite{lewis2020doubledebiased} (which are an orthogonal variant of the moment equations in Equation~\eqref{eqn:dyn-params-moments} and a variant of the doubly robust version of this equation introduces by \cite{Robins2004}), with the Neyman orthogonal moment equation from Theorem~\ref{thm:dyn-ortho} to arrive at an overall Neyman orthogonal strategy for simultaneously identifying $\theta_0$ and these auxiliary dynamic effects. In particular, the parameters $\theta_{o,t}$ are identified recursively by the moment restrictions:
\begin{align}
    \E_o\left[ \left(\bar{Y}_t - \E_o[\bar{Y}_t\mid S_{t-1}] - \sum_{\tau=t}^M \theta_{o,\tau}^{\top}\, \left(\Phi_{o,\tau} - \E_o\left[\Phi_{o,\tau} \mid S_{t-1}\right]\right)\right)\, \left(\Phi_{o,t} - \E_o\left[\Phi_{o,t} \mid S_{t-1}\right]\right) \right]
\end{align}
where $\bar{Y}_t := \sum_{j=t}^M Y_j$.

Collecting all the aforementioned discussion, we find that in order to identify the structural parameters of interest we need to estimate the following auxiliary nuisance models:
\begin{definition}[Nuisance Functions]\label{defn:nuisances}  For $2\leq t\leq \tau\leq M$, denote with:
\begin{align*}
    \Phi_{1} :=~& \phi_{e,1}(T_1, S_0) &
    \Phi_{o, t} :=~& \phi_{o,t}(T_2, S_{t-1}) & 
    \tilde{\Phi}_1 :=~& \Phi_1 - \E_e[\Phi_1\mid S_0]\\
    g^*(S_1) :=~& \E_o\left[\bar{Y}\mid S_1\right] &
    g_t^*(S_1) :=~& \E_o\left[\Phi_{o,t}\mid S_1\right] &
    q^*(S_1) :=~& \frac{\Pr(e\mid S_1)}{\Pr(o\mid S_1)}\, \E_e[\tilde{\Phi}_1\mid S_{1}] \\
    h^*(S_0) :=~& \E_e[g^*(S_1)\mid S_0] &
    p_{e, t}^*(S_0) :=~& \E_e\left[g_t^*(S_1)\mid S_0\right] &
    p_{e, 1}^*(S_0) :=~& \E_e[\Phi_1\mid S_0]\\
    \bar{Y}_{t} :=~& \sum_{j=t}^M Y_j &
    b_{o, t}^*(S_{t-1}) :=~& \E_o\left[\bar{Y}_t \mid S_{t-1}\right] &
    p_{o, \tau, t}^*(S_{t-1}) :=~& \E_o\left[\Phi_{o,\tau}\mid S_{t-1}\right]
\end{align*}
Then we define as $f=\{g, q, h, p_{e,1}, \{g_t, p_{e,t}, b_{o, t}, p_{o,t,\tau}\}_{2\leq \tau\leq t\leq M}\}$ the nuisance functions for identifying the target parameter $\tau(t_1, t_0)$ and with $f^*$ their true values.\footnote{We note that the nuisance functions $g_t$ and $p_{o,t,2}$ are modeling the exact same quantity. However, for notational convenience we overload notation and give two symbols that will be used when the nuisance function enters at different parts of the moment equations.}
\end{definition}

Note that all the nuisance functions $f$ are estimable from the observed data. All nuisance functions except $q$ correspond to a regression problem and $q$ can be decomposed into a classification problem for estimating the odds ratio $\frac{\Pr(e\mid S_1)}{1 - \Pr(e\mid S_1)}$ and a regression problem for estimate $\E_e\left[\tilde{\Phi}_1\mid S_1\right]$. Given these nuisance models we can define the parameter $\theta_0$ of interest as the solution to a set of moment restrictions that are Neyman orthogonal with respect to all the nuisance functions. To state our theorem we first define the vector of orthogonal scores.
\begin{definition}[Orthogonal Scores]\label{defn:scores}
Let $\theta = (\theta_0; \theta_{o,2};\ldots; \theta_{o, M})$ denote the target parameters of interest and $Z$ denote the vector of all random variables. The \emph{orthogonal score vector} $\psi=(\psi_1;\ldots; \psi_M)$ for the target parameter is defined as: 
\begin{align*}
    \psi_{1}(Z; \theta, f) :=~& \psi_{e,1}(Z; \theta, f) + \psi_{e,1}(Z; \theta, f)\\
    \psi_{e,1}(Z; \theta, f) :=~& 1\{e\}\, (g(S_1) - h(S_0) - \sum_{t=2}^M \theta_{o,t}^\top \left(g_t(S_1) - p_{e,t}(S_0)\right) - \theta_0^\top (\Phi_1 - p_{e,1}(S_0)))\cdot (\Phi_1 - p_{e,1}(S_0))\\
    \psi_{o,1}(Z; \theta, f) :=~& 1\{o\}\, q(S_1) \left(\bar{Y} - g(S_1) - \sum_{t=2}^M \theta_{o,t}^\top (\Phi_{o,t} - g_t(S_1))\right)\\
    \forall t\in [2, M]: \psi_{t}(Z; \theta, f) :=~& 1\{o\} \left(\bar{Y}_{t} - b_{o,t}(S_{t-1}) - \sum_{\tau=t}^{M} \theta_{o,\tau}^\top\left(\Phi_{o,\tau} - p_{o,\tau, t}(S_{t-1})\right)\right)\, \left(\Phi_{o,t} - p_{o,\tau, t}(S_{t-1})\right)
\end{align*}
\end{definition}

We are now ready to state our main semi-parametric identification theorem via Neyman orthogonal moment restrictions:

\begin{theorem}[Semi-Parametric Dynamic Orthogonal Moment Restrictions]\label{thm:dyn-ortho}
Let $\theta = (\theta_0; \theta_{o,2};\ldots; \theta_{o, M})$ denote the target parameters of interest and $\theta^*$ their true values. Then $\theta^*$ is a solution to the system of moment equations:
\begin{align}
    m(\theta; f^*):=  \E\left[\psi(Z; \theta, f^*)\right] = 0
\end{align}
Moreover, if for all $t\in [2,M]$:
\begin{align}
    \E_o\left[\Cov_o(\Phi_{o,t}, \Phi_{o,t}\mid S_{t-1})\right] \succ~& 0 
    &
    \E_e\left[\Cov_e(\Phi_{1}, \Phi_{1}\mid S_{0})\right] \succ~& 0 
\tag{average overlap}
\end{align}
then $\theta^*$ is the unique solution. Finally, the moment $m(\theta; f)$ satisfies the Neyman orthogonality property with respect to $f$.
\end{theorem}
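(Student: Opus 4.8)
The plan is to establish three things in sequence: (i) that $\theta^*$ solves $m(\theta;f^*)=0$; (ii) uniqueness under the average overlap conditions; and (iii) Neyman orthogonality of $m$ with respect to $f$. For (i), I would treat the three blocks of scores separately. The blocks $\psi_t$ for $t\in[2,M]$ are exactly the orthogonalized $g$-estimation moments for the observational linear SNMM: starting from the conditional moment restrictions of Lemma~\ref{lem:moment-restrictions} applied to the observational blip functions, I would sum over target periods $j\ge t$ to obtain the restriction on $\bar{Y}_t$, then show that subtracting the conditional means $b_{o,t}^*(S_{t-1})$ and $p_{o,\tau,t}^*(S_{t-1})$ leaves the moment value unchanged at $\theta^*$ because $\Phi_{o,t}-\E_o[\Phi_{o,t}\mid S_{t-1}]$ is mean-zero conditional on $S_{t-1}$ (this is the standard Robinson-style partialling-out residualization). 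For the block $\psi_1 = \psi_{e,1}+\psi_{o,1}$, I would invoke Theorem~\ref{thm:dyn-id}: the definitions of $Y_j^{o,\adj}$, $\bar{Y}^{o,\adj}$ and $g_{\adj}^*$ there coincide with $\bar{Y}-\sum_{t\ge2}\theta_{o,t}^\top\Phi_{o,t}$ and $\E_o[\bar Y\mid S_1]-\sum_{t\ge2}\theta_{o,t}^\top\E_o[\Phi_{o,t}\mid S_1]$ once we plug the linear parameterization into the blip functions, so $\psi_{e,1}$ at $\theta^*$ is the orthogonal (Robinson-residualized) version of the surrogate-index representation and $\psi_{o,1}$ at $\theta^*$ is the surrogate-score correction term; their sum vanishing at $\theta^*$ is exactly the orthogonal representation of $\theta_0$ from Theorem~\ref{thm:dyn-id}, rearranged. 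Here one must use \eqref{ass:dynIR} to identify $\E_o[\bar Y^{o,\adj}\mid S_1]$ with its experimental counterpart, and the overlap density condition from Assumption~\ref{ass:dynPos} (entering through $q^*$) to justify the importance-weight change of measure.

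For (ii), uniqueness, I would solve the system recursively from $t=M$ down to $t=2$ and then $t=1$. Fixing all $\theta_{o,\tau}$ with $\tau>t$, the equation $\E[\psi_t]=0$ is linear in $\theta_{o,t}$ with coefficient matrix $\E_o[\Cov_o(\Phi_{o,t},\Phi_{o,t}\mid S_{t-1})]$, which is positive definite by average overlap, hence invertible — this pins down $\theta_{o,t}$ uniquely given the later parameters, and by backward induction all of $\theta_{o,2},\dots,\theta_{o,M}$ are unique. Given these, $\E[\psi_1]=0$ is linear in $\theta_0$ with coefficient matrix $\E_e[\tilde\Phi_1\tilde\Phi_1^\top]=\E_e[\Cov_e(\Phi_1,\Phi_1\mid S_0)]$, again positive definite by average overlap, so $\theta_0$ is determined uniquely. (The cross-terms involving $q$ and the $g_t$ do not affect this coefficient matrix because they are constants in $\theta_0$ once the $\theta_{o,t}$ are fixed.)

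For (iii), Neyman orthogonality, I would differentiate $m(\theta^*; f^*+t(f-f^*))$ in $t$ at $t=0$ and check the Gateaux derivative vanishes in every nuisance direction. The key structural facts to exploit are: each nuisance enters $\psi$ either as a conditional-mean term that is multiplied by a residual which is mean-zero conditional on the same sigma-field (so the perturbation's first-order effect integrates to zero), or in a product structure where the companion factor is itself zero-mean at the truth. Concretely, perturbing $p_{e,1}$, $p_{e,t}$, $p_{o,\tau,t}$, $h$, $b_{o,t}$ produces terms multiplied by $\E_e[\Phi_1-p_{e,1}^*(S_0)\mid S_0]=0$ or by the analogous conditional-mean-zero residuals in the observational blocks, or (for the perturbations that hit the multiplier factor $(\Phi_1-p_{e,1}^*)$ or $(\Phi_{o,t}-p_{o,\tau,t}^*)$) by the residual of the outcome expression, which is mean-zero conditional on $S_0$ resp. $S_{t-1}$ at $\theta^*$ by the moment identities from part (i). Perturbing $g$ affects $\psi_{e,1}$ and $\psi_{o,1}$ in offsetting ways: the $+ g(S_1)$ in $\psi_{e,1}$ and the $-g(S_1)$ in $\psi_{o,1}$ are arranged precisely so that, after taking $\E_e[\cdot\mid S_1]$ inside and using the definition $q^*(S_1)=\frac{\Pr(e\mid S_1)}{\Pr(o\mid S_1)}\E_e[\tilde\Phi_1\mid S_1]$ together with the change-of-measure $\frac{\Pr(e\mid S_1)}{\Pr(o\mid S_1)}\frac{\Pr(o)}{\Pr(e)}$, the two first-order contributions cancel — this is the familiar ``orthogonal representation = surrogate index + surrogate score correction'' cancellation, here in the adjusted-outcome form. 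Similarly the $g_t$ perturbations cancel between $\psi_{e,1}$ (where $g_t$ appears as $-\theta_{o,t}^\top(g_t(S_1)-p_{e,t}(S_0))$) and $\psi_{o,1}$ (where it appears as $+\theta_{o,t}^\top(\Phi_{o,t}-g_t(S_1))$ inside the $q$-weighted term), using the same change-of-measure identity. Finally, perturbing $q$ multiplies $(\bar Y - g^*(S_1) - \sum_t\theta_{o,t}^\top(\Phi_{o,t}-g_t^*(S_1)))$, whose conditional mean given $S_1$ is zero at $\theta^*$ by \eqref{ass:dynIR} and the definitions of $g^*,g_t^*$.

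I expect the main obstacle to be bookkeeping the $g$- and $g_t$-direction cancellations between the experimental block $\psi_{e,1}$ and the $q$-weighted observational block $\psi_{o,1}$: one has to carefully push the perturbation through the nested conditional expectations, insert the correct $\Pr(e)/\Pr(o)$ and $\Pr(e\mid S_1)/\Pr(o\mid S_1)$ factors, and verify that $\E_e[\tilde\Phi_1\mid S_1]$ (hidden inside $q^*$) lines up with $\E_e[(\Phi_1-p_{e,1}^*(S_0))\mid S_1]=\E_e[\tilde\Phi_1\mid S_1]$ appearing from the $\psi_{e,1}$ side. Everything else is a routine application of iterated expectations and the mean-zero-residual trick, of the same flavor as the proof of Theorem~\ref{thm:ortho}, just replicated across the $M-1$ recursive observational blocks and with $\bar Y$ replaced by the adjusted outcome $\bar Y^{o,\adj}$.
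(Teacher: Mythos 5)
Your proposal is correct and follows essentially the same route as the paper: the $\psi_t$ blocks are verified via Lemma~\ref{lem:moment-restrictions} summed over target periods, the $\psi_1$ block via the orthogonal representation of Theorem~\ref{thm:dyn-id}, and orthogonality via direction-by-direction Gateaux derivatives with the $g$/$g_t$ cancellations handled by the change-of-measure identity (the paper's Lemma~\ref{lem:transform}). Your backward-recursive uniqueness argument is in fact spelled out more explicitly than in the paper's own proof, which only records the block upper-triangular Jacobian with positive-definite diagonal blocks later, in the proof of Theorem~\ref{thm:dyn-estimation}.
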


\subsection{Semi-Parametric Adaptive Estimation and Inference}\label{ssec:ddsurr_wadj:inference}

Given that we have formulated the target structural parameters of interest as the solution to a vector of Neyman orthogonal moment equations, we can now easily transfer this identification argument to an estimation strategy, by invoking standard approaches. In particular, our estimation strategy will first estimate and apply the nuisance functions in a cross-fitting manner and subsquently solve a plug-in empirical analogue of the moment equations. Algorithm~\ref{alg:main} provides a formal description of the process.

\begin{algorithm}[htpb]
   \caption{Double/Debiased Machine Learning for Dynamically Adjusted Surrogates in SNMMs}
   \label{alg:main}
\begin{algorithmic}
   \STATE {\bf Input:} An experimental/short-term data set consisting of $n_e$ samples: $\left\{\left(S_0^i, T_1^i, S_1^i\right)\right\}_{i=1}^{n_e}$
   \STATE {\bf Input:} An observational/long-term data set consisting of $n_o$ samples: $\left\{\left(S_1^i, Y_1^i, T_2^i, S_2^i, Y_2^i,\ldots, S_{M-1}^i, T_M^i, Y_M^i\right)\right\}_{i=1}^{n_o}$
   \STATE {\bf Input:} $\fdim$-dimensional feature vector maps $\{\phi_{o,t}\}_{t=2}^M$ and $\phi_{e,1}$, which parameterize the blip functions
   \STATE Randomly split in half the $n_e$ and $n_o$ samples in $S, S'$
   \STATE Estimate nuisance models $\hat{f}_O$ of $f=\{g, q, h, p_{e,1}, \{g_t, p_{e,t}, b_{o, t}, p_{o,t,\tau}\}_{2\leq \tau\leq t\leq M}\}$ on each half-sample $O\in \{S, S'\}$.
   \STATE Using all the data $S\cup S'$, estimate structural parameters $\hat{\theta}$ by solving the empirical plug-in vector of moment equations:
   \begin{align}\label{eqn:z-estimator-alg}
     \sum_{i\in S} \psi(Z^{i}; \theta, f_{S'}) + \sum_{i\in S'} \psi(Z^{i}; \theta, f_{S}) = 0
   \end{align}
   \STATE {\bf Return:} Structural parameter estimate $\hat{\theta} = (\hat{\theta}_0; \hat{\theta}_{o,2};\ldots; \hat{\theta}_{o, M})$
\end{algorithmic}
\end{algorithm}

To guarantee that our estimator is root-$n$ consistent and asymptotically normal, we need to assume that our first stage estimates of the nuisance functions are sufficiently accurate. In particular, we need to make the following nuisance rate assumptions:

\begin{assumption}[Nuisance Rates]\label{ass:nuisance-rate} For any vector of nuisance estimates $\hat{f}=\{\hat{g}, \hat{q}, \hat{h}, \hat{p}_{e,1}, \{\hat{g}_t, \hat{p}_{e,t}, \hat{b}_{o, t}, \hat{p}_{o,t,\tau}\}_{2\leq \tau\leq t\leq M}\}$, let $\nu_{\hat{f}} = \hat{f}-f^*$. A vector of nuisance estimates $\hat{f}$ satisfies the \emph{sufficient rate assumption} if $\|\nu_{\hat{f}}\|_{2} = o_p(1)$ and:
\begin{align*}
    \|\nu_{\hat{p}_{e,1}}\|_{2,e}\cdot \left\|\nu_{\hat{g}}-\nu_{\hat{h}} - \sum_{t=2}^M \left(\nu_{\hat{g}_t} - \nu_{\hat{p}_{e,t}}\right)^\top \theta_{o,t}^* + \nu_{\hat{p}_{e,1}}^\top\theta_{0}^*\right\|_{2,e}=~& o_p(n^{-1/2})\\
    \|\nu_{\hat{q}}\|_{2,o}\cdot \left\|\nu_{\hat{g}} - \sum_{t=2}^M \nu_{\hat{g}_t}^\top\,\theta_{o,t}^*\right\|_{2,o} =~& o_p(n^{-1/2})\\
    \forall t\in [2,M]:  \|\nu_{\hat{p}_{o,t,t}}\|_{2,o}\cdot \left\|\nu_{\hat{b}_{o,t}}+ \sum_{j=t}^M \nu_{\hat{p}_{o,j,t}}^\top \theta_{o,j}^*\right\|_{2,o} =~& o_p(n^{-1/2})
\end{align*}
\end{assumption}

Note that these nuisance rate assumptions possess almost a doubly robust flavor. With the exception of the nuisance quantities $\hat{p}_{o,t,t}$ and $\hat{p}_{e,1}$, which need to admit $o_p(n^{-1/4})$ root-mean-squared-error (RMSE) rates, for the remainder of the nuisance functions it suffices that the product of their RMSE rates with some other nuisance function be $o_p(n^{-1/2})$ and not that they individually satisfy $o_p(n^{-1/4})$ rates. For instance, if we knew the treatment policy in the experimental sample (captured by the propensity $p_{e,1}$) and the dynamic treatment policy in the observational sample (captured by the dynamic porpensity $p_{o, t,t}$), then we don't need any rates for $\hat{h}, \hat{p}_{e,t}, \hat{b}_{o,t}, \{\hat{p}_{o,j,t}\}_{t<j}$. Moreover, it suffices that the product of the surrogate score $\hat{q}$ error and the surrogate indices $\hat{g}, \hat{g}_t$ error, be small. Subject to these nuisance rate conditions we can show asymptotic normality of our estimate and provide asymptotically valid confidence intervals.

\begin{theorem}[Estimation and Inference for Structural Parameters]\label{thm:dyn-estimation}
Let $\mcD_n$ be a sequence of families of data generating processes for the experimental and observational sample, which adhere to the causal graph presented in Figure~\ref{fig:simplified_cg} and which satisfy Assumption~\ref{ass:linear-snmm} and Assumption~\eqref{ass:dynIR}, for a constant feature map dimension $\fdim$ and such that all random variables and the ranges of all nuisance functions are bounded by a constant a.s.. Moreover, for some $\lambda >0$, for any $D\in \mcD_n$:
\begin{align}
    \E_o\left[\Cov_o(\Phi_{o,t}, \Phi_{o,t}\mid S_{t-1})\right] \succeq~& \lambda I
    &
    \E_e\left[\Cov_e(\Phi_{1}, \Phi_{1}\mid S_{0})\right] \succeq~& \lambda I
\tag{strict average overlap}
\end{align}
Assume that the nuisance function estimates $\hat{f}_S, \hat{f}_{S'}$, estimated on each half-sample in the first stage of Algorithm~\ref{alg:main}, satisfy the sufficient rate Assumption~\ref{ass:nuisance-rate}. 

Let $\psi$ denote the orthogonal scores in Definition~\ref{defn:nuisances}. Let $J=\E[\nabla_{\theta} \psi(Z;\theta^*, f^*)]$ and $\Sigma = \E\left[\psi(\theta^*; f^*)\, \psi(\theta^*; f^*)^\top\right]$ and $V=J^{-1} \Sigma (J^{-1})^\top$. 
Then the estimate $\hat{\theta}$ returned by Algorithm~\ref{alg:main} satisfies:
\begin{align}
    \sqrt{n} V^{-1/2} (\hat{\theta} - \theta^*) = -\frac{1}{\sqrt{n}}\sum_{i=1}^n V^{-1/2} J^{-1}\, \psi(Z^i; \theta^*, f^*) + o_p(1) \to_d N(0, I_{\fdim\cdot M})
\end{align}
Let $\hat{J}=\frac{1}{n}\sum_{i=1}^n \nabla_{\theta} \psi(Z^i;\hat{\theta}, \hat{f})$ and $\hat{\Sigma}=\frac{1}{n} \sum_{i=1}^n \psi(Z^i; \hat{\theta}, \hat{f})\, \psi(Z^{i}; \hat{\theta}, \hat{f})^\top$ and $\hat{V}=\hat{J}^{-1} \hat{\Sigma} (\hat{J}^{-1})^\top$.
Let ${\cal F}$ be the CDF of the standard normal distribution. Then for any vector $\nu \in \R^{d\,m}$, the confidence interval:
\begin{align}
    \CI := \left[\nu^\top \hat{\theta} \pm {\cal F}^{-1}(1-\alpha/2)\, \sqrt{\frac{\nu'\hat{V} \nu}{n}}\right]
\end{align}
is asymptotically uniformly valid:
    $\sup_{D\in \mcD_n} \left|\Pr_D(\nu^\top \theta^* \in \CI) - (1-\alpha)\right| \to 0$.
\end{theorem}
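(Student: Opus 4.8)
The plan is to verify the high-level conditions of the double/debiased machine learning meta-theorems of \cite{chernozhukov2018double} for \emph{linear-in-the-target-parameter}, Neyman-orthogonal $Z$-estimators, specialized to the two-sample, cross-fitted estimator of Algorithm~\ref{alg:main}, and then to upgrade pointwise asymptotic normality to uniform validity over $\mcD_n$. First I would exploit that the score $\psi$ of Definition~\ref{defn:scores} is affine in $\theta$, so the empirical moment equation \eqref{eqn:z-estimator-alg} has the exact solution $\hat\theta - \theta^* = -\hat J_n^{-1} m_n(\theta^*;\hat f)$ with $\hat J_n := \E_n[\nabla_\theta\psi(Z;\theta^*,\hat f)]$, whenever $\hat J_n$ is invertible. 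Invertibility of $J=\E[\nabla_\theta\psi(Z;\theta^*,f^*)]$ follows from its block-triangular structure in the ordering $(\theta_0;\theta_{o,M};\ldots;\theta_{o,2})$ — the block $\psi_t$ involves only $\theta_{o,\tau}$ for $\tau\ge t$, while $\psi_1$ involves all components — with diagonal blocks equal, up to the factors $\Pr(e),\Pr(o)$, to $\E_e[\Cov_e(\Phi_1,\Phi_1\mid S_0)]$ and $\E_o[\Cov_o(\Phi_{o,t},\Phi_{o,t}\mid S_{t-1})]$; strict average overlap makes each of these $\succeq \lambda I$, so $J$ is invertible with $\|J^{-1}\|$ bounded uniformly over $\mcD_n$. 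Boundedness of all random variables and nuisance ranges, together with $\|\nu_{\hat f}\|_2=o_p(1)$, then gives $\hat J_n-J=o_p(1)$ by a law of large numbers, so $\hat J_n^{-1}$ is eventually well-defined and converges in probability to $J^{-1}$.

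Next I would decompose $m_n(\theta^*;\hat f) = m_n(\theta^*;f^*) + R_{1,n} + R_{2,n}$, with $R_{1,n} := (\E_n-\E)[\psi(Z;\theta^*,\hat f)-\psi(Z;\theta^*,f^*)]$ and $R_{2,n} := \E[\psi(Z;\theta^*,\hat f)-\psi(Z;\theta^*,f^*)\mid \hat f]$, the conditional expectation integrating over a fresh sample since $\hat f$ is trained on the complementary half of the cross-fit. For $R_{1,n}$: conditionally on the training half, $\sqrt n\, R_{1,n}$ has zero conditional mean and conditional variance $O(\|\nu_{\hat f}\|_2^2)=o_p(1)$ by boundedness and Lipschitzness of $\psi$ in $f$, so a conditional Chebyshev argument gives $R_{1,n}=o_p(n^{-1/2})$. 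For $R_{2,n}$: a second-order functional Taylor expansion of $f\mapsto m(\theta^*;f)$ around $f^*$ has vanishing first-order term by the Neyman orthogonality established in Theorem~\ref{thm:dyn-ortho}, and its second-order term is a sum of bilinear forms in the components of $\nu_{\hat f}$; the core computation is to check that, after the first-order cancellations, these bilinear forms collapse to exactly the products appearing in Assumption~\ref{ass:nuisance-rate} — $\|\nu_{\hat p_{e,1}}\|_{2,e}$ paired with the combined experimental surrogate-index error, $\|\nu_{\hat q}\|_{2,o}$ paired with the observational surrogate-index error, and, for each $t$, $\|\nu_{\hat p_{o,t,t}}\|_{2,o}$ paired with the nested-regression error — whence Assumption~\ref{ass:nuisance-rate} yields $R_{2,n}=o_p(n^{-1/2})$. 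Hence $\sqrt n\, m_n(\theta^*;\hat f) = n^{-1/2}\sum_i \psi(Z^i;\theta^*,f^*) + o_p(1)$.

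The leading term $n^{-1/2}\sum_i\psi(Z^i;\theta^*,f^*)$ is a sum of independent, mean-zero (Theorem~\ref{thm:dyn-ortho}), uniformly bounded summands — using independence across and within the experimental and observational subsamples and absorbing the pooled normalization by tracking $n_e/n$ and $n_o/n$ — so Lindeberg--Feller gives convergence to $N(0,\Sigma)$ with $\Sigma=\E[\psi\psi^\top]$, whose relevant blocks are bounded below via the same $\lambda$ and the conditional-variance lower bounds implied by overlap. Combining with the closed-form linearization gives the stated expansion and $\sqrt n\, V^{-1/2}(\hat\theta-\theta^*)\to_d N(0,I_{\fdim\cdot M})$. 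Consistency of the plug-in $\hat V$ follows from $\hat J_n\to_p J$, a law of large numbers for $\hat\Sigma$ (boundedness, $\|\nu_{\hat f}\|_2=o_p(1)$, $\hat\theta\to_p\theta^*$), and continuity of matrix inversion, yielding pointwise Wald coverage. For uniform validity over $\mcD_n$, I would note that every $o_p$ bound above arose from variance/moment estimates that are uniform over $\mcD_n$ under the constant a.s.\ bounds, and from the uniform invertibility of $J$ and nondegeneracy of $V$ under strict average overlap; a standard subsequence argument — pass to $D_n\in\mcD_n$ nearly attaining $\sup_{D\in\mcD_n}|\Pr_D(\nu^\top\theta^*\in\CI)-(1-\alpha)|$ and apply the Lindeberg CLT with the uniform bounds, exactly as in the uniform-in-$P$ results of \cite{chernozhukov2018double} — then gives $\sup_{D\in\mcD_n}|\Pr_D(\nu^\top\theta^*\in\CI)-(1-\alpha)|\to 0$.

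I expect the main obstacle to be the explicit second-order remainder computation for $R_{2,n}$: propagating a perturbation $\hat f = f^*+\nu$ through the three coupled families of scores — the experimental orthogonal-moment block $\psi_{e,1}$, the observational surrogate-score block $\psi_{o,1}$, and the $M-1$ recursive SNMM blocks $\psi_t$ — which are chained, since the same surrogate-index nuisances $g,g_t$ enter $\psi_1$ through the auxiliary parameters $\theta_{o,t}$ that the $\psi_t$ blocks identify, and then verifying that the orthogonality cancellations leave \emph{precisely} the product structure of Assumption~\ref{ass:nuisance-rate} rather than a larger collection of cross terms. The two-sample normalization bookkeeping and the invocation of the uniform DML machinery are, by comparison, routine.
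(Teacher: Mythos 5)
Your proposal is correct and follows essentially the same route as the paper: the paper likewise verifies Neyman orthogonality (Theorem~\ref{thm:dyn-ortho}), controls the second-order remainder via the explicit second-order Frechet derivatives of $m_1$ and $m_t$ together with the product-rate conditions of Assumption~\ref{ass:nuisance-rate}, establishes invertibility of the block upper-triangular Jacobian from strict average overlap, and then invokes Theorem~3.1 and Corollary~3.1 of \cite{chernozhukov2018double}. The only difference is presentational: you unpack the internals of those cited theorems (the $R_{1,n}/R_{2,n}$ decomposition, conditional Chebyshev for the cross-fitted empirical-process term, Lindeberg--Feller, and the uniformity argument), whereas the paper treats them as a black box, and the ``main obstacle'' you flag is exactly what the paper's Lemmas~\ref{lem:second-order-frechet-m1} and~\ref{lem:second-order-frechet-mt} resolve.
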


\vscomment{Add potential discussion on variance and jacobian.}
\vsdelete{denote a $(m\,\fdim)\times (m\,\fdim)$ upper triangular matrix consisting of $\fdim\times \fdim$ blocks, such that the $(t,j)$ block, for $t\geq 2$ is defined as:
\begin{align*}
    J_{t,j}:=~& 1\{t\leq j\}\Pr(o)\E_o[\Cov_o(\Phi_{o,t}, \Phi_{o,j}\mid {S}_{t-1})]\\
    J_{1, j} :=~& 1\{j=1\} \Pr(e)\E_{e}\left[\Cov_e(\Phi_1, \Phi_1\mid S_0)\right] + 1\{j>1\} \Pr(e)\E_e\left[\Cov_e(\Phi_1, \E_o[\Phi_{o,t}\mid S_1] \mid S_0)\right]
\end{align*}
be an estimate of $J$ such that the $(t,j)$ block, for $t\geq 2$ is defined as:
\begin{align*}
    \hat{J}_{t,j}:=~& 1\{t\leq j\} \frac{1}{n}\sum_{i\in o} \left(\Phi_{o,t} - \hat{p}_{o,t,t}(S_{t-1})\right)\, \left(\Phi_{o,j} - \hat{p}_{o, j, t}(S_{t-1})\right)^\top\\
    \hat{J}_{1, j} :=~& 1\{j=1\} \frac{1}{n} \sum_{i\in e} \left(\Phi_1 - \hat{p}_{e,1}(S_0)\right)\, \left(\Phi_1 - \hat{p}_{e,1}(S_0)\right)^\top + 1\{j>1\} \frac{1}{n} \sum_{i\in e} \left(\Phi_1 - \hat{p}_{e,1}(S_0)\right)\, \left(\hat{g}_j(S_1) - \hat{p}_{e,j}(S_0)\right)^\top
\end{align*}
}

\begin{corollary}[Estimation and Inference for Treatment Effects]
Under the assumptions and definitions of Theorem~\ref{thm:dyn-estimation}, the following is an estimate of the target value $\tau(t_1, t_0)$:
\begin{align}
    \hat{\tau}(t_1, t_0) := \hat{\theta}_0^\top \left(\frac{1}{n_e} \sum_{i\in e} (\phi_{e,1}(t_1, S_0^i) - \phi_{e,1}(t_0, S_0^i))\right)
\end{align}
If we let $Q=\phi_{e,1}(t_1, S_0) - \phi_{e,1}(t_0, S_0)$ and $\E_{n,e}[\cdot], \Var_{n,e}$ the empirical average and empirical variance over samples from setting $e$, then:
\begin{align}
    \frac{\sqrt{n}}{\sqrt{\gamma + \mu}} \left(\hat{\tau}(t_1, t_0) - \tau(t_1, t_0)\right) = \frac{1}{\sqrt{n}} \sum_{i=1}^n \frac{1}{\sqrt{\gamma + \mu}}f(Z^i;\theta^*,h^*) + o_p(1) \to_d N(0, 1)
\end{align}
with $f(Z; \psi^*, h^*) := \frac{1\{e\}}{\Pr(e)}\left(Q - \E_e[Q]\right)^\top\theta_0^* - \E_e[Q]^\top e_{1:\fdim}^\top\,  J^{-1}\psi(Z;\theta^*, f^*)$ and $\gamma=\frac{1}{\Pr(e)}\Var_e(Q^\top\theta^*)$ and $\mu=\E_e[Q]^\top V_{1:d,1:d}\E_e[Q]$, where we denoted with $e_{1:d}\in \R^{\fdim\cdot M}$ the vector with entries of $1$ on the first $\fdim$ coordinates and zero otherwise, $\phi, V$ as defined in Theorem~\ref{thm:dyn-estimation} and $V_{1:\fdim,1:\fdim}$ is the submatrix of $V$ consisting of its first $\fdim$ rows and columns. Moreover, if we let $\hat{\gamma}=\frac{n}{n_e}\Var_{n,e}(Q^\top \hat{\theta}_0)$ and $\hat{\mu}=\E_{n,e}[Q]^\top\hat{V}_{1:d,1:d}\E_{n,e}[Q]$, with $\hat{V}$ as in Theorem~\ref{thm:dyn-estimation}, then the confidence interval:
\begin{align}
    \CI := \left[\hat{\tau}(t_1, t_0) \pm \mcF^{-1}(1-\alpha/2)\, \sqrt{\frac{\hat{\gamma}+\hat{\mu}}{n}}\right]
\end{align}
is asymptotically uniformly valid:
    $\sup_{D\in \mcD_n} \left|\Pr_D(\tau(t_1, t_0) \in \CI) - (1-\alpha)\right| \to 0$.
\end{corollary}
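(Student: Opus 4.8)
The plan is to derive the asymptotic linear (influence–function) representation of $\hat\tau(t_1,t_0)$ by combining the expansion for $\hat\theta$ from Theorem~\ref{thm:dyn-estimation} with a law–of–large–numbers argument for the empirical average $\E_{n,e}[Q]$, and then to verify that the resulting influence function has variance exactly $\gamma+\mu$. Since $Q = \phi_{e,1}(t_1,S_0)-\phi_{e,1}(t_0,S_0)$ is a known bounded function of $S_0$ and, by \eqref{eqn:dyn-target-param}, $\tau(t_1,t_0)=\theta_0^{*\top}\E_e[Q]$, I would start from the exact decomposition
\begin{align*}
\hat\tau(t_1,t_0)-\tau(t_1,t_0) = (\hat\theta_0-\theta_0^*)^\top\,\E_{n,e}[Q] \;+\; \theta_0^{*\top}\bigl(\E_{n,e}[Q]-\E_e[Q]\bigr).
\end{align*}

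For the first summand, since $\E_{n,e}[Q]\to_p \E_e[Q]$ and $\hat\theta_0-\theta_0^*=O_p(n^{-1/2})$ by Theorem~\ref{thm:dyn-estimation}, I can replace $\E_{n,e}[Q]$ by $\E_e[Q]$ at the cost of an $o_p(n^{-1/2})$ term and then substitute the linear expansion $\sqrt n(\hat\theta-\theta^*)=-\frac{1}{\sqrt n}\sum_i J^{-1}\psi(Z^i;\theta^*,f^*)+o_p(1)$, projecting onto the first $\fdim$ coordinates via $e_{1:\fdim}$. For the second summand I rewrite the experimental–sample average as a full–sample average, $\E_{n,e}[Q]-\E_e[Q]=\frac1n\sum_{i=1}^n \frac{n}{n_e}\,1\{e\}\,(Q_i-\E_e[Q])$, and use $n/n_e\to_p 1/\Pr(e)$ to replace the random weight by $1/\Pr(e)$ up to $o_p(n^{-1/2})$. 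Adding the two pieces yields precisely $\sqrt n(\hat\tau(t_1,t_0)-\tau(t_1,t_0))=\frac1{\sqrt n}\sum_i f(Z^i;\theta^*,h^*)+o_p(1)$ with $f$ as in the statement; a CLT for i.i.d.\ bounded summands then gives asymptotic normality with variance $\Var(f)$.

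The crux is to show $\Var(f)=\gamma+\mu$. Expanding, $\Var(f)$ is the sum of (i) the variance of $\frac{1\{e\}}{\Pr(e)}(Q-\E_e[Q])^\top\theta_0^*$, which equals $\frac{1}{\Pr(e)}\Var_e(Q^\top\theta_0^*)=\gamma$; (ii) the variance of $\E_e[Q]^\top e_{1:\fdim}^\top J^{-1}\psi(Z;\theta^*,f^*)$, which equals $\E_e[Q]^\top V_{1:\fdim,1:\fdim}\E_e[Q]=\mu$ by definition of $V$; and (iii) twice the cross–covariance of the two. In the cross term, only the component of $\psi$ carried by experimental samples survives the product with $1\{e\}$, namely the $\psi_{e,1}$ part of $\psi_1$ (the $\psi_{o,1}$ term and all $\psi_t$, $t\ge 2$, carry $1\{o\}$). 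The key observation — and the step I expect to be the main obstacle — is that under the linear SNMM assumption in the experimental setting (Assumption~\ref{ass:linear-snmm}), the adjusted surrogate index obeys a \emph{partially linear} relationship in the experimental data: from the identification argument behind Theorem~\ref{thm:dyn-id}, $\E_e[g_{\adj}^*(S_1)\mid S_0,T_1]=\theta_0^{*\top}\phi_{e,1}(T_1,S_0)+b(S_0)$ for some function $b$. Consequently the residual appearing in $\psi_{e,1}$ at the truth, $\tilde g_{\adj}^*(S_1,S_0)-\theta_0^{*\top}\tilde\Phi_1$, has mean zero conditional on $(S_0,T_1)$, hence conditional on $S_0$ it is orthogonal to any function of $(T_1,S_0)$ and in particular to $\tilde\Phi_1$; since $(Q-\E_e[Q])^\top\theta_0^*$ is $\sigma(S_0)$–measurable, the cross–covariance vanishes and $\Var(f)=\gamma+\mu$.

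Finally I would establish consistency of the plug–in variance estimators: $\hat\gamma=\frac{n}{n_e}\Var_{n,e}(Q^\top\hat\theta_0)\to_p\gamma$ using $\hat\theta_0\to_p\theta_0^*$, $n/n_e\to_p 1/\Pr(e)$ and boundedness of $Q$; and $\hat\mu=\E_{n,e}[Q]^\top\hat V_{1:\fdim,1:\fdim}\E_{n,e}[Q]\to_p\mu$ using $\hat V\to_p V$ (from Theorem~\ref{thm:dyn-estimation}) and $\E_{n,e}[Q]\to_p\E_e[Q]$. Under the stated boundedness and strict–average–overlap conditions these convergences, as well as the $o_p$ remainders above, hold uniformly over $\mcD_n$, the uniformity being inherited from the corresponding uniform statements in Theorem~\ref{thm:dyn-estimation}. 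Slutsky's theorem then gives $\sqrt n(\hat\tau(t_1,t_0)-\tau(t_1,t_0))/\sqrt{\hat\gamma+\hat\mu}\to_d N(0,1)$ uniformly over $\mcD_n$, from which the asserted uniform asymptotic validity of $\CI$ follows. The genuinely delicate points are thus the cross–term cancellation (requiring the partially linear structure of $g_{\adj}^*$ in the experimental sample) and carrying the uniformity over $\mcD_n$ through each approximation step.
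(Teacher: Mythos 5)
Your proposal is correct and follows exactly the argument the paper intends (the paper states this corollary without printing a proof): you combine the asymptotic linearity of $\hat{\theta}$ from Theorem~\ref{thm:dyn-estimation} with the sampling fluctuation of $\E_{n,e}[Q]$, and the cross-covariance between the two influence components vanishes because every score block other than $\psi_{e,1}$ carries the indicator $1\{o\}$, while $\E_e[\psi_{e,1}\mid T_1,S_0]=0$ by the partially linear structure of $g_{\adj}^*$ established in the proof of Theorem~\ref{thm:dyn-id}. Your identification of that cross-term cancellation as the delicate step is exactly right, and your variance computation reproduces the $\gamma+\mu$ decomposition in the statement.
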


The asymptotic linearity of our estimate also allows for alternative computationally convenient resampling methods for the construction of confidence intervals, with potentially better finite sample properties. For instance, constructing intervals by running the Bootstrap on the final stage estimation (keeping the nuisance estimates fixed), will be asymptotically valid. Moreover, the computationally even more convenient multiplier Bootstrap can also be used \cite{Chatterjee2005,Chernozhukov2013,Chernozhukov2014,Spokoiny2015,Zhilova2020}, which can also be used for joint inference on multiple parameters, such as for constructing uniform confidence bands on dose response curves, i.e. the curve of the form $t \to \tau(t, 0)$, for $t$ in some bounded range $[U, L]$.

\section{A High-Dimensional Linear Markovian Data Generating Process Example}

As a simple example where the linear SNMM assumption holds, consider the following linear Markovian (albeit high-dimensional) data generating process:
\begin{align}
    S_{t} =~& A T_{t} + B S_{t-1} + \epsilon_t \nonumber\\
    Y_{t} =~& C S_{t} + \eta_t \label{eqn:dgp}\\
    T_{t+1} =~& D T_{t-1} + G S_{t-1} + \zeta_t \nonumber
\end{align}
where $\epsilon_t, \eta_t, \zeta_t$ are i.i.d. random shocks. Our assumptions are satisfied if the quantities $B, C$ remain unchanged between the experimental and the observational setting, while the quantities $A, D, G$, as well as the distributions of mean-zero random shocks, can change arbitrarily, in the two settings, denoted as $A_d, D_d, G_d$ for $d\in\{o, e\}$.

In this case, the blip functions take the simple form: $\psi(\tau_t, s_{t-1})=\tau_t$ and $\theta_{t,j}=C B^{j-t} A$. Moreover, note that in this case, for any non-adaptive sequence of treatments $\tau_{>1}$, we have that:
\begin{align}
    \E_e\left[Y^{(t_1, \tau_{>1})} - Y^{(t_0, \tau_{>1})}\right] = \E_e\left[Y^{(t_1, 0_{>1})} - Y^{(t_0, 0_{>1})}\right] = \sum_{j=1}^m \theta_{1, j}^\top (t_1 - t_0)
\end{align}
Thus the quantity that our algorithm estimates is valid, irrespective of the baseline future policy that one considers and is a universal effect quantity that holds under any non-adaptive future sequence of treatments. This is practically convenient, as the causal effect derived is not heavily dependent on the future treatment sequence that a sample will receive in the short-term data set.
Finally, note, that even though the surrogates/controls can be high-dimensional objects and hence the matrices $B, C, G$ are high-dimensional objects, our estimation strategy allows to estimate the target parameter $\theta_0 = \sum_{j=1}^m \theta_{1,j}$, which is low-dimensional at parametric root-$n$ rates and with asymptotically normal distributional limits. The intuition is that our analysis and estimation strategy, never really identifies or argues about estimation errors of these intermediate high-dimensional quantities.

\vspace{-.1in}
\section{Semi-Synthetic Experimental Evaluation}\label{app:semidata}

We evaluate the performance of our proposed estimation strategy on a semi-synthetic dataset. The semi-synthetic data retain qualitative characteristics of data on real-world incentive investments in customers at a major corporation, although all data series and relationships have been perturbed to retain confidentiality.

The semi-synthetic dataset, like the real-world dataset on which it is based, displays several patterns that are common across many potential applications. The treatments, in this case incentive investments, are lumpy: in most periods most customers get no investments. Proxies, which include single period values of the outcome of interest, are highly auto-correlated over time. Treatments are also auto-correlated, and correlated with past values of proxies. Finally, we include a set of time-invariant controls that affect both proxies/outcomes and treatments.

To build the semi-synthetic data we estimate a series of moments from a real-world dataset: a full covariance matrix of all proxies, treatments, and controls in one period and a series of linear prediction models (lassoCV) of each proxy and treatment on a set of 6 lags of each treatment, 6 lags of each proxy, and time-invariant controls. Using these values, we draw new parameters from distributions matching the key characteristics of each family of parameters. Finally, we use these new parameters to simulate proxies, treatments, and controls by drawing a set of initial values from the covariance matrix and forward simulating to match intertemporal relationships from the transformed prediction models. For further details on the data generation process, see~Appendix Section \ref{sec_datagen}.

We now compare multiple possible approaches for estimating the effects of our three synthetic treatments on a long-term outcome. To construct this outcome we select one proxy to be the outcome of interest. We consider the effect of each treatment in period $t$ on the cumulative sum of the outcome from period $t$ to $t+3$, four periods, or $t$ to $t+7$, eight periods. We can calculate the true treatment effects in the synthetic data as a function of parameters from the linear prediction models.

\begin{figure*}[ht]
\begin{subfigure}{1\textwidth}
\centering
    \centerline{\includegraphics[width=\textwidth]{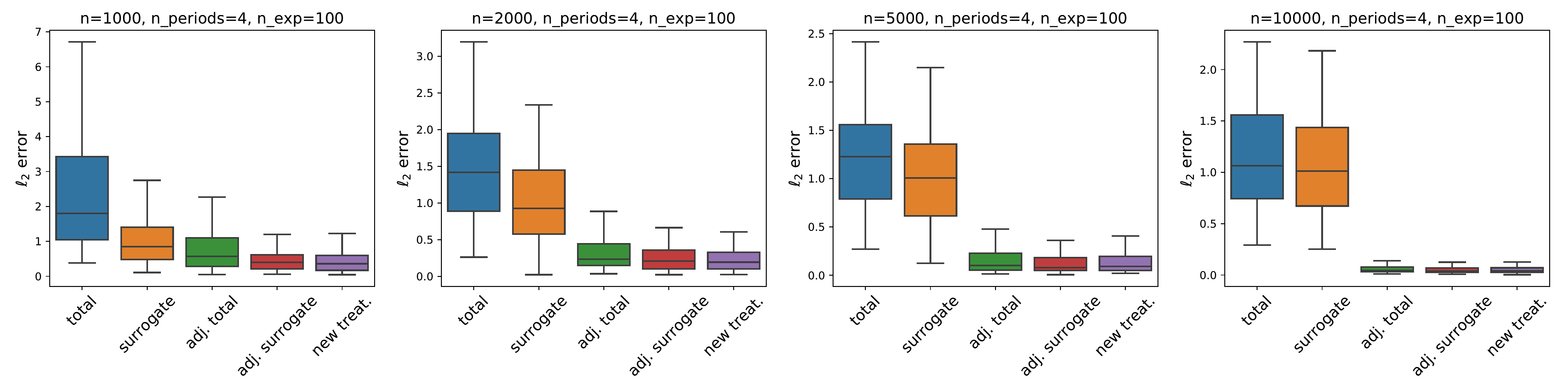}}
\end{subfigure}
\begin{subfigure}{1\textwidth}
\vspace{-.07in}
\centering
    \centerline{\includegraphics[width=\textwidth]{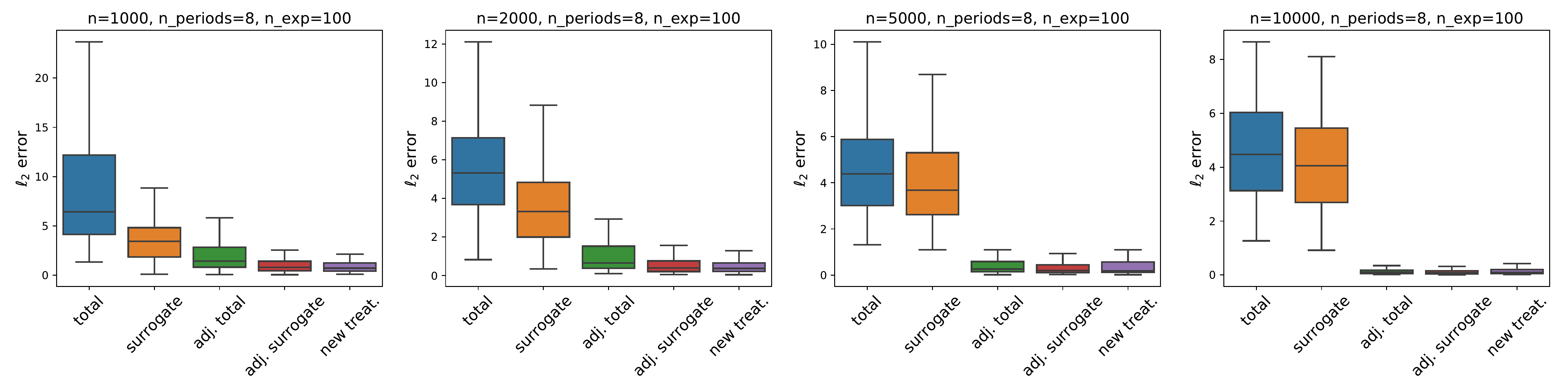}}
\vspace{-.1in}
\end{subfigure}
\caption{Experimental performance for $M=4$ periods and $M=8$ periods.}
\label{fig:experimental}
\end{figure*}

Because we construct a single, long synthetic dataset for this exercise it is possible to estimate the treatment effects on realized long-term outcomes directly, unlike the typical use case for a surrogate approach. Following a likely approach, we estimate the effect of each treatment at time $t$ on outcomes over the next 4 or 8 periods using double machine learning and controlling for invariant customer characteristics and \textit{contemporaneous and lagged} values of all proxies and other treatments. The blue, ``total" bars in each panel of Figure \ref{fig:experimental} show the distribution of the estimation error\footnote{We use the $\ell_2$ error $\|\hat{\theta}-\theta_0\|_2$.} in the estimated treatment effects from this method across 100 simulated datasets. The top row plots the estimation error when estimating the effect on four periods of outcomes, increasing the sample size of each simulation from left to right, while the bottom row shows the same for the effect on eight periods of outcome. As predicted, the auto-correlation in treatments causes this method, which does not control for \textit{future} treatments, to substantially overestimate treatment effects relative to their true values.

We then estimate the same set of treatment effects using the unadjusted surrogate approach described in Section \ref{ssec:ddsurr_noadj}. The distribution of estimation errors from this approach is represented in the orange ``surrogate" bars in each panel of Figure \ref{fig:experimental}. Since this approach still fails to control for future treatments when estimating the surrogate index, the estimated treatment effects are still substantially larger than the true effects on average. Note that the surrogate model exhibits slightly less bias than the direct ``total" approach. Intuitively, because the surrogate approach is only capturing the relationship between treatment and outcome that passes through the surrogates it picks up less of the bias resulting from future correlated treatments than the direct approach.

The third set of green ``adj. total" bars plot the distribution of estimation errors when estimating treatment effects on adjusted realized outcomes using the method of \citet{lewis2020doubledebiased}. When a dataset containing both all treatments of interest and realized long-run outcomes is available, this should be the preferred approach. This third methodology, which removes the effects of future treatments from the long-run outcome in a first step, exhibits significantly less bias than the first two methods, particularly for reasonably large samples in the right two columns.

The final two bars in each panel of Figure \ref{fig:experimental} illustrate the success of the adjusted surrogate approach described in Section \ref{ssec:ddsurr_wadj}. We recommend this approach in the case when treatments are serially-correlated, as in the synthetic data, and it is not possible to collect a single dataset that contains both long-term outcomes and all the treatments of interest. As illustrated by the red ``adj. surrogate" bars, this adjusted surrogate approach is highly accurate in predicting long-term effects with a performance comparable to that of having access to the raw long-term outcome itself. The final purple ``new treat." bars show that the approach works equally well when considering the effect of a novel treatment that appears only in the experimental sample and was not part of the dynamic adjustment. Overall, this methodology overcomes a common data limitation when considering long-term effects of novel treatments and expands the surrogate approach to consider a common, and previously problematic, pattern of serially correlated treatments.

\bibliography{bib}
\bibliographystyle{icml2021}

\appendix

\newpage

\onecolumn

\section{Proofs of Theorems in Section \ref{ssec:ddsurr_noadj}}

\subsection{Proof of Theorem~\ref{thm:id}}
\paragraph{Surrogate index representation.} By the causal graph assumption and the IR assumption we have:
\begin{align*}
    \E_e[\bar{Y} \mid T_1, S_0] =~& \E_e[\E_e[\bar{Y}\mid S_1, T_1, S_0]\mid T_1, S_0] = \E_e[\E_e[\bar{Y}\mid S_1]\mid T_1, S_0] = \E_e[\E_o[\bar{Y}\mid S_1]\mid T_1, S_0]\\
    =~& \E_e[g_0(S_1)\mid T_1, S_0]
\end{align*}
Thus also we have that $\E_e[\bar{Y} \mid S_0]=\E_e[g_0(S_1)\mid S_0]$ and that:
\begin{align*}
    \E_e[\bar{Y} \mid T_1, S_0] - \E_e[\bar{Y} \mid S_0] = \E_e[g_0(S_1)\mid T_1, S_0] - \E_e[g_0(S_1)\mid S_0] = \E_e[\tilde{g}_0(S_1)\mid T_1, S_0]
\end{align*}

By the PLR assumption and the definition of $\tilde{\Phi}_1 := \phi(T_1,S_0) - \E[\phi(T_1,S_0)\mid S_0]$, we have that:
\begin{equation*}
    \E_e[\bar{Y}\mid T_1, S_0] - \E_e[\bar{Y}\mid S_0] = \theta_0^{\top} \tilde{\Phi}_1 \implies \E[\tilde{g}_0(S_1, S_0) - \theta_0^{\top} \tilde{\Phi}_1\mid T_1, S_0] = 0\implies  \E\left[\left(\tilde{g}_0(S_1, S_0) - \theta_0^{\top} \tilde{\Phi}_1\right) \tilde{\Phi}_1\right] = 0 
\end{equation*}
Solving for $\theta_0$, yields the surrogate index representation:
\begin{align}
    \theta_0 := \E_e[\tilde{\Phi}_1\, \tilde{\Phi}_1^\top]^{-1} \E_{e}[\tilde{g}_0(S_1, S_0) \tilde{\Phi}_1]
\end{align}

\paragraph{Surrogate score representation.} We start by the definition of the surrogate score representation and use the causal graph assumptions to derive:
\begin{align*}
    I_{ss} :=~& \E_{e}[\tilde{\Phi}_1 \tilde{\Phi}_1^\top]^{-1}
    \E_{o}\left[\frac{\Pr(e\mid S_1)}{\Pr(o\mid S_1)} \frac{\Pr(o)}{\Pr(e)}\, \bar{Y}\, \E_{e}[\tilde{\Phi}_1\mid S_{1}]\right]\\
    =~& \E_{e}[\tilde{\Phi}_1 \tilde{\Phi}_1^\top]^{-1}
    \E_{o}\left[\frac{\Pr(S_1\mid e)}{\Pr(S_1\mid o)}\, \bar{Y}\,  \E_{e}[\tilde{\Phi}_1\mid S_{1}]\right] \tag{Bayes-rule}\\
    =~& \E_{e}[\tilde{\Phi}_1 \tilde{\Phi}_1^\top]^{-1}
    \E_{o}\left[\frac{\Pr(S_1\mid e)}{\Pr(S_1\mid o)}\, \E_o[\bar{Y}\mid S_1]\, \E_e[\tilde{\Phi}_1\mid S_1]\right] \tag{tower-law}\\
     =~& \E_{e}[\tilde{\Phi}_1 \tilde{\Phi}_1^\top]^{-1}\,
    \E_{e}\left[\E_o[\bar{Y} \mid S_1]\, \E_e[\tilde{\Phi}_1\mid S_1]\right] \tag{change of measure}\\
    =~& \E_{e}[\tilde{\Phi}_1 \tilde{\Phi}_1^\top]^{-1}
    \E_{e}\left[\E_e[\bar{Y} \mid S_1]\, \E_e[\tilde{\Phi}_1\mid S_1]\right] \tag{\eqref{eqn:ass:ir} assumption}\\
    =~& \E_{e}[\tilde{\Phi}_1 \tilde{\Phi}_1^\top]^{-1}
    \E_{e}\left[\E_e[\bar{Y}\, \tilde{\Phi}_1 \mid S_1] \right] \tag{\eqref{eqn:ass:meanid} assumption}\\
    =~& \E_{e}[\tilde{\Phi}_1 \tilde{\Phi}_1^\top]^{-1}
    \E_{e}\left[\bar{Y} \tilde{\Phi}_1 \right] \tag{inverse tower-law}\\
    =~& \E_{e}[\tilde{\Phi}_1 \tilde{\Phi}_1^\top]^{-1}
    \left(\E_{e}\left[\tilde{\Phi}_1\, \phi(T_1, S_0)^\top \theta_0 \right] + \E_{e}\left[b_0(S_0)\, \tilde{\Phi}_1 \right]\right) \tag{\eqref{eqn:ass:plr} assumption}\\
    =~& \E_{e}[\tilde{\Phi}_1 \tilde{\Phi}_1^\top]^{-1}
    \E_{e}\left[\tilde{\Phi}_1\, \tilde{\Phi}_1^\top \theta_0\right] \tag{mean-zero: $\E[\bar{\Phi}_1 \mid S_0]=0$}\\
    =~& \theta_0
\end{align*}

\paragraph{Orthogonal representation.}
This follows easily by the fact that the second term in the orthogonal representation is mean zero and the first term is equal to $\theta_0$ by the surrogate-index argument.

\subsection{Proof of Theorem~\ref{thm:ortho}}

\paragraph{Orthogonal moment formulation.} First we see that we can re-write the orthogonal representation from Theorem~\ref{thm:id} as follows:
\begin{align}
\theta_0 =~& \E_{e}[\tilde{\Phi}_1 \tilde{\Phi}_1^\top]^{-1} \left(\frac{\E[\tilde{g}_0(S_1) \tilde{\Phi}_1 1\{e\}]}{\Pr(e)} + \frac{1}{\Pr(e)}\E\left[1\{o\}\frac{\Pr(e\mid S_1)}{\Pr(o\mid S_1)} (\bar{Y} - g_0(S_1)) \E_{e}[\tilde{\Phi}_1\mid S_{1}, S_{0}]\right]\right)\\
    =~& \E[\tilde{\Phi}_1\, \tilde{\Phi}_1^\top 1\{e\}]^{-1} \left(\E[\tilde{g}_0(S_1) \tilde{\Phi}_1 1\{e\}]+ \E\left[1\{o\}\frac{\Pr(e\mid S_1)}{\Pr(o\mid S_1)} (\bar{Y} - g_0(S_1)) \E_{e}[\tilde{\Phi}_1\mid S_{1}, S_{0}]\right]\right)
\end{align}
Equivalently, the solution to the orthogonal moment equation:
\begin{align}
    \E\left[1\{e\}\, (\tilde{g}_0(S_1) - \theta_0^\top \tilde{\Phi}_1) \tilde{\Phi}_1 + 1\{o\}\, \frac{\Pr(e\mid S_1)}{\Pr(o\mid S_1)} \E_{e}[\tilde{\Phi}_1\mid S_{1}] (\bar{Y} - g_0(S_1))  \right] = 0
\end{align}
By the definition of $q_0, p_0$ and $h_0$ the result follows that we can write $\theta_0$ as the solution to $m(\theta_0; f_0)=0$, where $f_0=(q_0, p_0, h_0)$ and we remind that:
\begin{align*}
    m(\theta; f):=~& m_e(\theta; f) + m_o(\theta; f)\\
    m_e(\theta; f) :=~& \E\left[1\{e\}\, (g(S_1) - h(S_0) - \theta^\top (\phi(T_1,S_0) - p(S_0)))\, (\phi(T_1,S_0) - p(S_0))\right]\\
    m_o(\theta_0; f):=~& \E\left[1\{o\}\, q(S_1) (\bar{Y} - g(S_1))  \right]
\end{align*}

\paragraph{Orthogonality.} For any nuisance $f$, let $\nu_f = f - f_0$, denote the difference with respect to its corresponding true value. Orthogonality with respect to $p, h, q$, follows since:
\begin{align*}
    D_p[m(\theta_0;f_0), \nu_p] :=~& \Pr(e)\,\left(\E_e[\tilde{\Phi}_1\, \nu_p(S_0)] - \E_{e}\left[\left(\tilde{g}_0(S_1) - \theta_0^\top \tilde{\Phi}_1\right) \, \nu_p(S_0)\right]\right) = 0\\
    D_h[m(\theta_0;f_0), \nu_h] :=~& \Pr(e)\,\E_e[-\tilde{\Phi}_1\, \nu_h(S_0)] = 0\\
    D_q[m(\theta_0;f_0), \nu_q] :=~& \Pr(o)\,\E_o[(\bar{Y} - g_0(S_1))\, \nu_q(S_1)]
    = \Pr(o)\,\E_o[\E[\bar{Y} - g_0(S_1))\mid S_1]\, \nu_q(S_1)] = 0
\end{align*}
Orthogonality with respect to $g$ is slightly more involved. First note that:
\begin{align*}
    D_g[m(\theta_0;f_0), \nu_g] =~& \Pr(e)\,\E_e[\tilde{\Phi}_1\, \nu_g(S_1)] - \Pr(o) \E_o[q_0(S_1)\, \nu_g(S_1)]
\end{align*}
Now using a sequence of derivations almost identical to the ones we invoked in the surrogate score representation in Theorem~\ref{thm:id}, we can show that:
\begin{lemma}\label{lem:transform}
For any scalar valued function $r(S_1)$,
\begin{equation}
    \E_o\left[q_0(S_1)\, r(S_1)\right] = \frac{\Pr(e)}{\Pr(o)}\E_e[\tilde{\Phi}_1\, r(S_1)]
\end{equation}
\end{lemma}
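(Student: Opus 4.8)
The plan is to establish the identity by unwinding the definition of $q_0$ and then chaining together Bayes' rule, a change of measure from the observational to the experimental marginal of $S_1$, and the tower law --- precisely the sequence of manipulations already used in the surrogate score representation part of the proof of Theorem~\ref{thm:id}.

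Concretely, I would start from the left-hand side and substitute $q_0(S_1) = \frac{\Pr(e\mid S_1)}{\Pr(o\mid S_1)}\,\E_e[\tilde{\Phi}_1\mid S_1]$, obtaining $\E_o\!\left[\frac{\Pr(e\mid S_1)}{\Pr(o\mid S_1)}\,\E_e[\tilde{\Phi}_1\mid S_1]\,r(S_1)\right]$. Applying Bayes' rule gives $\frac{\Pr(e\mid S_1)}{\Pr(o\mid S_1)} = \frac{\Pr(e)}{\Pr(o)}\cdot\frac{\Pr(S_1\mid e)}{\Pr(S_1\mid o)}$, so the constant $\Pr(e)/\Pr(o)$ factors out of the expectation. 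Then I invoke the key step: the density ratio $\frac{\Pr(S_1\mid e)}{\Pr(S_1\mid o)}$ converts an observational expectation of any (integrable) function $\psi$ of $S_1$ into an experimental one, $\E_o\!\left[\frac{\Pr(S_1\mid e)}{\Pr(S_1\mid o)}\psi(S_1)\right] = \E_e[\psi(S_1)]$; taking $\psi(S_1) = \E_e[\tilde{\Phi}_1\mid S_1]\,r(S_1)$ yields $\frac{\Pr(e)}{\Pr(o)}\,\E_e\!\left[\E_e[\tilde{\Phi}_1\mid S_1]\,r(S_1)\right]$. Finally, since $r(S_1)$ is $\sigma(S_1)$-measurable, the tower law lets me pull it inside the inner conditional expectation, $\E_e\!\left[\E_e[\tilde{\Phi}_1\mid S_1]\,r(S_1)\right] = \E_e\!\left[\E_e[\tilde{\Phi}_1\,r(S_1)\mid S_1]\right] = \E_e[\tilde{\Phi}_1\,r(S_1)]$, which is exactly the claimed equality.

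This is essentially a bookkeeping argument, and I do not expect a genuine obstacle. The only point requiring care is the legitimacy of the change-of-measure step: it needs the experimental marginal law of $S_1$ to be absolutely continuous with respect to the observational one, so that $q_0$ (and the ratio $\Pr(S_1\mid e)/\Pr(S_1\mid o)$) is well defined almost surely, which is ensured by the overlap/positivity conditions already in force; and it needs $r$ to be such that the relevant expectations are finite so the interchange is valid. Once these are in place, each of the three displayed steps is an equality and the lemma follows.
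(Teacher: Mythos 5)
Your proposal is correct and follows essentially the same route as the paper's proof: unfold the definition of $q_0$, apply Bayes' rule to rewrite $\frac{\Pr(e\mid S_1)}{\Pr(o\mid S_1)}$ as $\frac{\Pr(e)}{\Pr(o)}\frac{\Pr(S_1\mid e)}{\Pr(S_1\mid o)}$, change measure from $\E_o$ to $\E_e$, and absorb $r(S_1)$ into the conditional expectation via the tower law. Your added remark on absolute continuity and integrability is a reasonable precaution that the paper leaves implicit.
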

\begin{proof}
\begin{align*}
\frac{\Pr(e)}{\Pr(o)}\E_o\left[q_0(S_1)\, r(S_1)\right] :=~&    \E_{o}\left[\frac{\Pr(e\mid S_1)}{\Pr(o\mid S_1)} \frac{\Pr(o)}{\Pr(e)}\, \E_{e}[\tilde{\Phi}_1\mid S_{1}]\, r(S_1)\right] \tag{definition of $q_0$}\\
    =~&
    \E_{o}\left[\frac{\Pr(S_1\mid e)}{\Pr(S_1\mid o)}\,\E_{e}[\tilde{\Phi}_1\mid S_{1}]\, r(S_1)\right] \tag{Bayes-rule}\\
     =~& 
    \E_{e}\left[\E_e[\tilde{\Phi}_1\mid S_1]\, r(S_1)\right] \tag{change of measure}\\
    =~& 
    \E_{e}\left[\E_e[\tilde{\Phi}_1\, r(S_1)\mid S_1]\right]\\
    =~& 
    \E_{e}\left[\tilde{\Phi}_1\, r(S_1) \right] \tag{inverse tower-law}
\end{align*}
\end{proof}
Applying Lemma~\ref{lem:transform} for $r(S_1) = \nu_g(S_1)$, we have:
\begin{align*}
    D_g[m(\theta_0;f_0), \nu_g] =~& \Pr(e)\,\E_e[\tilde{\Phi}_1\, \nu_g(S_1)] - \Pr(o) \frac{\Pr(e)}{\Pr(o)}\,\E_e[\tilde{\Phi}_1\, \nu_g(S_1)] = 0
\end{align*}

\paragraph{Double robustness.} Observe that by re-writing the orthogonal moment equation, and applying the definition of $g_0(S_1)$ and Lemma~\ref{lem:transform} we have:
\begin{align*}
    \E_e[\tilde{\Phi}_1 \tilde{\Phi}_1^\top] \hat{\theta} =~& \E_e[\hat{g}(S_1)\, \tilde{\Phi}_1] + \frac{\Pr(o)}{\Pr(e)}\E_{o}\left[(\bar{Y} - \hat{g}(S_1)) \hat{q}(S_1)\right]\\
    =~& \E_e[\hat{g}(S_1)\, \tilde{\Phi}_1] + \frac{\Pr(o)}{\Pr(e)}\E_{o}\left[\E\left[\bar{Y} - \hat{g}(S_1)\mid S_1\right]\, \hat{q}(S_1)\right] \tag{tower-law and causal graph}\\
    =~& \E_e[\hat{g}(S_1)\, \tilde{\Phi}_1] + \frac{\Pr(o)}{\Pr(e)}\E_{o}\left[(g_0(S_1) - \hat{g}(S_1))\, \hat{q}(S_1)\right] \tag{definition of $g_0$}\\
    =~& \frac{\Pr(o)}{\Pr(e)}\E_o[\hat{g}(S_1)\, q_0(S_1)] + \frac{\Pr(o)}{\Pr(e)}\E_{o}\left[(g_0(S_1) - \hat{g}(S_1)) \hat{q}(S_1)\right] \tag{Lemma~\ref{lem:transform}}\\
    =~& \frac{\Pr(o)}{\Pr(e)} \E_o[g_0(S_1)\, q_0(S_1)] + \frac{\Pr(o)}{\Pr(e)}\,\E_{o}\left[(g_0(S_1) - \hat{g}(S_1)) (\hat{q}(S_1) - q_0(S_1))\right]\\
    =~& \E_e[\tilde{\Phi}_1\, g_0(S_1)] + \frac{\Pr(o)}{\Pr(e)}\,\E_{o}\left[(g_0(S_1) - \hat{g}(S_1)) (\hat{q}(S_1) - q_0(S_1))\right] \tag{Lemma~\ref{lem:transform}}\\
    =~& \E_e[\tilde{\Phi}_1 \tilde{\Phi}_1^\top]\theta_0 + \E_{o}\left[(g_0(S_1) - \hat{g}(S_1)) (\hat{q}(S_1) - q_0(S_1))\right]
\end{align*}
Thus we get the desired property, i.e.:
\begin{align}
    \E_e[\tilde{\Phi}_1 \tilde{\Phi}_1^\top] \left(\hat{\theta} - \theta_0\right) = \frac{\Pr(o)}{\Pr(e)} \E_{o}\left[(g_0(S_1) - \hat{g}(S_1)) (\hat{q}(S_1) - q_0(S_1))\right] 
\end{align}
which implies that:
\begin{align}
    \left\|\E_e[\tilde{\Phi}_1 \tilde{\Phi}_1^\top] \left(\hat{\theta} - \theta_0\right)\right\|_2 \leq \frac{\Pr(o)}{\Pr(e)}\sqrt{\E_{o}\left[(g_0(S_1) - \hat{g}(S_1))^2\right]} \cdot \sqrt{\E_o\left[(\hat{q}(S_1) - q_0(S_1))^2\right]} 
\end{align}

\vsdelete{
\subsection{Proof of Theorem~\ref{thm:estimation}}
Employing the general theorem in \cite{chernozhukov2018double} and since our estimator is the solution to the empirical analogue of a Neyman orthogonal moment equation:
\begin{equation}
    m(\theta; f_0) = \E[\psi(Z; \theta, f_0)] = 0
\end{equation}
Thus as long as the nuisance functions $f_0$ are estimated on a separate sample and they satisfy an MSE rate of $o_p(n^{-1/4})$, then the final parameter estimate is asymptotically normal, i.e. $\sqrt{n}\, (\hat{\theta}- \theta_0) \to_d N(0, V)$ with $V = J^{-1} \Sigma J^{-\top}$, where $J=\nabla_{\theta} m(\theta_0; f_0)$ and $\Sigma = \E[\psi(Z; \theta_0, f_0)\, \psi(Z; \theta_0, f_0)^\top]$. We can also easily leverage the double robustness property in Theorem~\ref{thm:ortho} to further relax the requirement on the nuisance functions $g, q$, so as to only require that the product of the two mean squared errors is of $o_p(n^{-1/2})$ as opposed to requiring the stronger property that each individual error is of $o_p(n^{-1/4})$.
\paragraph{Asymptotic variance characterization and estimation.} 
We now analyze and simplify the variance term. First observe that in our setting:
\begin{align}
    J :=~& \E[\tilde{\Phi}_1\, \tilde{\Phi}_1^\top 1\{e\}] = \Pr(e) \E_e[\tilde{\Phi}_1\, \tilde{\Phi}_1^\top] =: \Pr(e)\, J_e
\end{align}
Moreover, if we let:
\begin{align}
    \psi(Z; \theta, f) :=~& 1\{e\}\, \psi_e(S_1, T_1, S_{0}; \theta, g, p, h) + 1\{o\}\, \psi_o(\bar{Y}, S_1, S_{0}; g, q)\\
    :=~& 1\{e\}\, \psi_e(Z_e; \theta, g, p, h) + 1\{o\}\, \psi_o(Z_o; \theta, g, q)\\
    \psi_e(Z_e; \theta, g, p, h) :=~& (g(S_1) - h(S_0) - \theta_0^\top \left(\phi(T_1, S_0) - p(S_0)\right)) \left(\phi(T_1, S_0) - p(S_0)\right)\\
    \psi_o(Z_o; g, q) :=~& q(S_1, S_0)\, (\bar{Y} - g(S_1))
\end{align}
Then we also have that:
\begin{align}
    \Sigma :=~& \E[\psi(Z; \theta_0, f_0)\, \psi(Z; \theta_0, f_0)^{\top}]\\
    =~& \Pr(e)\,\E_e[\psi_e(Z_e; \theta_0, g_0, p_0, h_0)\, \psi_e(Z_e; \theta_0, g_0, p_0, h_0)^{\top}] + \Pr(o)\,\E_o[\psi_o(Z_o; g_0, q_0)\, \psi_o(Z_o; g_0, q_0)^{\top}]\\
    =~& \Pr(e)^2\,\Sigma_e + \Pr(e)^2\Sigma_o \tag{definition of $\Sigma_e, \Sigma_o$}
\end{align}
Thus the variance $V$ can be re-written as desired:
\begin{align}
    V = J^{-1} \Sigma J^{-1} = \frac{1}{\Pr(e)^2}  J_e^{-1}\Sigma J_e^{-1} = J_e^{-1} \left(\Sigma_e + \Sigma_o\right) J_e^{-1}
\end{align}
}

\section{Proofs of Theorems in Section \ref{ssec:ddsurr_wadj}}

\subsection{Proof of Theorem~\ref{thm:non-param-id}}
\begin{proof}
Assumptions~\ref{ass:cond-ex} and \ref{ass:dyn-surrogacy} can be directly read from the SWIG in Figure~\ref{fig:swig}.

For the second part of the theorem, given the assumptions, we can write:
\begin{align*}
    \E_e\left[\bar{Y}^{(t_1, 0)}\right] =~& \E_e\left[\E_e\left[\bar{Y}^{(t_1, 0)} \mid S_0\right]\right] \tag{tower law}\\
    =~& \E_e\left[\E_e\left[\bar{Y}^{(t_1, \underline{0})} \mid S_0, T_1=t_1\right]\right] \tag{dynExog + overlap}\\
    =~& \E_e\left[\E_e\left[\bar{Y}^{(T_1, \underline{0})} \mid S_0, T_1=t_1\right]\right] \tag{consistency}\\
    =~& \E_e\left[\E_e\left[\E_e\left[\bar{Y}^{(T_1, \underline{0})} \mid S_1, S_0, T_1\right]\mid S_0, T_1=t_1\right]\right] \tag{tower law}\\
    =~& \E_e\left[\E_e\left[\E_e\left[\bar{Y}^{(T_1, \underline{0})} \mid S_1\right]\mid S_0, T_1=t_1\right]\right] \tag{dynSurr}\\
    =~& \E_e\left[\E_e\left[\E_e\left[\bar{Y}^{(\underline{0}_2)} \mid S_1\right]\mid S_0, T_1=t_1\right]\right] \tag{consistency}\\
    =~& \E_e\left[\E_e\left[\E_o\left[\bar{Y}^{(\underline{0}_2)} \mid S_1\right]\mid S_0, T_1=t_1\right]\right] \tag{dynIR}
\end{align*}
Moreover, the quantity $\E_o\left[\bar{Y}^{(\underline{0}_2)} \mid S_1\right]$ can also be non-parametrically identified via a recursive formula as follows.
Define: $f_{t,j}(s_{t-1}) = \E_o\left[Y_j^{(\bar{T}_{t}, \underline{0}_{t+1})} \mid S_{t-1}=s_{t-1}, T_t=0\right]$, then for any $j > t \geq 1$, we have the recursion:
\begin{align*}
    f_{t,j}(s_{t-1}) =~& \E_o\left[Y_j^{(\bar{T}_{t}, \underline{0}_{t+1})} \mid S_{t-1}=s_{t-1}, T_t=0\right]\\
    =~& \E_o\left[\E_o\left[Y_j^{(\bar{T}_{t}, \underline{0}_{t+1})}\mid S_{t}, S_{t-1}, T_t\right] \mid S_{t-1}=s_{t-1}, T_t=0\right] \tag{tower law}\\
    =~& \E_o\left[\E_o\left[Y_j^{(\bar{T}_{t},\underline{0}_{t+1})}\mid S_{t}\right] \mid S_{t-1}=s_{t-1}, T_t=0\right] \tag{dynSurr}\\
    =~& \E_o\left[\E_o\left[Y_j^{(\bar{T}_{t}, \underline{0}_{t+1})}\mid S_{t}, T_{t+1}=0\right] \mid S_{t-1}=s_{t-1}, T_t=0\right] \tag{dynExog + overlap}\\
    =~& \E_o\left[\E_o\left[Y_j^{(\bar{T}_{t+1}, \underline{0}_{t+2})}\mid S_{t}, T_{t+1}=0\right] \mid S_{t-1}=s_{t-1}, T_t=0\right] \tag{consistency}\\
    =~& \E_o\left[f_{t+1, j}(S_{t})\mid S_{t-1}=s_{t-1}, T_t=0\right] + \E_o[\\
\end{align*}
Moreover, note that for any $j\geq 2$:
\begin{align*}
    f_{j,j}(s_{j-1}) =~& \E_o\left[Y_j^{(\bar{T}_{j}, \underline{0}_{j+1})} \mid S_{j-1}=s_{j-1}, T_j=0\right]\\
    =~& \E_o\left[Y_j^{(\bar{T}_{j})} \mid S_{j-1}=s_{j-1}, T_j=0\right] = \E[Y_j \mid S_{j-1}=s_{j-1}, T_j=0] \tag{base case identification}
\end{align*}
Thus for any $j\geq 2$, we have that $f_{j,j}(s_{j-1})$ is identified via the above equation and that by induction, if $f_{t+1,j}$ has been identified, then $f_{t,j}$ is identified in terms of $f_{t+1,j}$, via the recurvise equation:
\begin{align}
    f_{t,j}(s_{t-1}) =~& \E_o\left[f_{t+1, j}(S_{t})\mid S_{t-1}=s_{t-1}, T_t=0\right] \tag{recursive identification}
\end{align}
Thus $f_{t,j}$ are identified for any $M\geq j\geq t \geq 2$.
Finally note that:
\begin{align*}
    \E_o\left[\bar{Y}^{(\underline{0}_2)} \mid S_1\right] =~& \sum_{j=1}^M \E_o\left[Y_j^{(\underline{0}_2)} \mid S_1\right]\\
    =~& \E_o[Y_1\mid S_1] + \sum_{j=2}^M \E_o\left[Y_j^{(\underline{0}_2)} \mid S_1\right] \tag{consistency}\\
    =~& \E_o[Y_1\mid S_1] + \sum_{j=2}^M \E_o\left[Y_j^{(\underline{0}_2)} \mid S_1, T_2=0\right] \tag{dynExog + overlap}\\
    =~& \E_o[Y_1\mid S_1] + \sum_{j=1}^M f_{2,j}(S_1)
\end{align*}
Thus since all $f_{2,j}$ are identified by the recursive argument, we also have that the quantity $\E_o\left[\bar{Y}^{(\underline{0}_2)} \mid S_1\right]$ is non-parametrically identified. This concludes the proof of the theorem.
\end{proof}

\begin{proof}
We remind that our goals is to show that any quantity $\E_e\left[Y_m^{(\tau, \underline{0}_2)}\right]$ for any $m\in [M]$ and any $\tau\in \mcT_1$ can be represented as:
\begin{align*}
    R_m(\tau) := \E_e\left[f_{1,m}(\tau, S_0) + W_0(\tau)\, (f_{2,m}(S_1) - f_{1,m}(\tau, S_0))\right] + \E_o\left[\sum_{t=2}^m \bar{W}_t\, \left(f_{t+1,m}(S_{t}) - f_{t,m}(S_{t-1})\right)\right]
\end{align*}
Moreover, we want to show that the latter is a valid representation if either i) for all $t\in [m]$, $f_{t,m}$ are correct, or ii) for all $t\in [m]$, $W_{t}$ are correct.

First note that the proof of Theorem~\ref{thm:non-param-id} shows that $\E_e\left[Y_m^{(\tau,\underline{0}_2)}\right]=\E_e\left[f_{1,m}(S_0)\right]$. Now, suppose that the regression functions $f_{t,m}$ are correct. Then we have:
\begin{align*}
    \E_e\left[W_0(\tau) (f_{2,m}(S_1) - f_{1,m}(\tau, S_0))\right] =~& \E_e\left[W_0(\tau) \E_e\left[f_{2,m}(S_1) - f_{1,m}(\tau, S_0)\mid T_1, S_0\right]\right]\\
    =~& \E_e\left[W_0(\tau) \E_e\left[f_{2,m}(S_1) - f_{1,m}(\tau, S_0)\mid T_1=\tau, S_0\right]\right] = 0\\
    \E_o\left[\bar{W}_t \left(f_{t+1,m}(S_{t}) - f_{t,m}(S_{t-1})\right)\right] =~& \E_o\left[\bar{W}_t\, \E_o\left[f_{t+1,m}(S_{t}) - f_{t,m}(S_{t-1})\mid \bar{T}_t, \bar{S}_{t-1}\right]\right]\\
    =~& \E_o\left[\bar{W}_t\, \E_o\left[f_{t+1,m}(S_{t}) - f_{t,m}(S_{t-1})\mid \bar{T}_t=0, \bar{S}_{t-1}\right]\right]\\
    =~& \E_o\left[\bar{W}_t\, \E_o\left[f_{t+1,m}(S_{t}) - f_{t,m}(S_{t-1})\mid T_t=0, S_{t-1}\right]\right] = 0
\end{align*}
Thus all these difference terms vanish and the representation takes the form:
\begin{align*}
    R_m(\tau) = \E_e\left[f_{1,m}(\tau, S_0)\right] = \E_e\left[Y^{(\tau, \underline{0}_2)}\right]
\end{align*}

Now suppose that all the models that enter the propensity weights are correct. Then we have:
\begin{align*}
    \E_e[W_0(\tau) f_{1,m}(\tau, S_0)] =~& \E_e[\E_e[W_0(\tau)\mid S_0] f_{1,m}(\tau, S_0)] = \E_e\left[ \E_e\left[\frac{1\{T_1=\tau\}}{\Pr_e(T_1=\tau\mid S_0)}\mid S_0\right] f_{1,m}(\tau, S_0)\right]\\
    =~& \E_e\left[f_{1,m}(\tau, S_0)\right]\\
    \E_o[\bar{W}_2 f_{t,m}(S_{t-1})] =~& \E_o\left[W_1(\tau) \frac{1\{T_t=0\}}{\Pr(T_2=0\mid S_{1})} f_{2,m}(S_1)\right]\\
    =~& \E_o\left[W_1(\tau) \E_o\left[\frac{1\{T_t=0\}}{\Pr(T_2=0\mid S_{1})}\mid S_1\right] f_{2,m}(S_1)\right]\\
    =~& \E_o\left[\frac{\Pr(S_1\mid e)}{\Pr(S_1\mid o)} \E_e\left[W_0(\tau)\mid S_1\right] f_{2,m}(\tau, S_1)\right]\\
    =~& \E_e\left[ \E_e\left[W_0(\tau)\mid S_1\right] f_{2,m}(\tau, S_1)\right]\\
    =~& \E_e\left[W_0(\tau) f_{2,m}(\tau, S_1)\right]\\
    \E_o[\bar{W}_t f_{t,m}(S_{t-1})] =~& \E_o\left[\bar{W}_{t-1} \frac{1\{T_t=0\}}{\Pr(T_t=0\mid S_{t-1}, T_{t-1}=0)} f_{t,m}(S_{t-1})\right]\\
    =~& \E_o\left[\bar{W}_{t-1} \E_o\left[\frac{1\{T_t=0\}}{\Pr(T_t=0\mid S_{t-1}, T_{t-1}=0)}\mid \bar{S}_{t-1}, \bar{T}_{t-1}\right] f_{t,m}(S_{t-1})\right]\\
    =~& \E_o\left[\bar{W}_{t-1} \E_o\left[\frac{1\{T_t=0\}}{\Pr(T_t=0\mid S_{t-1}, T_{t-1}=0)}\mid S_{t-1}, {T}_{t-1}\right] f_{t,m}(S_{t-1})\right]\\
    =~& \E_o\left[\bar{W}_{t-1} \E_o\left[\frac{1\{T_t=0\}}{\Pr(T_t=0\mid S_{t-1}, T_{t-1}=0)}\mid S_{t-1}, {T}_{t-1}=0\right] f_{t,m}(S_{t-1})\right]\\
    =~& \E_o\left[\bar{W}_{t-1} f_{t,m}(S_{t-1})\right]
\end{align*}
Thus all the negative terms for each $t$, cancel with the positive terms from the term for $t-1$ and the representation simplifies to:
\begin{align*}
    R_m(\tau) = \E_o\left[ \bar{W}_m Y_m \right]
\end{align*}

We now show that the latter is also equal to $\E_e\left[Y_m^{(\tau, \underline{0}_2)}\right]$.
\begin{align*}
    \E_o\left[ \bar{W}_m Y_m \right] =~& \E_o\left[ W_1(\tau) \Pi_{t=2}^{m} W_t Y_m\right]\\
    =~& \E_o\left[ W_1(\tau) \Pi_{t=2}^{m} W_t Y_m^{(\bar{T}_m)}\right]\\
    =~& \E_o\left[ W_1(\tau) \Pi_{t=2}^{m-1} W_t \E_o\left[W_m\, Y_m^{(\bar{T}_m)}\mid \bar{S}_{m-1}, \bar{T}_{m-1}\right]\right]\\
    =~& \E_o\left[ W_1(\tau) \Pi_{t=2}^{m-1} W_t \E_o\left[\frac{1\{T_m=0\}}{\Pr(T_m=0\mid S_{m-1}, T_{m-1})}\, Y_m^{(\bar{T}_{m-1}, 0)}\mid \bar{S}_{m-1}, \bar{T}_{m-1}=0\right]\right]\\
    =~& \E_o\left[ W_1(\tau) \Pi_{t=2}^{m-1} W_t \E_o\left[\frac{1\{T_m=0\}}{\Pr(T_m=0\mid S_{m-1}, T_{m-1})}\mid \bar{S}_{m-1}, \bar{T}_{m-1}=0\right]\, \E_o\left[Y_m^{(\bar{T}_{m-1}, 0)}\mid \bar{S}_{m-1}, \bar{T}_{m-1}=0\right]\right]\\
    =~& \E_o\left[ W_1(\tau) \Pi_{t=2}^{m-1} W_t \E_o\left[\frac{1\{T_m=0\}}{\Pr(T_m=0\mid S_{m-1}, T_{m-1})}\mid {S}_{m-1}, {T}_{m-1}=0\right]\, \E_o\left[Y_m^{(\bar{T}_{m-1}, 0)}\mid \bar{S}_{m-1}, \bar{T}_{m-1}=0\right]\right]\\
    =~& \E_o\left[ W_1(\tau) \Pi_{t=2}^{m-1} W_t \E_o\left[Y_m^{(\bar{T}_{m-1}, 0)}\mid \bar{S}_{m-1}, \bar{T}_{m-1}=0\right]\right]\\
    =~& \E_o\left[ W_1(\tau) \Pi_{t=2}^{m-1} W_t Y_m^{(\bar{T}_{m-1}, 0)}\right]
\end{align*}
Repeating the above process, we can remove the term $\Pi_{t=1}^{m-1} W_t$ from the above expression, every time fixing each $T_t$ to zero in the $Y_m$ counterfactual, i.e. that:
\begin{align*}
    R_m(\tau) = \E_o\left[ \bar{W}_m Y_m \right] =~& \E_o\left[ W_1(\tau) Y_m^{(\underline{0}_2)}\right]\\
    =~& \E_o\left[ W_1(\tau) \E_o\left[Y_m^{(\underline{0}_2)}\mid S_1\right]\right]\\
    =~& \E_o\left[ W_1(\tau) \E_e\left[Y_m^{(\underline{0}_2)}\mid S_1\right]\right]\\
    =~& \E_o\left[\frac{\Pr(S_1\mid e)}{\Pr(S_1\mid o)} \E_e\left[W_0(\tau)\mid S_1\right]\E_e\left[Y_m^{(\underline{0}_2)}\mid S_1\right]\right]\\
    =~& \E_e\left[\E_e\left[W_0(\tau)\mid S_1\right]\E_e\left[Y_m^{(\underline{0}_2)}\mid S_1\right]\right]\\
    =~& \E_e\left[\E_e\left[\frac{1\{T_1=\tau\}}{\Pr(T_1=\tau\mid S_0)}\mid S_1\right]\E_e\left[Y_m^{(\underline{0}_2)}\mid S_1\right]\right]\\
    =~& \E_e\left[\E_e\left[\frac{1\{T_1=\tau\}}{\Pr(T_1=\tau\mid S_0)}\, Y_m^{(\underline{0}_2)}\mid S_1\right]\right]\\
    =~& \E_e\left[\frac{1\{T_1=\tau\}}{\Pr(T_1=\tau\mid S_0)}\, Y_m^{(\underline{0}_2)}\right]\\
    =~& \E_e\left[\frac{1\{T_1=\tau\}}{\Pr(T_1=\tau\mid S_0)}\, Y_m^{(\tau, \underline{0}_2)}\right]\\
    =~& \E_e\left[\E_e\left[\frac{1\{T_1=\tau\}}{\Pr(T_1=\tau\mid S_0)}\, Y_m^{(\tau, \underline{0}_2)}\mid S_0\right]\right]\\
    =~& \E_e\left[\E_e\left[\frac{1\{T_1=\tau\}}{\Pr(T_1=\tau\mid S_0)}\mid S_0\right]\, \E_e\left[Y_m^{(\tau, \underline{0}_2)}\mid S_0\right]\right]\\
    =~& \E_e\left[\E_e\left[Y_m^{(\tau, \underline{0}_2)}\mid S_0\right]\right]\\
    =~& \E_e\left[Y_m^{(\tau, \underline{0}_2)}\right]
\end{align*}
\end{proof}

\section{Proofs from Section~\ref{sec:dsurr_semipar}}

\subsection{Proof of Lemma~\ref{lem:cntf-char-rho}}

\begin{proof}
First, note that for any $m\in [1,M]$, by the definition of the observed $Y_m$, we have that:
\begin{align}
    Y_m \equiv Y_m^{(\bar{T}_j)}
\end{align}
and that for any treatment sequence $\bar{\tau}_m$, via a telescoping sum argument, we can write:
\begin{align}
Y_m^{(\tau)} - Y_m^{(\bar{\tau}_{t-1}, \underline{0}_t)} = \sum_{j=t}^m Y_m^{(\bar{\tau}_j, \underline{0}_{j+1})} - Y_m^{(\bar{\tau}_{j-1}, \underline{0}_j)}
\end{align}
Thus, applying linearity of expectation, the tower law of expectations and the definition of the blip functions, we have:
\begin{align*}
    \E\left[Y_m - Y_m^{(\bar{T}_{t-1}, \underline{0}_t)} \mid S_{t-1}, T_{t}\right] =~& 
    \E\left[Y_m^{(T)} - Y_m^{(\bar{T}_{t-1}, \underline{0}_t)} \mid S_{t-1}, T_{t}\right]\\
    =~& 
    \E\left[\sum_{j=t}^m Y_m^{(\bar{T}_j, \underline{0}_{j+1})} - Y_m^{(\bar{T}_{j-1}, \underline{0}_j)}\mid S_{t-1}, T_{t}\right]\\
    =~& 
    \sum_{j=t}^m \E\left[Y_m^{(\bar{T}_j, \underline{0}_{j+1})} - Y_m^{(\bar{T}_{j-1}, \underline{0}_j)}\mid S_{t-1}, T_{t}\right]\\
    =~& 
    \sum_{j=t}^m \E\left[\E\left[Y_m^{(\bar{T}_j, \underline{0}_{j+1})} - Y_m^{(\bar{T}_{j-1}, \underline{0}_j)} \mid S_{j-1}, S_{t-1}, T_t\right]\mid S_{t-1}, T_{t}\right]\\
    =~& 
    \sum_{j=t}^m \E\left[\E\left[Y_m^{(\bar{T}_j, \underline{0}_{j+1})} - Y_m^{(\bar{T}_{j-1}, \underline{0}_j)} \mid S_{j-1}, T_{j}\right]\mid S_{t-1}, T_{t}\right] \tag{dynSurr + dynExog}\\
    =~& 
    \sum_{j=t}^m \E\left[\gamma_{j,m}(T_j, S_{j-1})\right]\\
    =~& 
    \E\left[\sum_{j=t}^m \gamma_{j,m}(T_j, S_{j-1})\right]
\end{align*}
By re-arranging we conclude the desired property:
\begin{align*}
    \E\left[Y_m^{(\bar{T}_{t-1}, \underline{0}_t)} \mid S_{t-1}, T_{t}\right] = \E\left[Y_m - \sum_{j=t}^m \gamma_{j,m}(T_j, S_{j-1})\mid S_{t-1}, T_{t}\right]
\end{align*}
\end{proof}

\subsection{Proof of Lemma~\ref{lem:moment-restrictions}}
\begin{proof}
For simplicity of notation, for any $f\in \mcF$, let:
\begin{align}
    \bar{f}(T_t, S_{t-1}) = f(T_t, S_{t-1}) - \E[f(T_t, S_{t-1})\mid S_{t-1}]
\end{align}
and observe that by the definition of $\bar{f}$, we crucially have that for any $f\in \mcF$:
\begin{align}
\E\left[\bar{f}(T_t, S_{t-1})\mid S_{t-1}\right] = 0.
\end{align}

By Lemma~\ref{lem:cntf-char-rho}, we have that at the true $\psi^*$:
\begin{align}
    \E\left[ H_t(\psi^*)\, \bar{f}(T_t, S_{t-1})\mid S_{t-1}\right]=~& \E\left[ \E\left[H_t(\psi^*) \mid S_{t-1}, T_t\right]\, \bar{f}(T_t, S_{t-1})\mid S_{t-1}\right] \tag{tower law}\\
    =~& \E\left[ \E\left[Y^{(\bar{T}_{t-1}, \underline{0}_t)} \mid S_{t-1}, T_t\right]\, \bar{f}(T_t, S_{t-1}) \mid S_{t-1}\right] \tag{Equation~\eqref{eqn:cntf-rho}}\\
    =~& \E\left[ \E\left[Y^{(\bar{T}_{t-1}, \underline{0}_t)} \mid S_{t-1}\right]\, \bar{f}(T_t, S_{t-1}) \mid S_{t-1}\right] \tag{dynExog}\\
    =~& \E\left[Y^{(\bar{T}_{t-1}, \underline{0}_t)} \mid S_{t-1}\right]\, \E\left[\bar{f}(T_t, S_{t-1})\mid S_{t-1}\right] \tag{tower law}\\
    =~& 0 \nonumber
\end{align}
\end{proof}

\subsection{Proof of Theorem~\ref{thm:dyn-id}}

Observe that under the causal graph assumption in Figure~\ref{fig:simplified_cg}, then we have that the sequential conditional exogeneity Assumption~\ref{ass:cond-ex} is satified. Moreover, by the linear blip model Assumption~\ref{ass:linear-snmm} and by sequential conditional exogeneity:
\begin{align*}
    \tau(t_1, t_0) :=~& \E_e\left[\bar{Y}^{(t_1, \underline{0})} - \bar{Y}^{(t_0, \underline{0})}\right] = \E_e\left[\E_e\left[\bar{Y}^{(t_1, \underline{0})} - \bar{Y}^{(\underline{0})}\mid S_0\right] - \E_e\left[\bar{Y}^{(t_0, \underline{0})} - \bar{Y}^{(\underline{0})}\mid S_0\right]\right]\\
    =~& \E_e\left[\E_e\left[\bar{Y}^{(t_1, \underline{0})} - \bar{Y}^{(\underline{0})}\mid S_0, T_1=t_1\right] - \E_e\left[\bar{Y}^{(t_0, \underline{0})} - \bar{Y}^{(\underline{0})}\mid S_0, T_1=t_1\right]\right]\\
    =~& \sum_{j=1}^M \theta_{e,1,j}^\top\,\E\left[\phi_{e,1}(t_1, S_0) - \phi_{e,1}(t_0, S_0)\right] = \theta_0^\top\, \E\left[\phi_{e,1}(t_1, S_0) - \phi_{e,1}(t_0, S_0)\right]
\end{align*}
where:
\begin{align*}
\theta_0 := \sum_{j=1}^M \theta_{e,1,j}.
\end{align*}
Which proves Equation~\eqref{eqn:dyn-target-param}. Thus it suffices to identify $\theta_0$, in order to identify the treatment effect of interest.

Next we prove an intermediate lemma that essentially argues that the parameter $\theta_0$ can be identified in a manner almost identical to the one in the non-dynamic case, albeit with a small change of the variable $\bar{Y}$ being replaced by $\bar{Y}^{o, \adj}$, where for any environment $d\in\{e, o\}$, we define the adjusted long-term outcome as:
\begin{align*}
    Y_j^{d,\adj} :=~& Y_j - \sum_{q=2}^{j} \gamma_{d, q, j}({T}_q, {S}_{q-1}) & \bar{Y}^{d,\adj} :=~& \sum_{j=1}^M Y_j^{d,\adj}
\end{align*}
\begin{lemma}
Suppose that the data generating processes in the experimental and observational setting satisfy sequential conditional exogeneity Assumption~\ref{ass:cond-ex}, the linear blip function Assumption~\ref{ass:linear-snmm} and the dynamic invariance Assumption~\ref{ass:dynIR}. Then the following conditions hold:
\begin{align}
    \E_e\left[\bar{Y}^{e,\adj}\mid T_1, S_0\right] =~& \E[\bar{Y}^{(T_1,\underline{0})}\mid T_1, S_0] =\theta_0^\top\phi_{1}(T_1, S_0) + b_0(S_0) \tag{DynPLR}\label{eqn:ass:dynplr}\\
    \E_e\left[\bar{Y}^{e,\adj}\mid T_1, S_0\right] =~& \E_o\left[\bar{Y}^{o,\adj}\mid T_1, S_0\right] \tag{DynIR} \label{eqn:ass:dynIR}\\
    \bar{Y}^{o,\adj} \indep_{\mean}& (T_1, S_0) \mid S_1, o ~~\text{and}~~~ \bar{Y}^{e,\adj} \indep_{\mean}(T_1, S_0) \mid S_1, e \tag{DynMeanID} \label{eqn:ass:dyn-meanid}
\end{align}
\end{lemma}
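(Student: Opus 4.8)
The plan is to obtain all three identities from the blip representation Lemma~\ref{lem:cntf-char-rho} --- which says that subtracting the blip effects of the treatments at periods $t,\dots,j$ reconstructs, in conditional mean given $(S_{t-1},T_t)$, the counterfactual $Y_j^{(\bar{T}_{t-1},\underline{0}_t)}$ --- combined with the sequential exogeneity/surrogacy structure encoded in the causal graph of Figure~\ref{fig:simplified_cg} and a single use each of the linear‑SNMM Assumption~\ref{ass:linear-snmm} and the dynamic invariance Assumption~\ref{ass:dynIR}. Throughout I will write $\bar{Y}^{(\underline{0}_2)}$ for the counterfactual with period~$1$ at its realized value and periods $2,\dots,M$ at the baseline, and $\bar{Y}^{(\underline{0}_1)}$ for the all‑baseline counterfactual, so that $g_{\adj}^*(S_1)=\E_o[\bar{Y}^{o,\adj}\mid S_1]$.

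For \eqref{eqn:ass:dynplr} I would apply Lemma~\ref{lem:cntf-char-rho} in the experimental law with $t=1$. Since $Y_j-\sum_{q=1}^{j}\gamma_{e,q,j}(T_q,S_{q-1})=Y_j^{e,\adj}-\gamma_{e,1,j}(T_1,S_0)$ and $\gamma_{e,1,j}(T_1,S_0)$ is $\sigma(T_1,S_0)$‑measurable, the lemma gives $\E_e[Y_j^{e,\adj}\mid T_1,S_0]=\E_e[Y_j^{(\underline{0}_1)}\mid T_1,S_0]+\gamma_{e,1,j}(T_1,S_0)$. In the all‑baseline counterfactual every treatment arrow is severed, so no treatment is a causal ancestor of $Y_j^{(\underline{0}_1)}$; hence $Y_j^{(\underline{0}_1)}\indep T_1\mid S_0$ and the first term equals $\E_e[Y_j^{(\underline{0}_1)}\mid S_0]$, a function of $S_0$ alone. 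Summing over $j$ and using Assumption~\ref{ass:linear-snmm} (so $\sum_{j=1}^M\gamma_{e,1,j}(T_1,S_0)=(\sum_{j=1}^M\theta_{e,1,j})^\top\phi_{e,1}(T_1,S_0)$) yields $\E_e[\bar{Y}^{e,\adj}\mid T_1,S_0]=b_0(S_0)+\theta_0^\top\phi_{e,1}(T_1,S_0)$ with $b_0(S_0):=\E_e[\bar{Y}^{(\underline{0}_1)}\mid S_0]$ and $\theta_0:=\sum_{j=1}^M\theta_{e,1,j}$, matching the definition of $\theta_0$ in Theorem~\ref{thm:dyn-id}. The middle identity $\E_e[\bar{Y}^{e,\adj}\mid T_1,S_0]=\E[\bar{Y}^{(T_1,\underline{0})}\mid T_1,S_0]$ is then immediate, since $\E[\bar{Y}^{(T_1,\underline{0})}-\bar{Y}^{(\underline{0}_1)}\mid T_1,S_0]$ equals $\sum_{j=1}^M\gamma_{e,1,j}(T_1,S_0)$ by the very definition of the period‑$1$ blip functions.

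For \eqref{eqn:ass:dyn-meanid} I would re‑run the telescoping argument in the proof of Lemma~\ref{lem:cntf-char-rho}, but now carrying the extra variables $(T_1,S_0)$ in every conditioning set. Writing $Y_j^{d,\adj}=Y_j^{(\bar{T}_1,\underline{0}_2)}+\sum_{t=2}^{j}\big[\big(Y_j^{(\bar{T}_t,\underline{0}_{t+1})}-Y_j^{(\bar{T}_{t-1},\underline{0}_t)}\big)-\gamma_{d,t,j}(T_t,S_{t-1})\big]$, the conditional mean given $(S_1,T_1,S_0)$ of each bracketed term vanishes --- one collapses the conditioning of the counterfactual increment down to $(S_{t-1},T_t)$ using the Markov/surrogacy structure (Assumptions~\ref{ass:cond-ex} and~\ref{ass:dyn-surrogacy}) and then invokes the blip definition --- while $\E_d[Y_j^{(\bar{T}_1,\underline{0}_2)}\mid S_1,T_1,S_0]=\E_d[Y_j^{(\underline{0}_2)}\mid S_1]$ because $Y_j^{(\bar{T}_1,\underline{0}_2)}\indep_{\mean}(T_1,S_0)\mid S_1$ by Assumption~\ref{ass:dyn-surrogacy}. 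Summing over $j$ gives $\E_d[\bar{Y}^{d,\adj}\mid S_1,T_1,S_0]=\E_d[\bar{Y}^{(\underline{0}_2)}\mid S_1]$ for each $d\in\{e,o\}$; since the right‑hand side is a function of $S_1$ only, this is exactly $\bar{Y}^{d,\adj}\indep_{\mean}(T_1,S_0)\mid S_1,d$. Taking the same identity with the trivial conditioning reads $\E_d[\bar{Y}^{d,\adj}\mid S_1]=\E_d[\bar{Y}^{(\underline{0}_2)}\mid S_1]$ in each environment (here one also uses Assumption~\ref{ass:cond-ex} to drop $T_2$ after applying Lemma~\ref{lem:cntf-char-rho} with $t=2$); equating $d=e$ and $d=o$ is then literally the dynamic invariance Assumption~\ref{ass:dynIR}, which gives \eqref{eqn:ass:dynIR} at the $S_1$ level, exactly as for the static~\eqref{eqn:ass:ir}, and combining with \eqref{eqn:ass:dyn-meanid} transfers it to the $(T_1,S_0)$‑conditional form used in Theorem~\ref{thm:dyn-id}.

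The main obstacle I anticipate is the bookkeeping in the third paragraph: because each $\gamma_{d,t,j}$ is a population conditional expectation, $\bar{Y}^{d,\adj}$ is a genuine function of the observed trajectory, and the period‑$\ge 2$ counterfactuals $Y_j^{(\bar{T}_t,\underline{0}_{t+1})}$ retain a dependence on the realized past treatments, so one cannot simply d‑separate $(T_1,S_0)$ away. One instead has to verify, term by term, that each blip exactly cancels the conditional mean of the corresponding counterfactual increment under the enlarged conditioning set --- precisely the computation in the proof of Lemma~\ref{lem:cntf-char-rho}, now carried out conditional on $(S_1,T_1,S_0)$ rather than $(S_1,T_2)$. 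Once this is in hand, \eqref{eqn:ass:dynplr}, \eqref{eqn:ass:dynIR} and \eqref{eqn:ass:dyn-meanid} are the dynamic analogues of \eqref{eqn:ass:plr}, \eqref{eqn:ass:ir} and \eqref{eqn:ass:meanid}, and they feed directly into the surrogate‑index, surrogate‑score and orthogonal representations established in Theorem~\ref{thm:dyn-id}.
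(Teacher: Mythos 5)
Your proposal is correct and follows essentially the same route as the paper: all three identities are reduced to the blip telescoping identity of Lemma~\ref{lem:cntf-char-rho} together with the SWIG-implied conditional independences, with the dynamic invariance assumption entering only once, at the level of $\E_d[\bar{Y}^{(\underline{0}_2)}\mid S_1]$. The one organizational difference is in \eqref{eqn:ass:dyn-meanid}: rather than re-running the telescoping cancellation under the enlarged conditioning set $(S_1,T_1,S_0)$ as you propose, the paper applies the already-proven Lemma~\ref{lem:cntf-char-rho} at $t=2$ to get $\E_d[\bar{Y}^{d,\adj}\mid S_1,T_2]=\E_d[\bar{Y}^{(\underline{0}_2)}\mid S_1,T_2]$ and then inserts and removes $(T_1,S_0)$ by the tower law using the graph fact $\{Y_j,\underline{S}_1,\underline{T}_2\}\indep(S_0,T_1)\mid S_1,T_2$, which avoids the term-by-term bookkeeping you flag as the main obstacle; your version goes through as well.
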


\begin{proof}
First, note that by the definition of the blip functions:
\begin{align*}
    \E_e\left[\bar{Y}^{(T_1, \underline{0})}\mid T_1, S_0\right] = \theta_0^\top\phi_{t}(T_1, S_0) + \E\left[\bar{Y}^{(\underline{0})}\mid S_0, T_1\right] = \theta_0^\top\phi_{t}(T_1, S_0) + \E\left[\bar{Y}^{(\underline{0})}\mid S_0\right] =:  \theta_0^\top\phi_{t}(T_1, S_0) + b_0(S_0)
\end{align*}
Moreover, note that by the causal graph assumption the random variable $\bar{Y}^{(T_1, 0)}$, satisfies the surrogate condition:
\begin{align}\label{eqn:cntf-indep}
    \bar{Y}^{(T_1, 0)} \indep (T_1, S_0) \mid S_1
\end{align}
This conditional independence can be easily verified from the single world intervention graph (SWIG) of the intervention $(T_1, \underline{0})$ as depicted in Figure~\ref{fig:swig1}.
\begin{figure}[htpb]
    \centering
    \includegraphics[scale=.5]{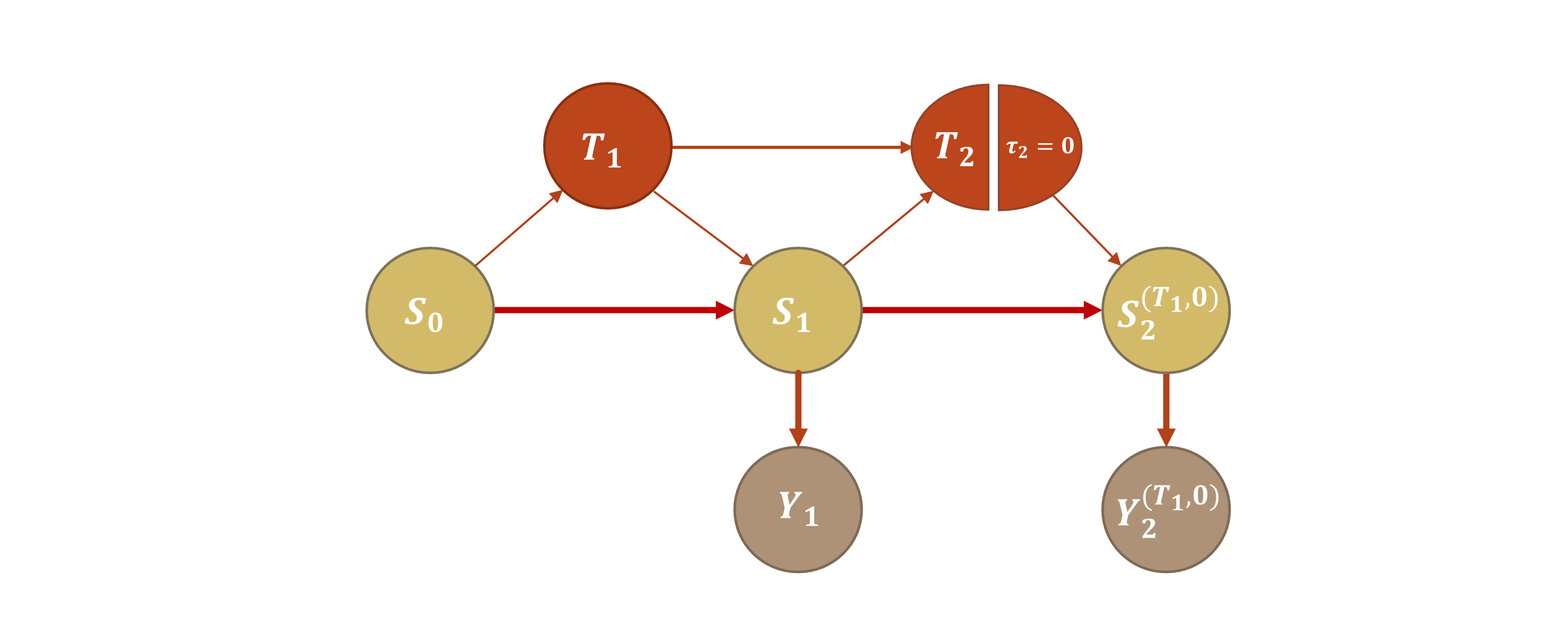}
    \caption{Single World Intervention Graph (SWIG), for the intervention that zeros out all treatments beyond the first period.}
    \label{fig:swig1}
\end{figure}
Thus if we define:
\begin{align}
    g_{\adj}^*(S_1) := \E_e\left[\bar{Y}^{(T_1,\underline{0})}\mid S_1\right]
\end{align}
then we can write:
\begin{align*}
    \E_e\left[\bar{Y}^{(T_1, \underline{0})}\mid T_1, S_0\right] = \E_e\left[\E_e\left[\bar{Y}^{(T_1, \underline{0})}\mid S_1, T_1, S_0\right]\mid T_1, S_0\right] = \E_e\left[\E_e\left[\bar{Y}^{(T_1, \underline{0})}\mid S_1\right]\mid T_1, S_0\right] = \E_e[g_{\adj}^*(S_1)\mid T_1, S_0]
\end{align*}
Observe that by consistency, the counterfactual outcome $Y^{(T_1, \underline{0})}$ can be written as a counterfactual outcome of interventions in periods $\{2,\ldots,M\}$:
\begin{align*}
    \bar{Y}^{(T_1, \underline{0})} = \sum_{j=1}^m Y_j^{(T_1, \underline{0}_{2})} = Y_1 + \sum_{j=2}^m Y_j^{(\underline{0}_2)} \equiv \bar{Y}^{(\underline{0}_2)}
\end{align*}

\paragraph{Proof of Equation~\eqref{eqn:ass:dynIR}.} Now observe that by the dynamic invariance Assumption~\ref{ass:dynIR}, we have that:
\begin{align*}
    g_{\adj}^*(S_1) :=~& \E_e\left[\bar{Y}^{(\underline{0}_2)}\mid S_1\right] = \E_o\left[\bar{Y}^{(\underline{0}_2)}\mid S_1\right]
\end{align*}
For any environment $d\in \{o,e\}$, if we had access to long-term outcomes, then the quantity $\E_d[\bar{Y}^{(0)}\mid S_1]$, is identified via the estimand given by Lemma~\ref{lem:cntf-char-rho}:
\begin{align*}
    \E_d\left[\bar{Y}^{(\underline{0}_2)}\mid S_1, T_2\right] =~& \sum_{j=1}^M \E_d\left[Y_j^{({T}_{1}, \underline{0}_2)}\mid {S}_{1}, T_2\right]\\
    =~& \sum_{j=1}^M \E_d\left[Y_j - \sum_{q=2}^{j} \gamma_{d, q, j}(T_q, {S}_{q-1}) \mid {S}_{1}, T_2\right]\\
    =~& \sum_{j=1}^M \E_d\left[Y_j^{d,\adj} \mid {S}_{1}, T_2\right] = \E_d\left[\bar{Y}^{d,\adj}\mid S_1, T_2\right]
\end{align*}
Therefore:
\begin{align*}
    g_{\adj}^*(S_1) = \E_o\left[\bar{Y}^{(\underline{0}_2)}\mid S_1\right] = \E_o\left[\bar{Y}^{o,\adj}\mid S_1\right] = \E_e\left[\bar{Y}^{(\underline{0}_2)}\mid S_1\right] = \E_e\left[\bar{Y}^{e,\adj}\mid S_1\right]  
\end{align*}
which proves Equation~\eqref{eqn:ass:dynIR}.

\paragraph{Proof of Equation~\eqref{eqn:ass:dyn-meanid}.} Furthermore, we note that $\bar{Y}^{o,\adj}$ and $\bar{Y}^{e,\adj}$ satisfy the mean conditional independence assumption, i.e. for any $d\in \{e,o\}$:
\begin{align*}
    \bar{Y}^{d,\adj} \indep_{\mean} (S_0, T_1) \mid S_1, d
\end{align*}
since we first note that by the causal graph assumption $\{Y_j, \underline{S}_1, \underline{T}_2\} \indep (S_0, T_1) \mid S_1, T_2, d$ and therefore:
\begin{align*}
    \E_d[\bar{Y}^{d,\adj}\mid S_1, T_1, S_0] =~& \E_d\left[\E_d\left[\bar{Y}^{d,\adj}\mid  S_1, T_2, T_1, S_0\right] \mid  S_1, T_1, S_0\right]\\
    =~& \E_d\left[\E_d\left[\bar{Y}^{d,\adj}\mid  S_1, T_2\right] \mid  S_1, T_1, S_0\right]\\
    =~& \E_d\left[\E_d\left[\bar{Y}^{(\underline{0}_2)}\mid  S_1, T_2\right] \mid  S_1, T_1, S_0\right]\\
    =~& \E_d\left[\E_d\left[\bar{Y}^{(\underline{0}_2)}\mid  S_1, T_2, T_1, S_0\right] \mid  S_1, T_1, S_0\right]\\
    =~& \E_d\left[\E_d\left[\bar{Y}^{(\underline{0}_2)}\mid  S_1, T_1, S_0\right] \mid  S_1, T_1, S_0\right]\\
    =~& \E_d\left[\bar{Y}^{(\underline{0}_2)}\mid  S_1, T_1, S_0\right] =\E_d\left[\bar{Y}^{(\underline{0}_2)}\mid  S_1\right]= \E_d\left[\bar{Y}^{d,\adj}\mid S_1\right]
\end{align*}
where in second-to-last equation we used the conditional independence from Equation~\eqref{eqn:cntf-indep}. This proves Equation~\eqref{eqn:ass:dyn-meanid}.

\paragraph{Proof of Equation~\eqref{eqn:ass:dynplr}.} Finally, note that the above sequence of inequalities also implies that for any environment $d\in\{e, o\}$, we have that $\E_d[\bar{Y}^{d,\adj}\mid S_1, T_1, S_0]=\E_d\left[\bar{Y}^{(\underline{0}_2)}\mid  S_1, T_1, S_0\right]$. Thus we can also write:
\begin{align*}
    \E_d[\bar{Y}^{d,\adj}\mid T_1, S_0] =~& \E_d\left[\E_d[\bar{Y}^{d,\adj}\mid S_1, T_1, S_0]\mid T_1, S_0\right]\\
    =~& \E_d\left[\E_d\left[\bar{Y}^{(\underline{0}_2)}\mid  S_1, T_1, S_0\right]\mid T_1, S_0\right]\\
    =~& \E_d\left[\bar{Y}^{(\underline{0}_2)}\mid T_1, S_0\right]\\
    =~& \E_d\left[\bar{Y}^{(T_1, \underline{0})}\mid T_1, S_0\right]\\
    =~& \theta_0^\top\phi_1(T_1, S_0) + b_0(S_0)
\end{align*}
which proves Equation~\eqref{eqn:ass:dynplr} and completes the proof of the lemma.
\end{proof}

Thus we see that the setting satisfies all properties \eqref{eqn:ass:plr}, \eqref{eqn:ass:ir}, \eqref{eqn:ass:meanid}, defined in the previous section, but with $\bar{Y}^{d,\adj}$ in the place of $\bar{Y}$. Hence, Theorem~\ref{thm:id} and Theorem~\ref{thm:ortho} hold verbatim, by simply replacing $\bar{Y}$ with $\bar{Y}^{d,\adj}$. We provide these derivations here for completeness.

\paragraph{Surrogate index representation.} By Equations~\eqref{eqn:ass:dyn-meanid} and \eqref{eqn:ass:dynIR} we have:
\begin{align*}
    \E_e[\bar{Y}^{e,\adj} \mid T_1, S_0] =~& \E_e[\E_e[\bar{Y}^{e,\adj}\mid S_1, T_1, S_0]\mid T_1, S_0] = \E_e[\E_e[\bar{Y}^{e,\adj}\mid S_1]\mid T_1, S_0] = \E_e[\E_o[\bar{Y}^{o,\adj}\mid S_1]\mid T_1, S_0]\\
    =~& \E_e[g_{\adj}^*(S_1)\mid T_1, S_0]
\end{align*}
with $g_{\adj}^*$ as defined in Equation~\eqref{eqn:defns-main-thm}.
Thus also we have that $\E_e[\bar{Y}^{e,\adj} \mid S_0]=\E_e[g_{\adj}^*(S_1)\mid S_0]$ and that:
\begin{align*}
    \E_e[\bar{Y}^{e,\adj} \mid T_1, S_0] - \E_e[\bar{Y}^{e,\adj} \mid S_0] = \E_e[g_{\adj}^*(S_1)\mid T_1, S_0] - \E_e[g_{\adj}^*(S_1)\mid S_0] = \E_e[\tilde{g}_{\adj}^*(S_1, S_0)\mid T_1, S_0]
\end{align*}

By Equation~\eqref{eqn:ass:dynplr} and the definition of $\tilde{\Phi}_1 := \phi(T_1,S_0) - \E[\phi(T_1,S_0)\mid S_0]$, we have that:
\begin{align*}
    \E_e[\bar{Y}^{e,\adj}\mid T_1, S_0] - \E_e[\bar{Y}^{e,\adj}\mid S_0] = \theta_0^{\top} \tilde{\Phi}_1 \implies~& \E_e[\tilde{g}_{\adj}^*(S_1,S_0) - \theta_0^{\top} \tilde{\Phi}_1\mid T_1, S_0] = 0\\
    \implies~&  \E_e\left[\left(\tilde{g}_{\adj}^*(S_1, S_0) - \theta_0^{\top} \tilde{\Phi}_1\right) \tilde{\Phi}_1\right] = 0 
\end{align*}
Solving for $\theta_0$, yields the surrogate index representation:
\begin{align}
    \theta_0 := \E_e[\tilde{\Phi}_1\, \tilde{\Phi}_1^\top]^{-1} \E_{e}[\tilde{\Phi}_1\tilde{g}_{\adj}^*(S_1, S_0)]
\end{align}

\paragraph{Surrogate score representation.} We start by the definition of the surrogate score representation and use Equations~\eqref{eqn:ass:dyn-meanid}, \eqref{eqn:ass:dynIR} and \eqref{eqn:ass:dynplr} to derive:
\begin{align*}
    I_{ss} :=~& \E_{e}[\tilde{\Phi}_1 \tilde{\Phi}_1^\top]^{-1}
    \E_{o}\left[\frac{\Pr(e\mid S_1)}{\Pr(o\mid S_1)} \frac{\Pr(o)}{\Pr(e)}\, \bar{Y}^{o,\adj}\, \E_{e}[\tilde{\Phi}_1\mid S_{1}]\right]\\
    =~& \E_{e}[\tilde{\Phi}_1 \tilde{\Phi}_1^\top]^{-1}
    \E_{o}\left[\frac{\Pr(S_1\mid e)}{\Pr(S_1\mid o)}\, \bar{Y}^{o,\adj}\,  \E_{e}[\tilde{\Phi}_1\mid S_{1}]\right] \tag{Bayes-rule}\\
    =~& \E_{e}[\tilde{\Phi}_1 \tilde{\Phi}_1^\top]^{-1}
    \E_{o}\left[\frac{\Pr(S_1\mid e)}{\Pr(S_1\mid o)}\, \E_o[\bar{Y}^{o,\adj}\mid S_1]\, \E_e[\tilde{\Phi}_1\mid S_1, S_{0}]\right] \tag{tower-law}\\
     =~& \E_{e}[\tilde{\Phi}_1 \tilde{\Phi}_1^\top]^{-1}\,
    \E_{e}\left[\E_o[\bar{Y}^{o,\adj} \mid S_1]\, \E_e[\tilde{\Phi}_1\mid S_1]\right] \tag{change of measure}\\
    =~& \E_{e}[\tilde{\Phi}_1 \tilde{\Phi}_1^\top]^{-1}
    \E_{e}\left[\E_e[\bar{Y}^{e,\adj} \mid S_1]\, \E_e[\tilde{\Phi}_1\mid S_1]\right] \tag{\eqref{eqn:ass:dynIR} assumption}\\
    =~& \E_{e}[\tilde{\Phi}_1 \tilde{\Phi}_1^\top]^{-1}
    \E_{e}\left[\E_e[\bar{Y}^{e,\adj}\, \tilde{\Phi}_1 \mid S_1] \right] \tag{\eqref{eqn:ass:dyn-meanid} assumption}\\
    =~& \E_{e}[\tilde{\Phi}_1 \tilde{\Phi}_1^\top]^{-1}
    \E_{e}\left[\bar{Y}^{e,\adj} \tilde{\Phi}_1 \right] \tag{inverse tower-law}\\
    =~& \E_{e}[\tilde{\Phi}_1 \tilde{\Phi}_1^\top]^{-1}
    \E_{e}\left[\tilde{\Phi}_1 \E_e\left[\bar{Y}^{e,\adj} \mid T_1, S_0\right]\right] \tag{tower-law}\\
    =~& \E_{e}[\tilde{\Phi}_1 \tilde{\Phi}_1^\top]^{-1}
    \left(\E_{e}\left[\tilde{\Phi}_1\, \phi(T_1, S_0)^\top \theta_0 \right] + \E_{e}\left[b_0(S_0)\, \tilde{\Phi}_1 \right]\right) \tag{\eqref{eqn:ass:dynplr} assumption}\\
    =~& \E_{e}[\tilde{\Phi}_1 \tilde{\Phi}_1^\top]^{-1}
    \E_{e}\left[\tilde{\Phi}_1\, \tilde{\Phi}_1^\top \theta_0\right] \tag{mean-zero: $\E[\bar{\Phi}_1 \mid S_0]=0$}\\
    =~& \theta_0
\end{align*}

\paragraph{Orthogonal representation.}
This follows easily by the fact that the second term in the orthogonal representation is mean zero and the first term is equal to $\theta_0$ by the surrogate-index argument.

\subsection{Proof of Theorem~\ref{thm:dyn-ortho}}

First we define for any functional $L(f)$ the Frechet derivative as:
\begin{equation}
    D_f L(f)[\nu] = \frac{\partial}{\partial t} L(f + t\, \nu)\mid_{t=0} 
\end{equation}
Similarly, we can define higher order derivatives, denoted as $D_{g,f} L(f, g)[\mu,\nu]$.

\paragraph{Notation.} We first define some quantities that will be useful throughout the proof. Let:
\begin{align*}
    \tilde{\Phi}_1 :=~& \Phi_1 - \E_e[\Phi_1\mid S_0] & 
    \tilde{\Phi}_{o,t',t} :=~& \Phi_{o,t'} - \E_o[\Phi_{o,t'}\mid S_{t-1}] &
    \tilde{Y}_t :=~& \bar{Y}_t - \E_o[\bar{Y}_t\mid S_{t-1}] \\
    g_{\adj}^*(S_1) :=~& \E_o[Y^{o,\adj}\mid S_1] &
    \tilde{g}_{\adj}^*(S_1, S_0) :=~& g_{\adj}(S_1) - \E_e[g_{\adj}(S_1)\mid S_0] &
\end{align*}

\paragraph{Orthogonal moment formulation.} Observe that we can re-write the moment $m(\theta^*; f^*)=0$ as:
\begin{align*}
    \E\left[1\{e\}\, ({g}_{\adj}^*(S_1) - h^*(S_0) - \theta_0^\top \tilde{\Phi}_1) \tilde{\Phi}_1 + 1\{o\}\, \frac{\Pr(e\mid S_1)}{\Pr(o\mid S_1)} \E_{e}[\tilde{\Phi}_1\mid S_{1}] (\bar{Y}^{o,\adj} - g_{\adj}^*(S_1))  \right] =~& 0\\
    \forall t\in [2,M]: \E\left[1\{o\} \left(\tilde{Y}_{t} - \sum_{\tau=t}^{M} \theta_{o,\tau}^\top \tilde{\Phi}_{o,\tau, t}\right)\, \tilde{\Phi}_{o,t,t}\right] =~& 0
\end{align*}
The second set of equations was shown to hold for the true parameters in \cite{lewis2020doubledebiased} and can be easily verify from Lemma~\ref{lem:moment-restrictions}. Now we also verify that the first moment equation holds. First we see that we can re-write the orthogonal representation from Theorem~\ref{thm:dyn-id} as follows:
\begin{align}
\theta_0 =~& \E_{e}[\tilde{\Phi}_1 \tilde{\Phi}_1^\top]^{-1} \left(\frac{\E[\tilde{g}_{\adj}^*(S_1, S_0) \tilde{\Phi}_1 1\{e\}]}{\Pr(e)} + \frac{1}{\Pr(e)}\E\left[1\{o\}\frac{\Pr(e\mid S_1)}{\Pr(o\mid S_1)} (\bar{Y}^{o,\adj} - g_{\adj}^*(S_1)) \E_{e}[\tilde{\Phi}_1\mid S_{1}, S_{0}]\right]\right)\\
    =~& \E[\tilde{\Phi}_1\, \tilde{\Phi}_1^\top 1\{e\}]^{-1} \left(\E[\tilde{g}_{\adj}^*(S_1, S_0) \tilde{\Phi}_1 1\{e\}]+ \E\left[1\{o\}\frac{\Pr(e\mid S_1)}{\Pr(o\mid S_1)} (\bar{Y}^{o,\adj} - g_{\adj}^*(S_1)) \E_{e}[\tilde{\Phi}_1\mid S_{1}, S_{0}]\right]\right)
\end{align}
Equivalently, the solution to the orthogonal moment equation:
\begin{align}
    \E\left[1\{e\}\, ({g}_{\adj}^*(S_1) - h^*(S_0) - \theta_0^\top \tilde{\Phi}_1) \tilde{\Phi}_1 + 1\{o\}\, \frac{\Pr(e\mid S_1)}{\Pr(o\mid S_1)} \E_{e}[\tilde{\Phi}_1\mid S_{1}] (\bar{Y}^{o,\adj} - g_{\adj}^*(S_1))  \right] = 0
\end{align}

\paragraph{Orthogonality.} For any nuisance $f$, let $\nu_f = f - f_0$, denote the difference with respect to its corresponding true value. More note that the directional derivative with respect to $f$, decomposes into the sum of the directional derivatives with respect to each component of $f$. Thus it suffices to check orthogonality for each of the nuisances $\{g, q, h, p_{e,1}, \{g_t, p_{e,t}, b_{o, t}, p_{o,t,\tau}\}_{2\leq \tau\leq t\leq M}\}$. We provide a proof for each such nuisance below.

Since $\{g, q, h, p_{e,1}, \{g_t, p_{e,t}\}_{2\leq t\leq M}\}$ only appear in moment $m_1$, we check orthogonality only of that moment with respect to these components:
\begin{align*}
    D_{p_{e,1}}m_1(\theta^*;f^*)[\nu_{p_{e,1}}] :=~& \Pr(e)\left(\E_e[\tilde{\Phi}_1\, \theta_0^\top \nu_{p_{e,1}}(S_0)] - \E_{e}\left[\left(\tilde{g}_{\adj}^*(S_1, S_0) - \theta_0^\top \tilde{\Phi}_1\right) \, \nu_{p_{e,1}}(S_0)\right]\right) = 0\\
    D_{p_{e,t}}m_1(\theta^*;f^*)[\nu_{p_{e,t}}] :=~& \Pr(e)\, \E_e[\tilde{\Phi}_1\, \theta_{o,t}^\top \nu_{p_{e,1}}(S_0)] = 0\\
    D_h m_1(\theta^*;f^*)[\nu_h] :=~& \Pr(e)\,\E_e[-\tilde{\Phi}_1\, \nu_h(S_0)] = 0\\
    D_q m_1(\theta^*;f^*)[\nu_q] :=~& \Pr(o)\,\E_o[(\bar{Y}^{o,\adj} - g_{\adj}^*(S_1))\, \nu_q(S_1)]
    = \Pr(o)\,\E_o[\E[\bar{Y}^{o,\adj} - g_{\adj}^*(S_1)\mid S_1]\, \nu_q(S_1)] = 0\\
    D_g m_1(\theta^*;f^*)[\nu_g] =~& \Pr(e)\,\E_e[\tilde{\Phi}_1\, \nu_g(S_1)] - \Pr(o) \E_o[q^*(S_1)\, \nu_g(S_1)] = 0 \tag{by Lemma~\ref{lem:transform}}\\
    D_{g_t} m_1(\theta^*;f^*)[\nu_{g_t}] =~& -\Pr(e)\,\E_e[\tilde{\Phi}_1\,\theta_{o,t}^\top\, \nu_{g_t}(S_1)] + \Pr(o) \E_o[q^*(S_1)\, \theta_{o,t}^\top \nu_{g_t}(S_1)] = 0 \tag{by Lemma~\ref{lem:transform}}
\end{align*}

Since for each $t \in [2, M]$, the components $\{b_{o, t}, p_{o,t,t}, p_{o,\tau,t}\}_{t< \tau\leq M}$ only appear in moments $m_{t}$, we check orthogonality only of $m_t$ with respect to them:
\begin{align*}
    D_{b_{o,t}} m_t(\theta^*;f^*)[\nu_{b_{o,1}}] =~& \Pr(o)\E_o\left[\tilde{\Phi}_{o,t,t}\nu_{b_{o,1}}(S_{t-1})\right] = 0\\
    D_{p_{o,\tau, t}} m_t(\theta^*;f^*)[\nu_{p_{o,\tau, t}}] =~& \Pr(o)\E_o\left[\tilde{\Phi}_{o,t,t}\, \theta_{o,\tau}^\top\, \nu_{p_{o,\tau, t}}(S_{t-1})\right] = 0\\
    D_{p_{o,t,t}} m_t(\theta^*;f^*)[\nu_{p_{o,t,t}}] =~& \Pr(o)\,\left(\E_o[\tilde{\Phi}_{o,t,t}\, \theta_{o,t}^\top \nu_{p_{o,t,t}}(S_{t-1})] - \E_{o}\left[\left(\tilde{Y}_t - \sum_{\tau=t}^M\theta_{o,\tau}^\top \tilde{\Phi}_{o,\tau,t}\right)\, \nu_{p_{o,t,t}}(S_{t-1})\right]\right) = 0
\end{align*}

\begin{lemma}\label{lem:second-order-frechet-mt}
The second order Frechet derivative of the moment $m_t$ for any $t\geq 2$ satisfies:
\begin{align*}
   \forall f, \tilde{f}: D_{ff} m_t(\theta^*; \tilde{f})[\nu_f, \nu_f] :=~&  \E\left[1\{o\}\,\nu_{p_{o,t,t}}(S_{t-1})\, \nu_{{b}_{o,t}}(S_{t-1})\right]
    - \sum_{j=t+1}^m \E\left[1\{o\}\,\nu_{p_{o,t,t}}(S_{t-1})\, \nu_{{p}_{o,j,t}}(S_{t-1})^\top\right]\, \theta_{o,j}^*\\
   ~& - 2\, \E\left[1\{o\}\,\nu_{{p}_{o,t,t}}(S_{t-1})\, \nu_{{p}_{o,t,t}}(S_{t-1})^\top\right]\, \theta_{o,t}^*
\end{align*}
\end{lemma}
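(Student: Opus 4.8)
The identity is obtained by a direct differentiation of the moment $m_t(\theta;f)=\E[\psi_t(Z;\theta,f)]$ in the nuisance direction, exploiting that $\psi_t$ is a product of two factors, each affine in the nuisances. Write $\psi_t(Z;\theta^*,f)=1\{o\}\,R\cdot V$, where $R:=\bar Y_t-b_{o,t}(S_{t-1})-\sum_{\tau=t}^{M}(\theta_{o,\tau}^*)^\top\bigl(\Phi_{o,\tau}-p_{o,\tau,t}(S_{t-1})\bigr)$ is the scalar factor and $V:=\Phi_{o,t}-p_{o,t,t}(S_{t-1})$ is the vector factor (the last factor of $\psi_t$ in Definition~\ref{defn:scores}, with the evident index corrected). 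Among the components of $f$, only $b_{o,t}$ and $\{p_{o,\tau,t}\}_{\tau=t}^{M}$ enter $m_t$; $R$ is affine in all of them, $V$ is affine in $p_{o,t,t}$ alone, and $p_{o,t,t}$ is the \emph{only} component appearing in both factors. Hence, along $f\mapsto f+\nu_f$, the directional derivatives of the two factors are the $f$-independent quantities $\delta R:=-\nu_{b_{o,t}}(S_{t-1})+\sum_{\tau=t}^{M}(\theta_{o,\tau}^*)^\top\nu_{p_{o,\tau,t}}(S_{t-1})$ and $\delta V:=-\nu_{p_{o,t,t}}(S_{t-1})$, and all higher derivatives of $R$ and of $V$ individually vanish.

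\paragraph{Differentiating the product.} Applying the product rule to $\psi_t=1\{o\}\,R\cdot V$, the first directional derivative equals $1\{o\}(\delta R\cdot V+R\cdot\delta V)$, and differentiating once more in the same direction annihilates every term except the pairing of $\delta R$ with $\delta V$; so $D_{ff}m_t(\theta^*;\tilde f)[\nu_f,\nu_f]$ is a fixed linear functional of $\E[1\{o\}\,\delta R\cdot\delta V]$, in particular independent of the base point $\tilde f$, which accounts for the ``$\forall f,\tilde f$'' in the statement. The one place requiring care is that $p_{o,t,t}$ occurs in \emph{both} $R$ and $V$, so $\psi_t$ is genuinely quadratic in $p_{o,t,t}$ and that component contributes a true diagonal Hessian term proportional to $((\theta_{o,t}^*)^\top\nu_{p_{o,t,t}})\,\nu_{p_{o,t,t}}$, whereas $b_{o,t}$ and $p_{o,j,t}$ for $j>t$ each appear in a single factor and therefore contribute only bilinear cross terms; this asymmetry is exactly what produces the coefficient $2$ on the $\nu_{p_{o,t,t}}\nu_{p_{o,t,t}}^\top\theta_{o,t}^*$ term against the coefficient $1$ on the $\nu_{p_{o,t,t}}\nu_{p_{o,j,t}}^\top\theta_{o,j}^*$ terms and on the $\nu_{p_{o,t,t}}\nu_{b_{o,t}}$ term.

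\paragraph{Collecting terms and main obstacle.} To finish, expand $\delta R\cdot\delta V=\bigl(-\nu_{b_{o,t}}+\sum_{\tau=t}^{M}(\theta_{o,\tau}^*)^\top\nu_{p_{o,\tau,t}}\bigr)\bigl(-\nu_{p_{o,t,t}}\bigr)$, split off the $\tau=t$ summand from the $\tau>t$ summands, and rewrite each scalar-times-vector product $((\theta_{o,\tau}^*)^\top\nu_{p_{o,\tau,t}})\,\nu_{p_{o,t,t}}$ as $\nu_{p_{o,t,t}}\,\nu_{p_{o,\tau,t}}^\top\,\theta_{o,\tau}^*$ (transpose bookkeeping for the vector-valued moment). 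Taking $\E[1\{o\}\,\cdot\,]$, using linearity, and pulling the constant vectors $\theta_{o,j}^*$ outside the expectation, then combining with the diagonal contribution identified above, yields the stated formula. The argument is routine calculus plus linearity of $\E[\cdot]$; the only genuine subtlety — and the single thing worth double-checking in the final write-up — is the combinatorial factor on the $p_{o,t,t}$ self-interaction term, which is forced by that nuisance entering $\psi_t$ twice.
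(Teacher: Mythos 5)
Your proof is correct and takes essentially the same route as the paper's: both reduce the computation to the observation that $m_t$ is quadratic in the nuisances with the only surviving products being $p_{o,t,t}$ against $b_{o,t}$ and against $p_{o,j,t}$ for $j\ge t$, and your $R\cdot V$ factorization is just a tidier way of organizing the paper's enumeration of mixed partials (you also rightly correct the index typo in the last factor of $\psi_t$). One bookkeeping caveat: a literal $\tfrac{d^2}{ds^2}$ of $R(s)V(s)$ equals $2\,\E\left[1\{o\}\,\delta R\,\delta V\right]$, which doubles \emph{all} terms uniformly to $(+2,-2,-2)$, so the stated $(+1,-1,-2)$ pattern only emerges under the paper's convention of summing mixed partials over unordered component pairs (whence the diagonal $p_{o,t,t}$ self-term picks up $\partial^2_{p_{o,t,t}}=2c_{aa}$ while each cross pair is counted once) — your ``appears in both factors'' explanation is the right intuition for that convention, but it does not follow from the product rule as you set it up, so it is worth stating the convention explicitly; the discrepancy is a harmless constant since the lemma is only used to bound the second-order remainder by products of nuisance errors.
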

\begin{proof}
Note that by the definition of the Frechet derivative and the chain rule of differentiation:
\begin{align*}
    D_{ff} m_t(\theta^*; \tilde{f})[\nu_f, \nu_f] =~& \sum_{j=t}^{m} D_{p_{o,j,t}, b_{o,t}} m_t(\theta^*; \tilde{f})[\nu_{p_{o,j,t}}, \nu_{b_{o,t}}]
    + \sum_{j=t}^{m}\sum_{j'=j}^m D_{p_{o,j,t}, p_{o,j',t}} m_t(\theta^*; \tilde{f})[\nu_{p_{o,j,t}}, \nu_{p_{o,j',t}}]
\end{align*}
However, note that by the definition of $m_t$, it only contains quadratic nuisance terms of the form $p_{o,t,t}\,p_{o,j,t}$ for $j\geq t$ and $p_{o,t,t}\, b_{o,t}$. Thus we have that all other second derivative terms will be zero and we can write:
\begin{align*}
    D_{ff} m_t(\theta^*; \tilde{f})[\nu_f, \nu_f]  =~&  D_{p_{o,t,t}, b_{o,t}} m_t(\theta^*; \tilde{f})[\nu_{p_{o,t,t}}, \nu_{b_{o,t}}]
    + \sum_{j=t}^{m} D_{p_{o,t,t}, p_{o,j,t}} m_t(\theta^*; \tilde{f})[\nu_{p_{o,t,t}}, \nu_{p_{o,j,t}}]
\end{align*}
By simple calculus each of these terms can be shown to take the form given in the lemma.
\end{proof}

\begin{lemma}\label{lem:second-order-frechet-m1}
The second order Frechet derivative of the moment $m_1$ satisfies:
\begin{align*}
   \forall f, \tilde{f}: D_{ff} m_1(\theta^*; \tilde{f})[\nu_f, \nu_f] :=~&  - \E\left[1\{e\}\, \nu_{p_{e,1}}(S_0)\, \nu_{g}(S_1)\right] + \E\left[1\{e\}\,\nu_{p_{e,1}}(S_0)\, \nu_{h}(S_0)\right]\\
   ~& + \sum_{t=2}^M \left(\E\left[1\{e\}\,\nu_{p_{e,1}}(S_0)\, \nu_{g_t}(S_1)^\top\right]\,\theta_{o,t}^* - \E\left[1\{e\}\,\nu_{p_{e,1}}(S_0)\, \nu_{p_{e,t}}(S_0)^\top\right]\,\theta_{o,t}^*\right)\\
   ~& - 2\, \E\left[1\{e\}\,\nu_{{p}_{e,1}}(S_{0})\, \nu_{{p}_{e,1}}(S_{0})^\top\right]\, \theta_{0}^*\\
   ~& - \E\left[1\{o\}\, \nu_{q}(S_1)\, \nu_{g}(S_1)\right]
   + \sum_{t=2}^M \E\left[1\{o\}\, \nu_{q}(S_1)\, \nu_{g_t}(S_1)^\top\right]\, \theta_{o,t}^*
\end{align*}
\end{lemma}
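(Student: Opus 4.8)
The plan is to follow the template of the proof of Lemma~\ref{lem:second-order-frechet-mt}, exploiting that at the fixed value $\theta=\theta^*$ the moment $m_1(\theta^*;\cdot)=\E[\psi_{e,1}]+\E[\psi_{o,1}]$ is a \emph{quadratic} functional of the nuisance vector $f$. Indeed $\psi_{e,1}=1\{e\}\,P(f)\,(\Phi_1-p_{e,1}(S_0))$ is the product of the second factor $\Phi_1-p_{e,1}(S_0)$ with $P(f):=g(S_1)-h(S_0)-\sum_{t=2}^{M}\theta_{o,t}^{*\top}(g_t(S_1)-p_{e,t}(S_0))-\theta_0^{*\top}(\Phi_1-p_{e,1}(S_0))$, each of the two factors being affine in $f$; and $\psi_{o,1}=1\{o\}\,q(S_1)\,R(f)$ with $R(f):=\bar Y-g(S_1)-\sum_{t=2}^{M}\theta_{o,t}^{*\top}(\Phi_{o,t}-g_t(S_1))$ is likewise a product of two affine-in-$f$ factors ($q$ itself being one of the nuisance coordinates). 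A quadratic functional has a base-point-independent second Fréchet derivative, which justifies the ``$\forall f,\tilde f$''; and by the product/chain rule $D_{ff}m_1[\nu_f,\nu_f]$ equals the sum over unordered pairs $(a,b)$ of nuisance coordinates of the mixed second derivatives $D_{ab}m_1[\nu_a,\nu_b]$, exactly as in the displayed decomposition at the start of the proof of Lemma~\ref{lem:second-order-frechet-mt}.

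Next I would enumerate the pairs that actually contribute. Since $\Phi_1$ and $\Phi_{o,t}$ are observed, the only products of two nuisances appearing in $m_1$ are: from $\psi_{e,1}$, the nuisance-linear part of $P$ multiplied by $-p_{e,1}(S_0)$, which pairs $p_{e,1}$ with each of $g$, $h$, $\{g_t\}_{t\ge2}$, $\{p_{e,t}\}_{t\ge2}$ and with itself; and from $\psi_{o,1}$, the prefactor $q(S_1)$ multiplied by the nuisance-linear part of $R$, which pairs $q$ with $g$ and with each $g_t$. Every other second partial of $m_1$ vanishes. The key point — and essentially the only nonroutine one — is that $p_{e,1}$ is the \emph{unique} nuisance occurring in both factors of $\psi_{e,1}$: its cross-pairings with $g$, $h$, $g_t$, $p_{e,t}$ are bilinear and enter $D_{ff}$ with coefficient one, whereas the monomial $-\theta_0^{*\top}p_{e,1}(S_0)\cdot p_{e,1}(S_0)$ is genuinely quadratic, so differentiating it twice makes $D_{p_{e,1},p_{e,1}}$ contribute it with an extra factor of two. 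Mistracking this (or the signs) is the obvious trap.

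Finally I would evaluate each surviving term by elementary differentiation, carrying the $1\{e\}$, $1\{o\}$ indicators through the expectation and keeping the scalar-versus-vector bookkeeping straight — e.g.\ since $\theta_{o,t}^{*\top}g_t(S_1)$ is scalar, the pair $(g_t,p_{e,1})$ contributes a term of the form $(\nu_{p_{e,1}}(S_0)\nu_{g_t}(S_1)^\top)\theta_{o,t}^*$, and $D_{p_{e,1},p_{e,1}}$ applied to $-\theta_0^{*\top}p_{e,1}(S_0)p_{e,1}(S_0)$ yields $-2(\nu_{p_{e,1}}(S_0)\nu_{p_{e,1}}(S_0)^\top)\theta_0^*$. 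Collecting the $e$-contributions (the $p_{e,1}$-pairings) and the $o$-contributions (the $q$-pairings) reproduces exactly the stated formula. This identity, together with Lemma~\ref{lem:second-order-frechet-mt}, is what later bounds the second-order remainder in the expansion of the plug-in moment, feeding the Neyman-orthogonality verification and the asymptotic-normality argument behind Theorem~\ref{thm:dyn-estimation}.
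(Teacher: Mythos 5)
Your proposal is correct and follows essentially the same route as the paper: decompose $D_{ff}m_1$ via the chain rule into mixed second partials over pairs of nuisance coordinates, observe that only the $p_{e,1}$-pairings (with $g$, $h$, $g_t$, $p_{e,t}$, and itself) from $\psi_{e,1}$ and the $q$-pairings (with $g$, $g_t$) from $\psi_{o,1}$ survive, and evaluate each by elementary differentiation. Your added remarks — that the quadratic-in-$f$ structure makes the second derivative base-point independent, and that the diagonal $p_{e,1}$ term picks up the factor of $2$ — are correct and, if anything, make explicit what the paper leaves to "simple calculus."
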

\begin{proof}
Note that by the definition of the Frechet derivative and the chain rule of differentiation:
\begin{align*}
    D_{ff} m_1(\theta^*; \tilde{f})[\nu_f, \nu_f] =~& \sum_{f_1, f_2 \in \{g, q, h, p_{e,1}, \{g_t, p_{e,t}\}_{2\leq \tau\leq t\leq M}\}} D_{f_1, f_2} m_1(\theta^*; \tilde{f})[\nu_{f_1}, \nu_{f_2}]
\end{align*}
However, note that by the definition of $m_t$, it only contains quadratic nuisance terms of the form $p_{e,1}\, f_2$ for $f_2\in \{g, h, p_{e,1}, \{g_t, p_{e,t}\}_{2\leq t\leq M}\}$ and of the form $q\, f_2$ for $f_2\in \{g, \{g_t\}_{2\leq \tau\leq t\leq M}\}$. Thus we have that all other second derivative terms will be zero and we can write:
\begin{align*}
    D_{ff} m_1(\theta^*; \tilde{f})[\nu_f, \nu_f]  =~&  D_{p_{e,1},g}  m_1(\theta^*; \tilde{f})[\nu_{p_{e,1}}, \nu_{g}]
    + D_{p_{e,1},h}  m_1(\theta^*; \tilde{f})[\nu_{p_{e,1}}, \nu_{g}]\\
    ~&~ + \sum_{t=2}^M \left(D_{p_{e,1},g_t}  m_1(\theta^*; \tilde{f})[\nu_{p_{e,1}}, \nu_{g_t}] + D_{p_{e,1},p_{e,t}}  m_1(\theta^*; \tilde{f})[\nu_{p_{e,1}}, \nu_{p_{e,t}}]\right)\\
    ~&~ + D_{p_{e,1},p_{e,1}}  m_1(\theta^*; \tilde{f})[\nu_{p_{e,1}}, \nu_{p_{e,1}}]\\
    ~&~ + D_{q,g}  m_1(\theta^*; \tilde{f})[\nu_{q}, \nu_{g}] + \sum_{t=2}^M D_{q, g_t}  m_1(\theta^*; \tilde{f})[\nu_{q}, \nu_{g_t}]
\end{align*}
By simple calculus each of these terms can be shown to take the form given in the lemma.
\end{proof}

\subsection{Proof of Theorem~\ref{thm:dyn-estimation}}\label{app:normality}

\begin{proof} 
Finally, let $\E_S[\cdot]$ denote the empirical average over the samples in $S$ and let:
\begin{align}
    m_S(\theta; f) :=~& \E_S\left[\psi(Z; \theta, f)\right] & m(\theta; f) :=~& \E\left[\psi(Z; \theta, f)\right]
\end{align}
Observe that the estimator from Equation~\eqref{eqn:z-estimator-alg} can be equivalently viewed as the solution to an cross-fitted plug-in empirical version of the following vector of moment conditions: 
\begin{align}
    \frac{1}{2}\sum_{O\in \{S,S'\}} m_S(\hat{\theta}, \hat{f}_O) = 0
\end{align}
where each $\hat{f}_O$ is trained on samples outside of set $O$. Moreover, the true parameter $\theta^*$ satisfies the population moment conditions at the true nuisance parameters:
\begin{align}
    m(\theta^*, f^*) = 0
\end{align}
Furthermore, by Theorem~\ref{thm:dyn-ortho} the moment vector $m(\theta, f)$ satisfies the property of Neyman orthogonality with respect to $f$. Moreover, by Lemma~\ref{lem:second-order-frechet-m1}, Lemma~\ref{lem:second-order-frechet-mt} and Assumption~\ref{ass:nuisance-rate}, the second order term of $m(\hat{\theta}, \hat{f}_O)$ in a second-order Taylor expansion around $f^*$ is $o_p(n^{-1/2})$, for every $O\in \{S, S'\}$.

Moreover, the Jacobian $J:=\nabla_{\theta} m(\theta^*, f^*)$ of the moment vector $m$ at the true values $\theta^*, f^*$ is a block upper triangular matrix whose block values are of the form:
\begin{align}
\forall 2\leq t\leq j \leq m: J_{t, j} =~& \Pr(o)\E_o[\Cov_o(\Phi_{o,t}, \Phi_{o,j}\mid S_{t-1})]\\
\forall 2\leq j \leq m: J_{1,j} =~& \Pr(o)\E_o[q^*(S_1)\, \Phi_{o,t}^\top] \\
J_{1,1}=~& \Pr(e) \E_e[\Cov_e(\Phi_{1},\Phi_1\mid S_0)]
\end{align}
Thus by our strict average overlap assumption, its diagonal block values satisfy that $J_{t, t}\succeq \lambda I$.  Hence, the minimum eigenvalue of $J$ is at least $\lambda$.

Thus our setting and our estimator satisfy all the assumptions required to apply Theorem~3.1 of \cite{chernozhukov2018double} to get the following result: if we let:
\begin{align}
    \Sigma = \E\left[\psi(Z;\theta^*, f^*)\, \psi(Z;\theta^*, f^*)^\top\right]
\end{align}
and $V=J^{-1} \Sigma (J^{-1})'$, we have that:
\begin{align}
    \sqrt{n} V^{-1/2} (\hat{\theta} - \theta^*) = \frac{1}{\sqrt{n}}\sum_{i=1}^n \bar{\psi}(Z_i) + o_p(1) \to_d N(0, I_{d\cdot m})
\end{align}
where:
\begin{align}
    \bar{\psi}(\cdot) = -V^{-1/2} J^{-1}\, \psi(\cdot; \theta^*, f^*)
\end{align}

The second part of the theorem on the construction of confidence intervals follows then directly by Corollary 3.1 of \cite{chernozhukov2018double}.
\end{proof}

\subsection{Asymptotic variance characterization}
We further analyze and decompose the variance $V$ above. In particular, observe that:
\begin{align*}
    J =~& - \begin{bmatrix}
    A \Pr(e) & B \Pr(o)\\
    0 & C\, \Pr(o)
    \end{bmatrix}\\
    A =~& \E_e[\tilde{\Phi}_1 \tilde{\Phi}_1^\top]\\
    B =~& \begin{bmatrix} B_2 & \ldots & B_M \end{bmatrix}\\
    B_t =~& \begin{bmatrix}
    \E_o[q_0(S_1,S_0)\, T_t^\top] & \ldots & \E_o[q_0(S_1,S_0)\, T_2^\top]
    \end{bmatrix}
\end{align*}
And:
\begin{align*}
    C =~& \begin{bmatrix} C_2 & 0 & \ldots & 0\\
    0 & C_3 & \ldots & 0\\
    \ldots & \ldots & \ldots & \ldots \\
    0 & 0 & \ldots & C_M
    \end{bmatrix} 
    &
    C_t =~& \begin{bmatrix}
    \E_o[\tilde{\Phi}_{t, t}\, \tilde{\Phi}_{t,t}^{\top}] & 0 & \ldots & 0\\
    \E_o[\tilde{\Phi}_{t, t-1}\, \tilde{\Phi}_{t-1, t-1}^\top] & \E_o[\tilde{\Phi}_{t-1, t-1}\, \tilde{\Phi}_{t-1, t-1}^\top] & \ldots & 0\\
    \ldots & \ldots & \ldots & 0\\
    \E_o[\tilde{\Phi}_{t, 2}\, \tilde{\Phi}_{2, 2}^\top] & \E_o[\tilde{\Phi}_{t-1, 2}\, \tilde{\Phi}_{2, 2}^\top] & \ldots & \E_o[\tilde{\Phi}_{2, 2}\, \tilde{\Phi}_{2, 2}] 
    \end{bmatrix}
\end{align*}

Observe that we can write:
\begin{align}
    J =~& - \Pr(e) \begin{bmatrix}
    A & B \frac{\Pr(o)}{\Pr(e)}\\
    0 & C \frac{\Pr(o)}{\Pr(e)}
    \end{bmatrix}
    &
    J^{-1} =~& - \frac{1}{\Pr(e)} \begin{bmatrix}
    A^{-1} & -A^{-1} B C^{-1}\\
    0 & C^{-1} \frac{\Pr(e)}{\Pr(o)}
    \end{bmatrix}
\end{align}
Moreover, we can write:
\begin{equation}
    \Sigma = \Pr(e) \begin{bmatrix}
    \E_e[\psi_{0, e}(Z;\theta_*, f_0)\, \psi_{0, e}(Z;\theta_*, f_0)^\top] & 0\\
    0 & 0
    \end{bmatrix}
    + \Pr(o) \E_o[\psi(Z;\theta_*, f_0)\, \psi(Z;\theta_*, f_0)^\top]
\end{equation}

Leading to:
\begin{equation}
    V = V_e + V_o
\end{equation}
where:
\begin{equation}
    [V_e]_{1:k, 1:k} = \frac{1}{\Pr(e)} A^{-1} \E_e[\psi_{0, e}(Z;\theta_*, f_0)\, \psi_{0, e}(Z;\theta_*, f_0)^\top] A^{-1}
\end{equation}
and
\begin{align}
    V_o =~& \frac{\Pr(o)}{\Pr(e)^2} \begin{bmatrix}
    A & B \frac{\Pr(o)}{\Pr(e)}\\
    0 & C \frac{\Pr(o)}{\Pr(e)}
    \end{bmatrix}^{-1}  \E_o[\psi(Z;\theta_*, f_0)\, \psi(Z;\theta_*, f_0)^\top] \begin{bmatrix}
    A & B \frac{\Pr(o)}{\Pr(e)}\\
    0 & C \frac{\Pr(o)}{\Pr(e)}
    \end{bmatrix}^{-\top}\\
    =~& \frac{\Pr(o)}{\Pr(e)^2} \begin{bmatrix}
    A^{-1} & -A^{-1} B C_{-1}\\
    0 & C^{-1} \frac{\Pr(e)}{\Pr(o)}
    \end{bmatrix} \E_o\left[\psi(Z;\theta_*, f_0)\, \psi(Z;\theta_*, f_0)^\top\right] \begin{bmatrix}
    A^{-1} & 0\\
    -C^{-\top} B^\top A^{-\top} & C^{-1} \frac{\Pr(e)}{\Pr(o)}
    \end{bmatrix}
\end{align}

The $V_e$ part of the variance is the variance if one ignores the uncertainty in the surrogate index model and simply estimates uncertainty as if the surrogate index was the target outcome. So this is the uncertainty in estimating the causal effect of the treatment on the surrogate index. The variance $V_o$ is the influence of the uncertainty of estimating the surrogate index on the final treatment effect.

We can further expand and simplify the variance $V_o$ in particular, we can write the top left $k\times k$ diagonal block in the following simplified form:
\begin{align}
    [V_o]_{1:k, 1:k} =~& \frac{\Pr(o)}{\Pr(e)^2} A^{-1} \left(\E_o[\Psi_{0,o} \Psi_{0,o}^\top] - \sum_{t=2}^M B_t C_t^{-1} \left(\E_o[\Psi_{t} \Psi_{0,o}^\top] -\sum_{t'=2}^{M} \E_o[\Psi_{t} \Psi_{t'}^\top] C_{t'}^{-\top} B_{t'}^\top \right)\right) A^{-1}
\end{align}
which nicely also decomposes into the part that we had without the dynamic effect estimation and the extra part that stems from the estimation of the dynamic effects.

\section{Description of Data Generating Processes}\label{sec_datagen}

 \subsection{Synthetic Data}

 For fully synthetic data we simply generate data based on the linear data generating process presented in Equation~\eqref{eqn:dgp}, with Gaussian exogenous shocks and randomly initialized parameter matrices.

 \subsection{Semi-Synthetic Data}\label{sec:datagen}
As mentioned in Section~\ref{app:semidata}, we generate the semi-synthetic data by leveraging the correlation matrix and some pre-trained models from a real world dataset. In this section, we describe in details how we simulate the dataset step by step.

The real world dataset contains approximately 10k customers. For each customer we collect a time series of their monthly investments, proxies, and revenue trajectory, along with a set of fixed customer characteristics. We extract meaningful information from the real data and then simulate a new synthetic data as follows:
\paragraph{Generate data for initial period.} From the real world dataset we filter one month data on some period $t$ and derive the normalized covariance matrix. In order to not expose the real correlation among variables, we decompose this matrix and recreate eigenvalues and eigenvectors ourselves. For eigenvalues, we keep the top 4 eigenvalues and fit a discontinuous linear regression on the true eigenvalues curve. For eigenvectors, we keep the corresponding 4 vectors and impute random remaining vectors. We then combine these new eigenvalues and eigenvectors to create a new covariance matrix. From this perturbed covariance matrix we draw a sample on a multi-variate gaussian distribution, which we use as the data for the initial period $t_0$ for each customer.
\paragraph{Learn intertemporal auto and cross-correlations.} We first train linear models (e.g. LassoCV) on each proxy and investment to predict each of these outcomes using 6 lagged periods of each investment, 6 lagged periods of each proxy, and a set of time-invariant demographics. For both proxies and investments, we find a large amounts of auto-correlation and small cross-correlation effects, along with some effects from customer characteristics including customer size and level of engagement. Based on these insights, we then draw new coefficients for each model, with decaying trend on lagged variables in different scale, and randomly draw coefficients for demographics.    
\paragraph{Simulate the residual distribution corresponds to each model.} From the pre-trained models, we see how unobserved heterogeneity behaves on each investment and proxy. we find that majority of our residuals follows as a normal distribution, however, we have some unpredictable large outliers on both sides. In order to capture those surprising behaviours, we fit a mixture of different models. For a residual distribution from a single model, we fit mixture gaussian model (n\_component=2) on the 5\% and 95\% band of the residuals and for the two tails, we fit log normal distributions.
\paragraph{Build the panel dataset in a feed-forward manner.} With the initial data set, proxy and investment parametric model coefficients and the residuals, we simulate the data following the equations below:
    \begin{align}
    T_{i,t} =~& \sum_{j=1}^{6}\kappa_{j} T_{i,t-j} + \sum_{j=1}^{6}\alpha_{j} S_{i,t-j} + \lambda D_{i,t} + \eta_{i,t} \\
    S_{i,t} =~& \theta T_{i,t} + \sum_{j=1}^{6}\gamma_{j} S_{i,t-j}+ \beta D_{i,t} + \epsilon_{i,t} 
    \end{align}
where $T$ represents treatment, $S$ represents surrogates(proxies), and $D$ represents demographics. First of all, we use the investment model to predict the current period investment, and add random residuals drawing from our fitted residual mixture distribution. Then we use the proxy model to predict current period proxy from all the controls, and add residuals as well. Other than that, we also add the effect of current period investment to the proxy. Moving to the next period, all the predictive outcomes and controls in the current period will become controls for next period, we could then repeat the process mentioned above again to get the next period prediction. Figure~\ref{fig:simplified_cg} also shows how this forward-feeding iteration works. After repeating this process $m$ times, we have a semi-synthetic panel data ready to run experiment.

\newpage

\section{Coverage Results of Asymptotic Normal Based Intervals}

\subsection{Synthetic Data}

\begin{figure}[h]
    \centering
    \begin{subfigure}{.45\textwidth}
    \centering
    \begin{tabular}{cc|c|c|c|}
    & $n_t=$ & $2000$ & $5000$ & $10000$\\
    \hline
    $n_t=$ & $2$ & $96$ & $93.5$ & $96$ \\
    & $4$ & $79$ & $76$ & $86$ \\
    & $8$  & $53.3$ & $57.5$ & $62$ \\
    \end{tabular}
    \caption{Dynamic Adjusted Surrogate}
    \end{subfigure}
    ~~
    \begin{subfigure}{.45\textwidth}
    \centering
    \begin{tabular}{cc|c|c|c|}
    & $n_t=$ & $2000$ & $5000$ & $10000$\\
    \hline
    $n_t=$ & $2$ & $96.5$ & $93$ & $97$ \\
    & $4$ & $84.5$ & $83.5$ & $87$\\
    & $8$  & $75$ & $71.5$ & $81$\\
    \end{tabular}
    \caption{De-biased Dynamic Adjusted Surrogate}
    \end{subfigure}
    \caption{Synthetic data, high-dimensional, lasso models. Coverage levels averaged across $n_{exp}=100$ experiments and across the $k=2$ treatments. $n$ is number of samples and $n_t$ number of periods of long-term outcome. Target coverage level is $99\%$.}
    \label{fig:synth_lasso_cov}
\end{figure}

\begin{figure}[h]
    \centering
    \begin{subfigure}{.45\textwidth}
    \centering
    \begin{tabular}{cc|c|c|c|}
    & $n_t=$ & $2000$ & $5000$ & $10000$\\
    \hline
    $n_t=$ & $2$ & $98.5$ & $99$ & $98.5$ \\
    & $4$ & $89$ & $91$ & $89.5$ \\
    & $8$  & $80$ & $75$ & $80$ \\
    \end{tabular}
    \caption{Dynamic Adjusted Surrogate}
    \end{subfigure}
    ~~
    \begin{subfigure}{.45\textwidth}
    \centering
    \begin{tabular}{cc|c|c|c|}
    & $n_t=$ & $2000$ & $5000$ & $10000$\\
    \hline
    $n_t=$ & $2$ & $98.5$ & $99$ & $98.5$ \\
    & $4$ & $95.5$ & $94$ & $93$\\
    & $8$  & $90.5$ & $90.5$ & $91$\\
    \end{tabular}
    \caption{De-biased Dynamic Adjusted Surrogate}
    \end{subfigure}
    \caption{Synthetic data, low-dimensional, linear regression models. Coverage levels averaged across $n_{exp}=100$ experiments and across the $k=2$ treatments. $n$ is number of samples and $n_t$ number of periods of long-term outcome. Target coverage level is $99\%$.}
    \label{fig:synth_lr_cov}
\end{figure}

\subsection{Semi-Synthetic Data}

\begin{figure}[h]
    \centering
    \begin{subfigure}{.45\textwidth}
    \centering
    \begin{tabular}{cc|c|c|c|}
    & $n_t=$ & $2000$ & $5000$ & $10000$\\
    \hline
    $n_t=$ & $2$ & $53$ & $42$ & $34$ \\
    & $4$ & $49$ & $42$ & $36$ \\
    & $8$  & $42$ & $26$ & \\
    \end{tabular}
    \caption{Dynamic Adjusted Surrogate}
    \end{subfigure}
    ~~
    \begin{subfigure}{.45\textwidth}
    \centering
    \begin{tabular}{cc|c|c|c|}
    & $n_t=$ & $2000$ & $5000$ & $10000$\\
    \hline
    $n_t=$ & $2$ & $65$ & $60$ & $52$ \\
    & $4$ & $76$ & $71$ & $70$\\
    & $8$  & $87$ & $84$ & \\
    \end{tabular}
    \caption{De-biased Dynamic Adjusted Surrogate}
    \end{subfigure}
    \caption{Semi-Synthetic data, lasso models. Coverage levels averaged across $n_{exp}=100$ experiments and across the $k=3$ treatments. $n$ is number of samples and $n_t$ number of periods of long-term outcome. Target coverage level is $99\%$.}
    \label{fig:semisynth_lasso_cov}
\end{figure}

\section{MSE Results on Further Synthetic and Semi-Synthetic Data}

We present results on the $\ell_2$ error of the recovered coefficients, for synthetic and semi-synthetic data. The algorithms that whose performance we present is as follows:
\begin{enumerate}
    \item total: estimating the effect with hypothetical access to the long-term outcome and no dynamic adjustment
    \item total: estimating the effect with hypothetical access to the long-term outcome and no dynamic adjustment, but first projecting to the surrogates and then estimating the effect on the surrogate index.
    \item adj. total: estimating the effect with hypothetical access to the long-term outcome, applying dynamic adjustment
    \item adj. surrogate: estimating the effect with hypothetical access to the long-term outcome, applying dynamic adjustment, after first projected the adjusted outcome on the surrogates and then estimating the effect on the dynamically adjusted surrogate index
    \item new treat: our dynamically adjusted surrogate index algorithm that uses a separate long-term dataset to estimate the surrogate index and then applying it to the short term dataset.
    \item deb new treat: the debiased (fully orthogonal) version of our dynamically adjusted surrogate index algorithm, based on orthogonal score.
\end{enumerate}

\newpage

\subsection{Synthetic Data}

\begin{figure}[H]
    \centering
    \includegraphics[width=1\textwidth]{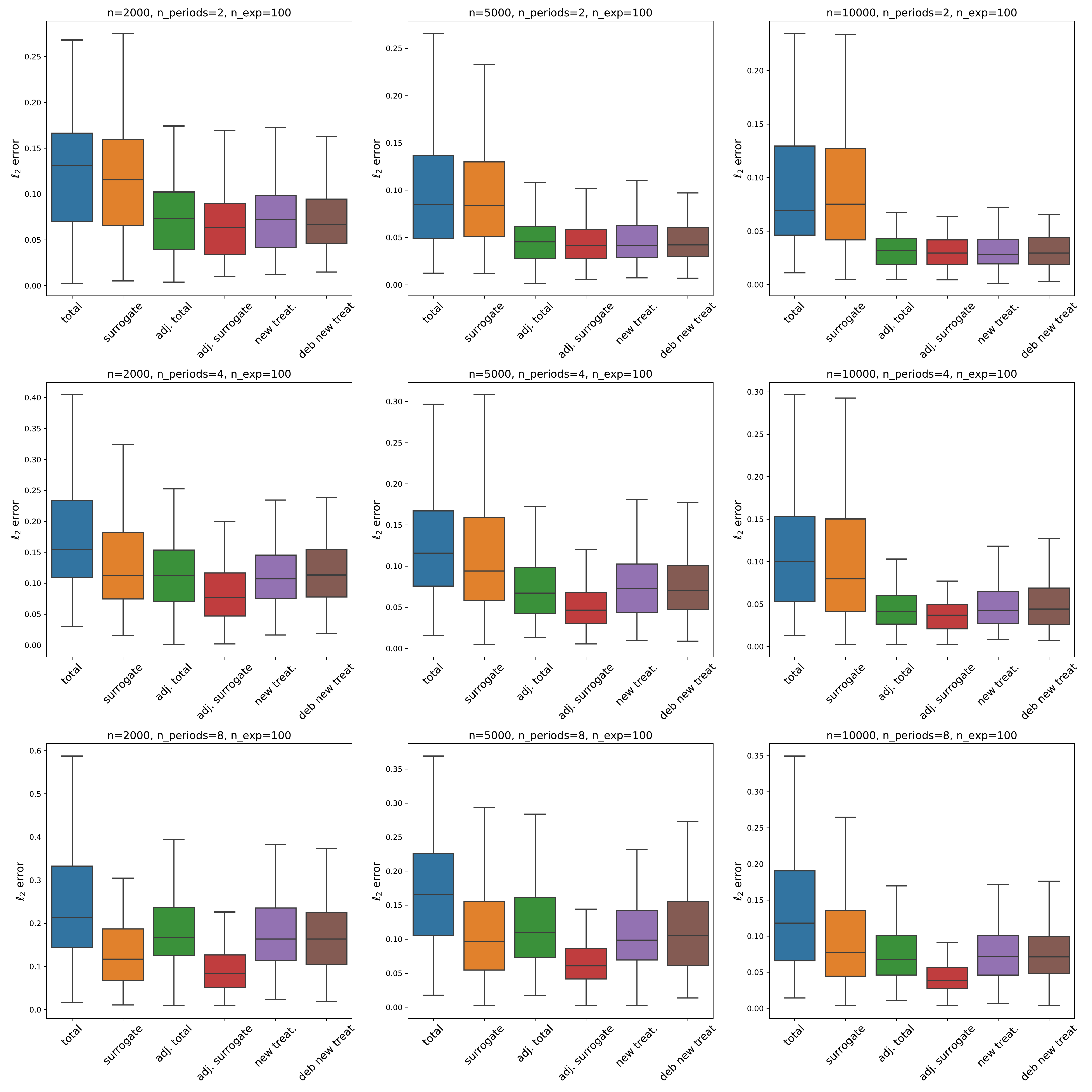}
    \caption{Synthetic data, high-dimensional, lasso models. $n$ is number of samples and $n_{periods}$ number of periods of long-term outcome. $\ell_2$ error of recovered coefficients.}
    \label{fig:synth_lasso_mse}
\end{figure}

\newpage

\subsection{Semi-Synthetic Data}

\begin{figure}[H]
    \centering
    \begin{subfigure}{1\textwidth}
    \includegraphics[width=1\textwidth]{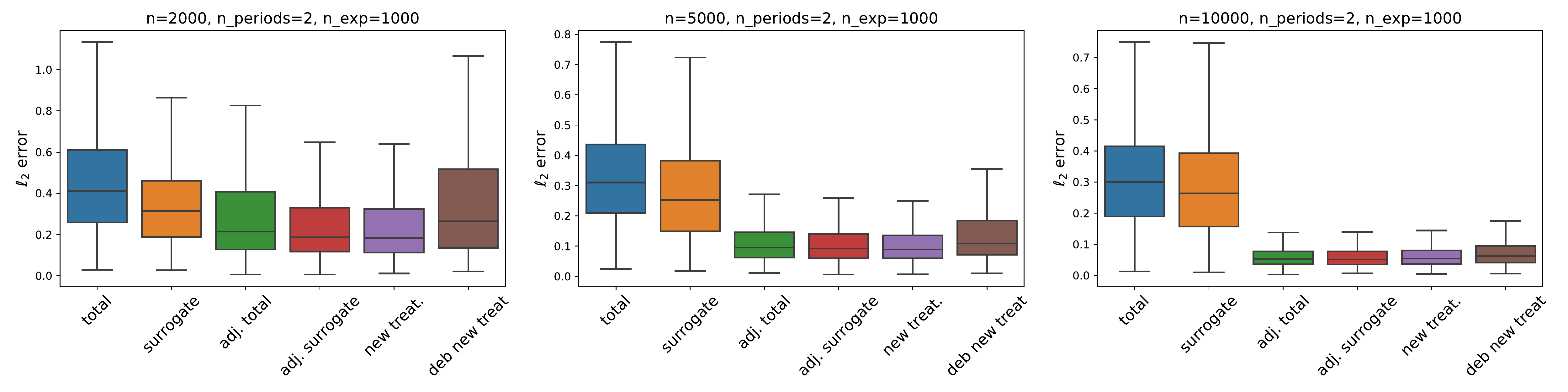}
    \end{subfigure}
    \begin{subfigure}{1\textwidth}
    \includegraphics[width=1\textwidth]{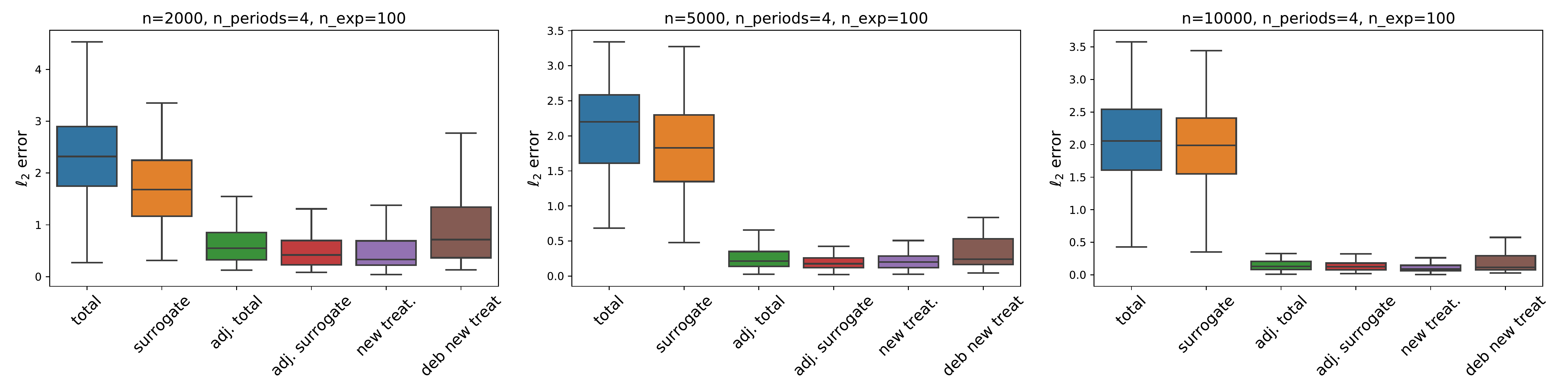}
    \end{subfigure}
    \begin{subfigure}{1\textwidth}
    \includegraphics[width=1\textwidth]{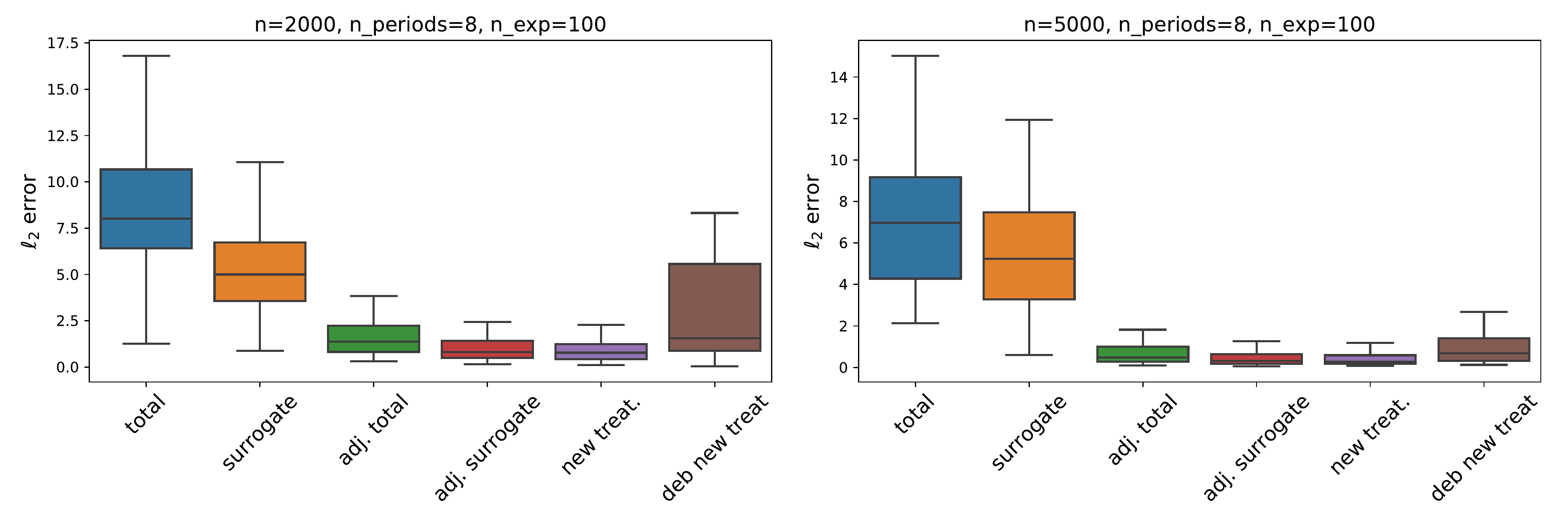}
    \end{subfigure}
    \caption{Semi-Synthetic data, high-dimensional, lasso models. $n$ is number of samples and $n_{periods}$ number of periods of long-term outcome. $\ell_2$ error of recovered coefficients.}
    \label{fig:semisynth_lasso_mse}
\end{figure}

\end{document}